\newif\iflong
\newif\ifshort
\def\NAT@spacechar{~}%
\definecolor{mygrey}{rgb}{0.9,0.9,0.9}
\definecolor{darkgreen}{RGB}{0,100,0}
\theoremstyle{plain}
\newtheorem{theorem}{Theorem}
\newtheorem{corollary}{Corollary}
\newtheorem{lemma}{Lemma}
\newtheorem{observation}{Observation}
\theoremstyle{definition}\newtheorem{definition}{Definition}\newtheorem{constr}{Construction}\else{}
\spnewtheorem{constr}{Construction}{\bfseries}{\normalfont}\fi{}
\crefname{constr}{Construction}{Constructions}
\crefname{step}{Step}{Steps}
\crefname{observation}{Observation}{Observations}
\crefname{proposition}{Proposition}{Propositions}
\crefname{remark}{Remark}{Remarks}
\crefname{prop}{Property}{Properties}
\newcommand{\Oh}{O}
\newcommand{\N}{\mathbb{N}}
\newcommand{\I}{\mathcal{I}}
\newcommand{\calP}{\mathcal{P}}
\newcommand{\calF}{\mathcal{F}}
\newcommand{\calW}{\mathcal{W}}
\newcommand{\W}[1]{\ensuremath{\operatorname{W}[#1]}\xspace}
\newcommand{\NP}{\ensuremath{\operatorname{NP}}}
\DeclareMathOperator{\dist}{dist}
\DeclareMathOperator{\outdeg}{outdeg}
\DeclareMathOperator{\indeg}{indeg}
\newcommand{\decprob}[3]{
  \begin{center}
    \begin{minipage}{0.95\textwidth}
	    \noindent
	    \textsc{#1}\\
	    \setlength{\tabcolsep}{3pt}
	    \begin{tabularx}{\textwidth}{@{}lX@{}}
			    \normalsize \textbf{Input:} 		& \normalsize #2 \\
			    \normalsize \textbf{Question:} 	& \normalsize #3
		    \end{tabularx}
    \end{minipage}
  \end{center}
}
\newcommand{\mtse}{Routing with Collision Avoidance\xspace}
\newcommand{\mtseAcr}{\textsc{RCA}\xspace}
\newcommand{\smtse}{Fast~\mtse\xspace}
\newcommand{\smtseAcr}{\textsc{FRCA}\xspace}
\newcommand{\pmtseAcr}{\textsc{Path-\mtseAcr}\xspace}
\newcommand{\psmtseAcr}{\textsc{Path-\smtseAcr}\xspace}
\newcommand{\tmtseAcr}{\textsc{Trail-\mtseAcr}\xspace}
\newcommand{\tsmtseAcr}{\textsc{Trail-\smtseAcr}\xspace}
\newcommand{\wmtseAcr}{\textsc{Walk-\mtseAcr}\xspace}
\newcommand{\wsmtseAcr}{\textsc{Walk-\smtseAcr}\xspace}
\newcommand{\LD}{\smallskip\noindent($\Leftarrow$)}
\newcommand{\RD}{\smallskip\noindent($\Rightarrow$)}
\newcommand{\hcycle}{Hamiltonian cycle}
\newcommand{\appsymb}{$\star$}
\newcommand{\appref}[1]{{\hyperref[proof:#1]{\appsymb}}}
\newcommand{\toappendix}[1]{%
  \iflong{}#1
  \else{}\gappto{\appendixProofText}{
    {#1}
   }\fi{}
}
\newcommand{\appendixproof}[2]{%
  \iflong{}#2
  \else{}\gappto{\appendixProofText}{
    \subsection{Proof of \cref{#1}}\label{proof:#1}#2
    }\fi{}
}
\newcommand{\appendixsection}[1]{%
  \ifshort{}\gappto{\appendixProofText}{
      \section{Additional Material for~\cref{#1}}
      \label{appsec:#1}
    
  }\fi{}
}
\newcommand{\varn}{\eta}
\newcommand{\Deltad}{\Delta_{\rm i/o}}
\newcommand{\share}{share\xspace}
\newcommand{\shared}{shared\xspace}
\newcommand{\sharing}{sharing\xspace}
\newcommand{\route}{route\xspace}
\newcommand{\routes}{routes\xspace}
\title{\thetitle{}\footnote{Major parts of this work done while all authors were with TU~Berlin.}%
}
\newcommand{\fnsep}{$^,$}
\author[1]{Till~Fluschnik\thanks{Supported by the DFG, project DAMM (NI~369/13-2).}\fnsep}
\author[1]{Marco~Morik}
\author[1,2]{Manuel~Sorge\thanks{Supported by the DFG, project DAPA (NI~369/12-2), the People Programme (Marie Curie Actions) of the European Union's Seventh Framework Programme (FP7/2007-2013) under REA grant agreement number 631163.11 and by the Israel Science Foundation (grant no. 551145/14).}\fnsep}
\affil[1]{\small{Institut f\"ur Softwaretechnik und Theoretische Informatik, TU~Berlin, Germany, \texttt{till.fluschnik@tu-berlin.de},\texttt{marco.morik@campus.tu-berlin.de}}}
\affil[2]{\small{Ben Gurion University of the Negev, Be'er Sheva, Israel, \texttt{sorge@post.bgu.ac.il}}}
\date{\vspace{-1cm}}
\newcommand{\thetitle}{The Complexity of Routing with Few Collisions}
\newcommand{\thertitle}{The Complexity of Routing with Few Collisions}
\begin{document}
\ifshort{}
\title{\thetitle{}\thanks{The paper is eligible for the best student paper award. \iflong{}Major parts of this work done while all authors were with TU~Berlin.\fi}}
\titlerunning{\thertitle{}}
\author{Till Fluschnik\inst{1}\thanks{Supported by the DFG, project DAMM (NI~369/13-2).} \and Marco Morik\inst{1} \and Manuel Sorge\inst{1,2}\thanks{Supported by the DFG, project DAPA (NI~369/12-2), the People Programme (Marie Curie Actions) of the European Union's Seventh Framework Programme (FP7/2007-2013) under REA grant agreement number 631163.11 and by the Israel Science Foundation (grant no. 551145/14).}}
\authorrunning{T.~Fluschnik, M.~Morik, and M.~Sorge }
\institute{Institut f\"ur Softwaretechnik und Theoretische Informatik, TU~Berlin, Germany, \email{till.fluschnik@tu-berlin.de}, \email{marco.t.morik@campus.tu-berlin.de}\and
Ben Gurion University of the Negev, Be'er Sheva, Israel, \email{sorge@post.bgu.ac.il}}
\fi{}
\maketitle

\begin{abstract}
\looseness=-1 We study the computational complexity of routing multiple objects through a network in such a way that only few collisions occur:
Given a graph~$G$ with two distinct terminal vertices and two positive integers~$p$ and~$k$, the question is whether one can connect the terminals by at least~$p$ \routes (e.g. paths) such that at most~$k$ edges are time-wise \shared among them. %
We study three types of \routes: traverse each vertex at most once (paths), each edge at most once (trails), or no such restrictions (walks).
We prove that for paths and trails the problem is \NP-complete on undirected and directed graphs even if~$k$ is constant or the maximum vertex degree in the input graph is constant.
For walks, however, it is solvable in polynomial time on undirected graphs for arbitrary~$k$ and on directed graphs if~$k$ is constant.
We additionally study for all \route types a variant of the problem where the maximum length of a \route is restricted by some given upper bound.
We prove that this length-restricted variant has the same complexity classification with respect to paths and trails, but for walks it becomes \NP-complete on undirected graphs.
\end{abstract}

\section{Introduction}

\looseness=-1 We study the computational complexity of determining bottlenecks in
networks. Consider a network in which each link has a \iflong{}certain\fi{}
capacity. We want to send a set of objects from point~$s$ to point~$t$
in this network, each object moving at a
constant rate of one link per time step. We want to determine whether
it is possible to send our (predefined number of) objects without congestion and, if not, which links in the
network we have to replace by larger-capacity links to make it possible. 

Apart from determining bottlenecks, the above-described task arises
when securely routing very important persons~\cite{OmranSZ13}, or
packages in a network~\cite{AssadiENYZ14}%
, routing
container transporting vehicles~\cite{RT15}, and generally may give
useful insights into the structure and robustness of a network. %
A further motivation is congestion avoidance in routing fleets of
vehicles, a problem treated by recent commercial software
products (e.g. {\url{http://nunav.net/}}) and poised to become more
important as passenger cars and freight cars become more and more
connected. Assume that we have many requests on computing a route for
a set of vehicles from a source location to a target location, as it
happens in daily commuting traffic. Then the idea is to centrally
compute these routes, taking into account the positions in space and
time of all other vehicles. To avoid congestion, we try to avoid that
on two of the routes the same street appears at the same
time.

\iflong{}
A first approximation to determine such bottlenecks would be to compute
the set of minimum cuts between~$s$ and~$t$. However, by daisy
chaining our objects, we may avoid such ``bottlenecks'' and, hence,
save on costs for improving the capacity of our links. Apart from the
(static) routes we have to take into account the traversals in time
that our objects take.%
\fi{}

Formally, we are given an undirected or directed graph with
marked source and sink vertex. We ask whether we can construct \routes
between the source and the sink in such a way that these \routes
\share as few edges as possible. %
By \routes herein we mean either paths, trails, or walks, modeling
different restrictions on the \routes: A \emph{walk} is a sequence of vertices
such that for each consecutive pair of vertices in the sequence there
is an edge in the graph.  A \emph{trail} is a walk where each edge of the
graph appears at most once.  A \emph{path} is a trail that contains each
vertex at most once.  We say that an edge is %
\emph{\shared} by two
\routes, if the edge appears at the same position in the sequence
of the two \routes.  The sequence of a \route can be interpreted
as the description of where the object taking this \route is at which time.
So we arrive at the following core problem:
\decprob{\mtse (\mtseAcr{})}
{\iflong{}A graph~$G=(V,E)$, two distinct vertices~$s,t\in V$, and two integers~$p\geq 1$ and~$k\geq 0$.
\else A graph~$G=(V,E)$, two vertices~$s,t\in V$, and~$p, k \in \mathbb{N}$.\fi{}}
  {Are there $p$ $s$-$t$~\routes that \share at most $k$~edges?}

\looseness=-1  This definition is inspired by the \textsc{Minimum Shared Edges} (MSE)
  problem~\cite{FluschnikKNS15,OmranSZ13,YeLLZ13}, in which an edge is
  already shared if it occurs in two \routes, regardless of the
  time of traversal. Finally, note that finding
  \routes from~$s$ to $t$ also models the general case of finding
  \routes between a set of sources and a set of sinks.

\looseness=-1 Considering our introductory motivating scenarios, it is reasonable to restrict the maximal length of the \routes.
For instance, when routing vehicles in daily commuting traffic while avoiding congestion, the routes should be reasonably short.
Motivated by this, we study the following variant of \mtseAcr{}.
\decprob{\smtse (\smtseAcr)}
{\iflong{}A graph~$G=(V,E)$, two distinct vertices~$s,t\in V$, and three integers $p,\alpha\geq 1$ and~$k\geq 0$.
  \else{}A graph~$G=(V,E)$, two vertices~$s,t\in V$, and $p,\alpha\in\mathbb{N}$.
\fi}
  {Are there $p$ $s$-$t$~\routes each of length at most~$\alpha$ that \share at most $k$~edges?}
\looseness=-1 In the problem variants \pmtseAcr{}, \tmtseAcr{}, and \wmtseAcr{},  the \routes are restricted to be paths, trails, or walks, respectively (analogously for \smtseAcr{}).
\paragraph{Our Contributions.}

\newcommand{\ei}{each}
\newcommand{\smtab}[1]{\scriptsize#1}
\renewcommand{\arraystretch}{1.2}
\begin{table}[t]
  \iflong{}
    \setlength{\tabcolsep}{1.5pt}
  \else{}
    \setlength{\tabcolsep}{1.5pt}
  \fi{}
  \centering
  \caption{Overview of our results: DAGs abbreviates directed acyclic graphs; NP-c., W[2]-h., P abbreviate NP-complete, W[2]-hard, and containment in the class~$P$, respectively; $\Delta$ denotes the maximum degree; $\Delta_{\rm i/o}$ denotes the maximum over the in- and outdegrees. $^a$~(Thm.~\ref{thm!dagsxp}) $^b$~(Thm.~\ref{thm!dagshard}) $^c$~(Cor.~\ref{cor:smtsedagshard}) $^d$~(even on planar graphs)}
  \begin{tabular}{@{}llccllllll@{}}  \toprule
		&& \multicolumn{2}{c}{Undirected, with~$k$} & &  \multicolumn{2}{c}{Directed, with~$k$} & & \multicolumn{2}{c}{DAGs, with~$k$}  \\\cmidrule{3-4} \cmidrule{6-7} \cmidrule{9-10}
		&& \multicolumn{1}{c}{constant} & \multicolumn{1}{c}{arbitrary}		 &	& \multicolumn{1}{c}{constant} & \multicolumn{1}{l}{arbitrary} 		& & \multicolumn{1}{c}{const.} & \multicolumn{1}{c}{arbitrary} \\ \midrule
   \textsc{Path-(F)RCA} 	&&	\multicolumn{2}{c}{NP-c.$^d$, Thm.~\ref{thm!pmtsehard} (Cor.~\ref{cor:psmtsehard})} &  & \multicolumn{2}{c}{NP-c.$^d$, Thm.~\ref{thm!pmtsehard} (Cor.~\ref{cor:psmtsehard})} & & P$^a$ &  NP-c., \\\vspace{2.5pt} 
    		&& 	\smtab{\ei{}~$k\geq 0$}  & \smtab{\ei{}~$\Delta\geq 4$} & & \smtab{\ei{}~$k\geq 0$} & \smtab{\ei{}~$\Delta_{\rm i/o}\geq 4$}  & & &   \W{2}-h$^{b/c}$ \\
  \textsc{Trail-(F)RCA} 	&& \multicolumn{2}{c}{NP-c.$^d$, Thm.~\ref{thm!tmtseundirhard} (Cor.~\ref{cor:tsmtsehardundir})} & & \multicolumn{2}{c}{NP-c.$^d$, Thm.~\ref{thm!tmtsedirhard} (Cor.~\ref{cor:tsmtseharddir})} & & P$^a$ & NP-c., \\\vspace{2.5pt} 
        	&& \smtab{\ei{}~$k\geq 0$} & \smtab{\ei{}~$\Delta\geq 5$} & & \smtab{\ei{}~$k\geq 0$} & \smtab{\ei{}~$\Delta_{\rm i/o}\geq 3$} & & &  \W{2}-h$^{b/c}$\\

  \wmtseAcr{} 	&& \multicolumn{2}{c}{P (Thm.~\ref{thm!wmtseundirptime})} & & P (Thm.~\ref{thm!wmtsedirxp}) & \NP-c., & & P$^a$ & NP-c., \\\vspace{2.5pt} 
		&& & & & &  \W{2}-h.$^b$ & & &  \W{2}-h.$^b$ \\
  \wsmtseAcr{}	&& {open} & \NP-c., \W{2}-h.& & P (Thm.~\ref{thm!wmtsedirxp}) & \NP-c., & & P$^a$ & NP-c., \\
		&& & (Thm.~\ref{thm:wsmtseundirhard})  & & & \W{2}-h.$^c$  & & & \W{2}-h.$^c$ \\

  \bottomrule
  \end{tabular}

  \label{results-table}
\end{table}
We give a full computational complexity classification \ifshort{}(see \Cref{results-table})\fi{} of \mtseAcr and \smtseAcr (except \wsmtseAcr) with respect to the three mentioned \route types; with respect to undirected, directed, and directed acyclic input graphs; and distinguishing between constant and arbitrary budget.
\iflong{}\Cref{results-table} summarizes our results.\fi

To our surprise, there is no difference between paths and trails in our %
classification.
Both \pmtseAcr (\cref{sec:pmtse}) and \tmtseAcr (\cref{sec:tmtse}) are \NP-complete in all of our cases except on directed acyclic graphs when~$k\geq0$ is constant (\cref{sec:all}).
We show that the problems remain \NP-complete on undirected and directed graphs even if~$k\geq 0$ is constant or the maximum degree is constant.
We note that the \textsc{Minimum Shared Edges} problem is solvable in polynomial time when the number of shared edges is constant, 
highlighting the difference to its time-variant \pmtseAcr{}.

The computational complexity of the length-restricted variant~\smtseAcr for paths and trails equals the one of the variant without length restrictions. The variant concerning walks (\cref{sec:wmtse}) however differs from the other two variants as it is tractable in more cases, in particular on undirected graphs.
(We note that almost all of our tractability results rely on flow computations in time-expanded networks (see, e.g., Skutella~\cite{Skutella2009}).)
 Remarkably, the tractability does not transfer to the length-restricted variant~\wsmtseAcr{}, as it becomes \NP-complete on undirected graphs.
This is the only case where \mtseAcr and \smtseAcr differ with respect to their computational complexity.

\paragraph{Related Work.}

\looseness=-1 As mentioned, \textsc{Minimum Shared Edges} inspired the
definition of \mtseAcr. MSE is NP-hard on directed~\cite{OmranSZ13}
and undirected~\cite{Flu15,FluschnikKNS15} %
graphs. In contrast to \mtseAcr, if the
number of shared edges equals zero, then MSE is solvable in polynomial
time.  Moreover, MSE is \W{2}-hard with respect to the number of shared
edges and fixed-parameter tractable with respect to the number of
paths~\cite{FluschnikKNS15}.  MSE is polynomial-time solvable on
graphs of bounded treewidth~\cite{YeLLZ13,AokiHHIKZ14}.

\looseness=-1 There are various tractability and hardness results for problems related to \mtseAcr\ with $k = 0$ in temporal graphs, in which
edges are only available at predefined time steps~\cite{Berman96,KempeKK02,Michail16,MertziosMCS13}. The goal herein is
to find a number of edge or vertex-disjoint time-respecting paths connecting two
fixed terminal vertices. Time-respecting means that the time
steps of the edges in the paths are nondecreasing. Apart from the fact that all graphs that we study are 
static, the crucial difference is in the type of routes: vehicles moving along time-respecting paths may wait an arbitrary number of
time steps at each vertex, while we require them to move at least one
edge per time step (unless they already arrived at the target vertex).

Our work is related to flows over time, a concept already introduced
by Ford and Fulkerson~\cite{FF62} to measure the maximal throughput in
a network over a fixed time period. This and similar problems were
studied continually, see Skutella~\cite{Skutella2009} and Köhler et
al.~\cite{KohlerMS09} for surveys. In contrast, our throughput is
fixed, our flow may not stand still or go in circles arbitrarily, and
we want to augment the network to allow for our throughput.

\section{Preliminaries}
\label{sec:prelims}
\appendixsection{sec:prelims}

\ifshort{}
  We use basic notation from parameterized complexity~\cite{DowneyF99}.
\fi{}

We define~$[n]:=\{1,\ldots,n\}$ for every~$n\in\N$.
Let $G=(V,E)$ be an undirected (directed) graph.
Let the sequence $P=(v_1,\ldots,v_\ell)$ of vertices in~$G$ be a walk, trail, or path.
We call $v_1$ and $v_\ell$ the start and end of $P$.
For $i\in[\ell]$, we denote by~$P[i]$ the vertex~$v_i$ at position~$i$ in~$P$.
Moreover, for $i,j\in [\ell]$, $i<j$, we denote by $P[i,j]$ the subsequence $(v_i,\ldots,v_j)$ of~$P$.
By definition, $P$ has an alternative representation as sequence of edges (arcs) $P=(e_1,\ldots,e_{\ell-1})$ with $e_i:=\{v_i,v_{i+1}\}$ ($e_i:=(v_i,v_{i+1})$) for $i\in [\ell-1]$.
Along this representation, we say that $P$ \emph{contains/uses edge (arc)~$e$ at time step~$i$} if edge (arc)~$e$ appears at the $i$th position in~$P$ represented as sequence of edges (arcs) (analogue for vertices).
We call an edge/arc \emph{\shared} if two routes uses the edge/arc at the same time step. 
We say that a walk/trail/path $Q$ is an $s$-$t$~walk/trail/path, if $s$ is the start and $t$ is the end of~$Q$.
The \emph{length} of a walk/trail/path is the number of edges (arcs) contained, where we also count multiple occurrences of an edge (arc) (we refer to a path of length~$m$ as an \emph{$m$-chain}).
(We define the maximum over in- and outdegrees in~$G$ by~$\Deltad(G):=\max_{v\in V(G)}\{\outdeg(v),\indeg(v)\}$.)

\iflong{}
A parameterized problem~$P$ is a set of tuples $(x,\ell)\in \Sigma^*\times \N$, where~$\Sigma$ denotes a finite alphabet.
A parameterized problem~$P$ is \emph{fixed-parameter tractable} if it admits an algorithm that decides every input~$(x,\ell)$ in~$f(\ell)\cdot |x|^{\Oh(1)}$ time (FPT-time), where~$f$ is a computable function. 
The class FPT is the class of fixed-parameter tractable problems.
The classes \W{q}, $q\geq 1$, contain parameterized problems that are presumably not fixed-parameter tractable.
For two parameterized problems~$P$ and~$P'$, a parameterized reduction from~$P$ to~$P'$ is an algorithm that maps each input $(x,
\ell)$ to~$(x',\ell')$ in FPT-time such that~$(x,\ell)\in P$ if and only if~$(x',\ell')\in P'$, and~$\ell'\leq g(\ell)$ for some function~$g$.  
A parameterized problem~$P$ is \W{q}-hard if for every problem contained in \W{q} there is a parameterized reduction to~$P$.   
\fi{}

\ifshort{}
\looseness=-1  We state some preliminary observations on \mtseAcr{} and \smtseAcr{}.
  If the terminals~$s$ and~$t$ have distance at most~$k$, 
  then routing any number of paths along the shortest path between them introduces at most~$k$ \shared edges.
  Moreover, one can show that \mtseAcr{} and \smtseAcr{} are contained in~\NP.
  Due to space constraints we defer the details of these and other results (marked by \appsymb) to the appendix.%
\fi{}
\toappendix
{
  \paragraph{Preliminary Observations on RCA and FRCA.}
  We state some preliminary observations on \mtseAcr{} and \smtseAcr{}.
  If there is a shortest path between the terminals~$s$ and~$t$ of length at most~$k$, then routing any number of paths along the shortest path introduces at most~$k$ \shared edges.
  Hence, we obtain the following.

  \begin{observation}
  Let $(G,s,t,p,k)$ be an instance of \mtseAcr with $\dist(s,t)<\infty$.
  If $k\geq \dist_G(s,t)$, then $(G,s,t,p,k)$ is a yes-instance.
  \end{observation}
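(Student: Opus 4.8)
The plan is to exhibit explicit routes witnessing that the instance is a yes-instance, using a single shortest path as a common template for all $p$ routes. First I would fix a shortest $s$-$t$ path $P=(v_1,\ldots,v_{\ell+1})$ in $G$, which exists since $\dist_G(s,t)<\infty$; here $\ell=\dist_G(s,t)\leq k$ is its length, and its edges $e_i=\{v_i,v_{i+1}\}$ for $i\in[\ell]$ are pairwise distinct because $P$ visits each vertex at most once. As $P$ is a path, it is simultaneously a valid path, trail, and walk, so the construction works uniformly for every \route type under consideration.

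Next I would take all $p$ routes to be identical copies of $P$; each is a valid $s$-$t$ \route, and the only remaining task is to bound the number of \shared edges. By definition an edge is \shared only if two of the routes use it at the same time step. Since the routes are identical, each edge $e_i$ is used at time step $i$ by every route, and no edge outside $P$ is ever used; hence at most the $\ell$ pairwise-distinct edges of $P$ can be \shared.

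Finally I would conclude that the number of \shared edges is at most $\ell=\dist_G(s,t)\leq k$, so the $p$ copies of $P$ form a valid solution and $(G,s,t,p,k)$ is a yes-instance. I anticipate no genuine obstacle, as this is an immediate consequence of the definitions; the only subtlety is to read the definition of \shared correctly --- an edge is counted per time step, not merely because it appears in two routes --- and identical routes traverse every edge at matching time steps, which keeps the count at $\ell$. The same argument transfers verbatim to directed graphs, with arcs in place of edges.
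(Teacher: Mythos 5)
Your proof is correct and matches the paper's argument exactly: route all $p$ \routes along a single shortest $s$-$t$ path, so the \shared edges are precisely the at most $k$ edges of that path. The paper states this in one sentence; you have merely spelled out the same idea, including the correct per-time-step reading of \emph{\shared}, which is the only point of care.
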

  If we consider walks, the length of an $s$-$t$~walk in a graph can be arbitrarily large.
  We prove, however, that for paths, trails, and walks, \mtseAcr{} and \smtseAcr{} are contained in~\NP, that is, each variant allows for a certificate of size polynomial in the input size that can be verified in time polynomial in the input size.

  \begin{lemma}%
    \label{lem:continNP}
  \mtseAcr{} and \smtseAcr{} on undirected and on directed graphs are contained in~\NP.
  \end{lemma}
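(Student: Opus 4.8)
The plan is to produce, for every \yes-instance, a polynomially sized description of~$p$ \routes together with a polynomial-time verification procedure. Verification is routine: given a collection of \routes (as vertex sequences) we check for each that it starts in~$s$, ends in~$t$, follows edges/arcs of~$G$, and meets the type constraint (a path visits each vertex at most once, a trail each edge at most once, a walk has no constraint; for \smtseAcr{} we additionally test that its length is at most~$\alpha$). We then determine the \shared edges by scanning, for every time step~$i$, the edge/arc used by each \route at position~$i$ and marking an edge/arc \shared as soon as two \routes agree on it, and finally we check that at most~$k$ edges/arcs are marked. All of this is polynomial in the total length of the listed \routes, so the task reduces to bounding the size of a smallest certificate.

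For paths and trails this is immediate: a path has length at most~$n-1$ and a trail at most~$m$, and in the length-restricted variant every \route has length at most~$\alpha$, so each \route is described in polynomial space. The number of \emph{distinct} \routes one needs is polynomial as well: after merging identical \routes into one \route with a multiplicity (stored in binary and bounded by~$p$), any \route all of whose positions are \shared can be rerouted onto a single fixed \shared $s$-$t$~\route, while every remaining \route occupies at least one non-\shared position~$(e,i)$; since a path uses positions only at times~$1,\dots,n-1$ and a trail only at times~$1,\dots,m$, there are at most~$m(n-1)$, respectively~$m^2$, such positions, bounding the number of such \routes. Hence for \pmtseAcr{} and \tmtseAcr{} (and their length-restricted versions) the multiset of \routes has a polynomial-size encoding, and we are done.

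\looseness=-1 The real obstacle is \wmtseAcr{}, where an $s$-$t$~walk may be arbitrarily long. Here I would pass to the time-expanded network~$H_T$ on vertices~$(v,\tau)$, $\tau\in\{0,\dots,T\}$, with an arc from~$(u,\tau)$ to~$(v,\tau+1)$ for each edge/arc of~$G$ and a holding arc at~$t$: an $s$-$t$~walk of length~$\ell$ is exactly a directed $(s,0)$-$(t,\ell)$~path in~$H_T$, and routing~$p$ walks with at most~$k$ \shared edges is an integral flow of value~$p$ in~$H_T$ in which every arc whose underlying edge is not in some designated set~$S\subseteq E$, $|S|\le k$, carries at most one unit. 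As~$H_T$ is acyclic, flow decomposition yields walks of length at most~$T$, so everything reduces to bounding the horizon~$T$ and, when~$p$ is large, encoding the flow compactly. This is the delicate point: shortening one walk by excising a redundant sub-loop shifts the time stamps of all its later edges and may create fresh collisions, so the walks are coupled through the common time axis and a naive exchange argument fails. I would exploit the lockstep structure---all walks leave~$s$ at time~$0$ and move one edge per step---which forces every non-\shared edge to carry at most one walk per time step; consequently at most~$m$ walks are in transit on non-\shared edges at any time, and routing substantially more than a polynomial number of walks requires a \shared $s$-$t$ backbone through which the surplus is bundled (as in the preliminary observation), encoded by one backbone \route with a binary multiplicity. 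Making the residual loop-removal rigorous---so that excising a sub-loop does not displace later edges into new collisions---is the crux, and I expect to settle it with the flows-over-time machinery in the time-expanded network (cf.~Skutella~\cite{Skutella2009}). For \wsmtseAcr{} the horizon is capped by~$\alpha$ from the outset, so the walks are short and can be listed directly.
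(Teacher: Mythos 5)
Your verification procedure and your treatment of paths and trails are sound, and in one respect go beyond the paper: the paper's proof simply takes the list of $p$ \routes as the certificate (each path having length at most $n-1$, each trail at most $m$, each \route at most $\alpha$ in the length-restricted variants), implicitly treating $p$ as small, whereas your duplicate-merging argument with binary multiplicities and the injection from non-fully-\shared \routes to unoccupied positions $(e,i)$ also covers a binary-encoded~$p$. That part is fine.

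The genuine gap is \wmtseAcr{} on directed graphs, and you flag it yourself: you never prove the polynomial bound on the time horizon, deferring the "residual loop-removal" to machinery you hope will work. But this bound is exactly where the paper does its real work, in \cref{lem:walkmtseShortSol}: every yes-instance admits a solution whose longest walk has length at most $p\cdot d_t\leq p\cdot n$. The paper's proof is an exchange argument that sidesteps the timestamp-shifting problem you correctly identified: take a solution minimizing the \emph{total} length of the walks; if the longest walk $P^*$ exceeded $p\cdot d_t$, the pigeonhole principle yields a length window $((i-1)\cdot d_t,\, i\cdot d_t]$ containing no walk; truncate $P^*$ at position $(i-1)\cdot d_t+1$ and append a shortest path~$S$ to~$t$, so the shortcut $P'$ lands in the empty window. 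Any new collisions lie on~$S$; taking the collision arc $(x,y)$ with $\dist_S(y,t)$ minimum and a walk $P$ colliding there, one reroutes $P$ as $P[1,j]\circ P'[j+1,|P'|]$, which is strictly shorter than~$P$ because $|P|$ avoids the window and hence $|P|>i\cdot d_t\geq|P'|$ --- so either case strictly decreases the total length, a contradiction. Your time-expanded-flow plan is circular without such a bound: you must fix the horizon~$T$ before building $H_T$, and bounding~$T$ is precisely the unresolved crux; moreover, your claim that superpolynomially many walks force a \shared $s$-$t$ backbone is asserted, not proven, and plays no role in the paper's argument. Note also that the paper dispatches the undirected walk case differently and more cheaply: \wmtseAcr{} on undirected graphs is polynomial-time solvable outright (\cref{thm!wmtseundirptime}, by queueing walks via alternation on an edge at~$s$ before a shortest path), so NP-membership there needs no certificate at all.
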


  {
    \begin{proof}
    Given an instance $(G,s,t,p,k)$ of \pmtseAcr{} and a set of $p$ $s$-$t$~paths, we can check in polynomial time whether they \share at most~$k$ edges.
    The same holds for~\tmtseAcr{} and \wmtseAcr{} (the latter follows from~\cref{thm!wmtseundirptime}).
    This is still true for all variants on directed graphs (for walks we refer to~\cref{lem:walkmtseShortSol}).
    Moreover, we can additionally check in linear time whether the length of each path/trail/walk is at most some given~$\alpha\in\N$.
    Hence, the statement follows.
    \ifshort{}\qed\fi{}
    \end{proof}
  }
}

\section{Everything is Equal on DAGs}
\label{sec:all}
\appendixsection{sec:all}

Note that on directed acyclic graphs, every walk contains each edge and each vertex at most once.
Hence, every walk is a path in DAGs, implying that all three types of \routes are equivalent in DAGs.

We prove that \mtseAcr{} is solvable in polynomial time if the number~$k$ of \shared arcs is constant, but \NP-complete if~$k$ is part of the input.
Moreover, we prove that the same holds for the length-restricted variant~\smtseAcr{}.
We start the section with the case of constant~$k\geq0$.

\subsection{Constant Number of Shared Arcs}

\begin{theorem}
 \label{thm!dagsxp}
 \mtseAcr{} and \smtseAcr{} on $n$-vertex $m$-arc DAGs are solvable in $\Oh(m^{k+1} \cdot n^3)$~time and $\Oh(m^{k+1} \cdot \alpha^2\cdot n)$~time, respectively.
\end{theorem}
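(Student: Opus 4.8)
The plan is to reduce both problems to maximum-flow computations in a time-expanded network, after guessing which arcs are allowed to be \shared. I would first use the observation (already noted above) that on a DAG every walk is a path, so all three \route types coincide and it suffices to reason about $s$-$t$ paths. Since no $s$-$t$ path uses an incoming arc of~$s$ or an outgoing arc of~$t$, I may delete the outgoing arcs of~$t$ so that $t$ becomes a sink. I would prove the statement for the length-restricted problem \smtseAcr directly and recover \mtseAcr as the special case $\alpha=n-1$, since in a DAG every $s$-$t$ path has length at most $n-1$; this also explains why the two running times coincide after substituting $\alpha=n$.

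The central idea is guessing. For any candidate solution, let $F$ be its set of \shared arcs; then $|F|\le k$, and on every arc \emph{outside} $F$ no two \routes coincide at the same time step. I would therefore iterate over all $\Oh(m^k)$ subsets $F\subseteq E$ with $|F|\le k$. For a fixed~$F$, I construct the time-expanded network with vertices $v^{(i)}$ for $v\in V$ and $i\in\{1,\dots,\alpha+1\}$, and an arc $u^{(i)}\to v^{(i+1)}$ for every original arc $(u,v)$ and every time~$i$ (no waiting arcs, reflecting that objects must advance one arc per step); I add a super-sink reached from every copy $t^{(i)}$. Each time copy of an arc not in~$F$ gets capacity~$1$, and each time copy of an arc in~$F$ gets capacity~$p$. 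The key claim is that a flow of value~$p$ from $s^{(1)}$ to the super-sink exists if and only if there are $p$ $s$-$t$ \routes of length at most~$\alpha$ whose \shared arcs all lie in~$F$.

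The correctness argument is the crux. In one direction, any $p$ such \routes map to time-expanded $s^{(1)}$-to-super-sink paths carrying $p$ units of flow, and the capacity-$1$ constraint on arcs outside~$F$ is exactly the requirement that those arcs are never \shared. In the other direction, because the time-expanded network is acyclic, an integral flow of value~$p$ decomposes into $p$ paths, each of which reads off as an $s$-$t$ \route advancing one arc per step (the super-sink arcs from the various $t^{(i)}$ account for \routes of differing lengths); capacity~$1$ outside~$F$ then guarantees that only arcs of~$F$ can be \shared, so at most $|F|\le k$ arcs are \shared. Taking the union over all choices of~$F$ shows the instance is a yes-instance iff some guess succeeds.

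For the running time, each time-expanded network has $\Oh(\alpha n)$ vertices and $\Oh(\alpha m)$ arcs, so one integral maximum flow (capped at~$p$) costs $\Oh(\alpha^2 m n)$ using a standard $\Oh(VE)$-time max-flow routine; multiplying by the $\Oh(m^k)$ guesses yields $\Oh(m^{k+1}\alpha^2 n)$ for \smtseAcr and, with $\alpha=n-1$, the bound $\Oh(m^{k+1} n^3)$ for \mtseAcr. The main obstacle I anticipate is the bookkeeping in the flow reduction: ensuring the time-expansion faithfully models the ``advance every step'' rule together with the varying \route lengths, and verifying that guessing the \shared set~$F$ is both sound and complete rather than over- or under-counting arcs that are \shared at some time steps but used uniquely at others.
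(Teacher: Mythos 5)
Your proposal is correct and follows essentially the same route as the paper: guess the set of at most $k$ \shared{} arcs and reduce to the $k=0$ case, which is solved by a maximum-flow computation in a time-expanded graph with horizon $n$ (for \mtseAcr) or $\alpha$ (for \smtseAcr). Your capacity-$p$ time copies of guessed arcs and your super-sink collecting all $t^{(i)}$ are just equivalent reformulations of the paper's $p$ parallel arc copies in $G(K,p)$ and its chain of $p$ parallel arcs $(t^{i-1},t^i)$, and your running-time accounting matches the stated bounds.
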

We prove~\cref{thm!dagsxp} as follows: We first show that \mtseAcr{} and \smtseAcr{} on DAGs are solvable in polynomial time if~$k = 0$ (\cref{thm!dagsptime} below).
We then show that an instance of \mtseAcr{} and \smtseAcr{} on directed graphs is equivalent to deciding, for all $k$-sized subsets~$K$ of arcs, the instance with $k=0$ and a modified input graph in which each arc in~$K$ has been copied $p$ times:

\begin{theorem}%
  \label{thm!dagsptime}
 If $k=0$, \mtseAcr on $n$-vertex $m$-arc DAGs is solvable in $\Oh(n^3\cdot m)$~time.
\end{theorem}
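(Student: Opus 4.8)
The plan is to reduce the problem to a single maximum-flow computation in a \emph{time-expanded network}. Since $G$ is a DAG, every $s$-$t$ walk is a path and thus has length at most $n-1$; in particular no \route ever waits, so a \route that uses an arc $(u,v)$ does so at a time step equal to the number of arcs it has already traversed. This motivates building the layered digraph $G_T$ on vertex set $\{(v,\tau) : v \in V,\ 0 \le \tau \le n-1\}$ together with a fresh super-sink $t^\ast$. For every arc $(u,v)$ of $G$ and every $\tau$ I would insert a \emph{copy arc} $((u,\tau),(v,\tau+1))$ of capacity $1$, and for every $\tau$ an arc $((t,\tau),t^\ast)$ of capacity $p$ (deleting in-arcs of $s$ and out-arcs of $t$ beforehand, so that no flow cycles through $s$ or passes through $t$ and continues). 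I then compute a maximum integral $(s,0)$-$t^\ast$ flow and answer \yes iff its value is at least $p$.

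Correctness hinges on the equivalence between a family of $p$ \routes \sharing no arc and an integral flow of value $p$ in $G_T$. In the forward direction a \route $(s,v_1,\dots,v_\ell=t)$ maps to the path $(s,0),(v_1,1),\dots,(t,\ell),t^\ast$; because it uses arc $(v_i,v_{i+1})$ exactly at time step $i+1$, two \routes \share an arc precisely when their images traverse a common copy arc, so $k=0$ is equivalent to the images being arc-disjoint on the capacity-$1$ arcs, giving a feasible flow of value $p$. Conversely, since $G_T$ is acyclic, an integral flow of value $p$ decomposes into $p$ paths from $(s,0)$ to $t^\ast$; projecting away the time coordinate and the super-sink turns each into an $s$-$t$ walk, hence an $s$-$t$ path in the DAG, and the capacity-$1$ constraint guarantees that no copy arc — equivalently no arc at a fixed time step — is used twice. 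The one genuinely delicate point is that the same arc of $G$ may legitimately be reused by several \routes at \emph{different} time steps: the time-expanded network is exactly what records this distinction, and it is what makes the problem strictly more permissive than routing plain arc-disjoint paths.

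For the running time I would first observe that every \route uses its first arc at time step $1$ and that arc leaves $s$; with $k=0$ no two \routes can start with the same arc, so the flow value never exceeds $\outdeg(s) < n$. The network $G_T$ has $\Oh(n^2)$ vertices and $\Oh(nm)$ arcs and is built in $\Oh(nm)$ time. Computing the flow by repeatedly augmenting along a breadth-first-search path costs $\Oh(nm)$ per augmentation, and at most $\min(p,\outdeg(s)) = \Oh(n)$ augmentations suffice before either reaching value $p$ or exhausting the residual network, yielding $\Oh(n^2 m)$ overall and comfortably meeting the claimed $\Oh(n^3 m)$ bound. The main obstacle is thus not the running time but the two-directional correspondence above, in particular arguing that flow decomposition produces genuinely distinct, pairwise non-\sharing \routes.
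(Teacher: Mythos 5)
Your proposal is correct and takes essentially the same approach as the paper: the paper's proof also builds an $n$-time-expanded graph of the DAG (Definition of $\tau$-time-expanded graph plus Lemma~\ref{lem:flows}) and decides the instance by a single max-flow computation, with the same two-directional correspondence between non-\sharing \routes and arc-disjoint flow paths that you spell out. The only differences are cosmetic or quantitative: you attach a super-sink $t^\ast$ via capacity-$p$ arcs where the paper instead inserts $p$ parallel arcs $(t^{i-1},t^i)$ between consecutive copies of~$t$, and your observation that the flow value is at most $\outdeg(s)<n$ lets a plain augmenting-path algorithm run in $\Oh(n^2 m)$, slightly sharper than the paper's $\Oh(n^3 m)$ obtained by invoking Orlin's $\Oh(NM)$ max-flow bound on the expanded network.
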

We need the notion of time-expanded graphs.
\iflong\begin{definition}
  \label{def:timeexpg}\fi
  Given a directed graph $G$, we denote a directed graph $H$ the (directed) \emph{$\tau$-time-expanded graph} of $G$ if 
  $V(H)=\{v^i\mid v\in V(G), i = 0, \ldots, \tau\}$ and $A(H)=\{(v^{i-1},w^{i})\mid i\in[\tau], (v,w)\in A(G)\}$.
\iflong\end{definition}\fi
Note that for every directed $n$-vertex $m$-arc graph the $\tau$-time-expanded graph can be constructed in $\Oh(\tau\cdot(n+m))$~time.
We prove that we can decide \mtseAcr and \smtseAcr by flow computation in the time-expanded graph of the input graph:%
\begin{lemma}%
 \label{lem:flows}
 Let $G=(V,A)$ be a directed graph with two distinct vertices $s,t\in V$.
 Let $p\in\N$ and $\tau:=|V|$.
 Let $H$ be the $\tau$-time-expanded graph of~$G$ with $p$ additional arcs $(t^{i-1},t^i)$ between the copies of~$t$ for each $i\in [\tau]$.
 Then, $G$ allows for at least~$p$ $s$-$t$~walks of length at most~$\tau$ not \sharing any arc if and only if $H$ allows for an $s^0$-$t^\tau$ flow of value at least~$p$.
\end{lemma}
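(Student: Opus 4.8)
The plan is to read the $\tau$-time-expanded graph $H$ in the standard way: a ``real'' arc $(v^{i-1},w^i)$ encodes \emph{traversing arc $(v,w)$ at time step $i$}, so two routes \share an arc exactly when the corresponding objects use the same real arc of $H$. Accordingly I give every real arc capacity $1$, while the $p$ parallel holding arcs $(t^{i-1},t^i)$ together behave like one arc of capacity $p$ that models an object which has already reached $t$ and waits there. I would first record two structural facts used throughout: $H$ is acyclic, since every arc leads from layer $i-1$ to layer $i$, so any $s^0$-$t^\tau$ flow decomposes into paths only; and since all capacities are integral, an $s^0$-$t^\tau$ flow of value at least $p$ may be taken integral of value exactly $p$.

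For the direction from walks to a flow, I would lift each of the $p$ given $s$-$t$ walks $W_j=(s=u_0,\ldots,u_{\ell_j}=t)$ of length $\ell_j\le\tau$ to the $s^0$-$t^\tau$ path $Q_j$ in $H$ that follows the real arcs $(u_{i-1}^{i-1},u_i^i)$ for $i\in[\ell_j]$ and then pads with holding arcs $(t^{i-1},t^i)$ for $\ell_j<i\le\tau$; this is possible precisely because $\ell_j\le\tau$, and $\ell_j\ge 1$ since $s\ne t$. Because the walks pairwise \share no arc, the $Q_j$ are arc-disjoint on the capacity-$1$ real arcs, and each layer's holding bundle is used by at most $p$ of the $Q_j$, which its capacity $p$ accommodates. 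Sending one unit along each $Q_j$ is therefore a feasible $s^0$-$t^\tau$ flow of value $p$.

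For the converse, I would start from an integral $s^0$-$t^\tau$ flow of value $p$ and decompose it, in the DAG $H$, into $p$ path flows $Q_1,\ldots,Q_p$, each carrying one unit and hence pairwise arc-disjoint on every capacity-$1$ arc. Each $Q_j$ meets one vertex per layer, say $u_0^0,\ldots,u_\tau^\tau$ with $u_0=s$ and $u_\tau=t$. The key step is \emph{truncation}: let $r_j=\min\{i:u_i=t\}$ be the first layer at which $Q_j$ reaches $t$, and set $W_j=(u_0,\ldots,u_{r_j})$. Since no copy of $t$ is visited before layer $r_j$ and holding arcs emanate only from copies of $t$, every arc of $Q_j$ before layer $r_j$ is a real arc; hence $W_j$ is a genuine $s$-$t$ walk of length $r_j\le\tau$ (with $r_j\ge 1$ as $s\ne t$). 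As the arcs of $W_j$ are real arcs of $Q_j$ and the $Q_j$ are arc-disjoint on real arcs, the walks $W_1,\ldots,W_p$ pairwise \share no arc.

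The main obstacle is this backward direction: a decomposed path in $H$ need not describe a walk that stops at $t$, since it may leave $t$ through a real arc and later return, interleaving real and holding arcs. Truncating at the first visit to $t$ resolves this cleanly and simultaneously forces the retained prefix to use only real arcs, so that arc-disjointness in $H$ transfers to the non-\sharing property in $G$. The remaining care is only in justifying that the flow may be assumed integral and that its decomposition is cycle-free, both of which follow at once from integrality of network flows and from the acyclicity of $H$.
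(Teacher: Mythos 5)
Your proof is correct and follows essentially the same route as the paper's: lifting walks to paths in~$H$ padded by the holding arcs at~$t$ for one direction, and decomposing an (integral) flow into arc-disjoint $s^0$-$t^\tau$ paths and truncating each at its \emph{first} visit to a copy of~$t$ for the other. You are in fact slightly more explicit than the paper on two points it leaves implicit---that the truncated prefix can contain only real arcs because holding arcs emanate solely from copies of~$t$, and that integrality plus acyclicity of~$H$ justify the path decomposition---which is welcome but not a different argument.
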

\appendixproof{lem:flows}
{
  \begin{proof}
  \RD{}
  Let $G$ allow for $p$ $s$-$t$~walks $W_1,\ldots,W_p$ not \sharing any arc.
  We construct an $s^0$-$t^\tau$~flow of value~$p$ in $H$ as follows.
  Observe that~$W_i=(v_0,\ldots,v_\ell)$ corresponds to a path $P=(v_0^0,v_1^1,\ldots,v_\ell^\ell)$ in~$H$.
  If $\ell<\tau$, then extend this path to the path~$P=(v_0^0,\ldots,v_\ell^\ell,t^{\ell+1},,\ldots,t^\tau)$ (observe that $v_\ell^\ell=t^\ell$).
  Set the flow on the arcs of $P$ to one.
  From the fact that $W_1,\ldots,W_p$ are not \sharing any arc in~$G$, we extend the flow as described above for each walk by one, hence obtaining an $s^0$-$t^\tau$ flow of value~$p$.
  
  \LD{}
  Let~$H$ allow an $s^0$-$t^\tau$~flow of value~$p$.
  It is well-known that any $s^0$-$t^\tau$~flow of value~$p$ in~$H$ can be turned into~$p$ arc-disjoint $s^0$-$t^\tau$~paths in~$H$~\cite{KT06}.
  Let $P=(v_0^0,v_1^1,\ldots,v_\ell^\ell)$ be an $s^0$-$t^\tau$~path in~$H$.
  Let $\ell'$ be the smallest index such that~$v_{\ell'}^{\ell'}=t^{\ell'}$.
  Then $W=(v_1,v_2,\ldots,v_{\ell'})$ is an $s$-$t$~walk in~$G$.
  Let~$\calP$ be the set of $p$~$s^0$-$t^\tau$~paths in~$H$ obtained from an $s^0$-$t^\tau$~flow of value~$p$, and let~$\calW$ be the set of $p$ $s$-$t$~walks in~$G$ obtained from~$\calP$ as described above.
  As every pair of paths in~$\calP$ is arc-disjoint, no pair of walks in~$\calW$ \share any arc in~$G$.
  \ifshort{}\qed\fi{}
  \end{proof}
}
\appendixproof{thm!dagsptime}
{
  \begin{proof}[\iflong{}Proof of~\cref{{thm!dagsptime}}\else{}\cref{{thm!dagsptime}}\fi{}]
  Let~$(G=(V,A),s,t,p,0)$ be an instance of \wmtseAcr{} with~$G$ being a directed acyclic graph.
  Let $n:=|V|$. 
  We construct the directed $n$-time-expanded graph~$H$ of $G$ with~$p$ additional arcs $(t^{i-1},t^i)$ for each $i\in [\tau]$.
  Note that any $s$-$t$ path in~$G$ is of length at most $n-1$ due to~$G$ being directed and acyclic.
  The statement then follows from~\cref{lem:flows}.
    \ifshort{}\qed\fi{}
  \end{proof}
}
\cref{lem:flows} is directly applicable to \smtseAcr, by constructing an $\alpha$-expanded graph.%
\begin{corollary}
 \label{cor:smtsedags}
 If $k=0$, then \smtseAcr on $n$-vertex $m$-arc DAGs is solvable in $\Oh(\alpha^2\cdot n\cdot m)$~time.
\end{corollary}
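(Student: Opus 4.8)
The plan is to reduce \smtseAcr with $k=0$ on a DAG to a single maximum-flow computation, mirroring the proof of \cref{thm!dagsptime} but replacing the expansion depth $|V|$ by the length bound~$\alpha$. First I would recall (from the beginning of \cref{sec:all}) that on a DAG every walk is a path, so all three route types coincide and it suffices to reason about walks; this is what lets me invoke \cref{lem:flows}. Given an instance $(G=(V,A),s,t,p,\alpha,0)$, I would construct the $\alpha$-time-expanded graph~$H$ of~$G$ together with the waiting arcs $(t^{i-1},t^i)$ for each $i\in[\alpha]$, then compute a maximum $s^0$-$t^\alpha$ flow in~$H$ and answer \yes if and only if its value is at least~$p$.

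The correctness step is essentially already established. \cref{lem:flows} is stated for expansion depth $\tau:=|V|$, but its proof never uses this particular value: it only uses that $\tau$ is simultaneously the expansion depth and the length bound on the walks. Hence the very same argument, applied with $\tau:=\alpha$, shows that $G$ admits at least~$p$ $s$-$t$~walks of length at most~$\alpha$ that do not \share any arc if and only if~$H$ admits an $s^0$-$t^\alpha$~flow of value at least~$p$. I would state this explicitly so that the reader sees why the length restriction of \smtseAcr translates cleanly into the depth of the time-expanded graph, rather than silently reusing a lemma whose hypothesis fixes $\tau=|V|$.

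For the running time, I would observe that~$H$ has $\Oh(\alpha\cdot n)$ vertices and $\Oh(\alpha\cdot m)$ arcs and can be built in $\Oh(\alpha(n+m))$~time; here I would model the $p$ parallel waiting arcs at each time step by a single arc of capacity~$p$ (which defines the same flow network) so that the arc count stays $\Oh(\alpha m)$ independently of~$p$. Assigning every original arc capacity~$1$ and computing a maximum flow with a strongly polynomial $\Oh(|V(H)|\cdot|A(H)|)$-time algorithm then costs $\Oh(\alpha n\cdot\alpha m)=\Oh(\alpha^2\cdot n\cdot m)$~time, which dominates the construction and yields the claimed bound. (Capping the depth at $\min\{\alpha,n-1\}$ changes nothing, since on a DAG no walk is longer than $n-1$.)

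The only point needing a little care — and the one I would flag as the main obstacle — is precisely the mismatch between the fixed depth $\tau=|V|$ in the statement of \cref{lem:flows} and the depth~$\alpha$ required here; everything else is a routine transfer of the flow argument already used for \cref{thm!dagsptime}. I would therefore make the depth-agnosticity of \cref{lem:flows} the explicit crux of the proof and leave the capacity bookkeeping and the flow-algorithm invocation as standard.
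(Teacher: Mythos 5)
Your proposal is correct and takes essentially the same approach as the paper, which proves this corollary in one line by applying \cref{lem:flows} to the $\alpha$-time-expanded graph and computing a maximum $s^0$-$t^\alpha$ flow. Your explicit observation that the proof of \cref{lem:flows} never uses the specific value $\tau=|V|$ merely spells out what the paper leaves implicit, and your single capacity-$p$ waiting arc per time step is an equivalent bookkeeping of the paper's $p$ parallel arcs $(t^{i-1},t^i)$.
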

Let $G=(V,A)$ be a directed graph and let $K\subseteq A$ and $x\in\N$.
We denote by~$G(K,x)$ the graph obtained from~$G$ by replacing each arc~$(v,w)\in K$ in~$G$ by $x$ copies~$(v,w)_1,\ldots,(v,w)_x$.

\begin{lemma}%
 \label{lem:reductiontozero}
 Let $(G=(V,A),s,t,p,k)$ be an instance of \wmtseAcr{} with~$G$ being a directed graph.
 Then, $(G,s,t,p,k)$ is a yes-instance of \wmtseAcr{} if and only if there exists a set~$K\subseteq A$ with $|K|\leq k$ such that $(G(K,p),s,t,p,0)$ is a yes-instance of \wmtseAcr{}.
 The same statement holds true for \wsmtseAcr{}.
\end{lemma}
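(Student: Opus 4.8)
The plan is to prove both directions of the equivalence by explicitly transforming a solution of one instance into a solution of the other, exploiting that replacing an arc by $p$ parallel copies changes neither the endpoints nor the length of a walk, and that at any fixed time step at most $p$ of the $p$ walks can traverse a given arc.

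For the forward implication ($\Rightarrow$), suppose $G$ admits $p$ $s$-$t$~walks $W_1,\dots,W_p$ \sharing at most $k$~arcs. I would let $K$ be precisely the set of arcs that are \shared, so that $|K|\le k$. In $G(K,p)$ each arc of $K$ has $p$ copies, and I would reroute the walks as follows: for every time step~$\theta$ and every arc $e\in K$, the set $S_{\theta,e}$ of walks traversing $e$ at time~$\theta$ has size at most~$p$, so I can assign these walks pairwise distinct copies of~$e$ via any injection into the $p$~copies; arcs outside~$K$ are left untouched. Since a copy replaces~$e$ at the same position of the walk, the endpoints, and hence the walk and its length, are preserved, giving $p$ $s$-$t$~walks in~$G(K,p)$. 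To verify that these walks \share no arc, I would check the two kinds of arcs of $G(K,p)$ separately: an arc $f\notin K$ is, by the choice of~$K$, used by at most one original walk at any fixed time step, so no collision arises there; and a copy $(v,w)_c$ is, by construction of the injection, traversed by at most one walk at any fixed time step. Hence $(G(K,p),s,t,p,0)$ is a yes-instance.

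For the converse ($\Leftarrow$), suppose some $K$ with $|K|\le k$ makes $(G(K,p),s,t,p,0)$ a yes-instance, witnessed by collision-free walks $W_1',\dots,W_p'$ in~$G(K,p)$. I would project each walk back to~$G$ by replacing every copy $(v,w)_c$ by the original arc $(v,w)$; since vertices and terminals are unchanged, this yields $p$ $s$-$t$~walks in~$G$. Counting \shared arcs: an arc $f\notin K$ is not duplicated, so the projected walks collide on~$f$ exactly when $W_1',\dots,W_p'$ do, which never happens; thus every \shared arc of the projection lies in~$K$, and there are at most $|K|\le k$ of them. Hence $(G,s,t,p,k)$ is a yes-instance. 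The statement for \wsmtseAcr{} follows verbatim, because both transformations preserve the length of every walk, so any length bound~$\alpha$ is respected in both directions.

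The main obstacle I expect is purely the bookkeeping of the forward direction: one must argue that the per-time-step injective assignment of copies can be carried out independently across time steps (a single walk may traverse the same arc of~$K$ at several time steps, receiving possibly different copies each time) and that $p$ copies suffice precisely because \sharing is defined time-step-wise and there are exactly $p$~walks. Once this is set up, verifying the absence of collisions and the preservation of lengths is routine.
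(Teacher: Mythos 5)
Your proof is correct and takes essentially the same approach as the paper's: both directions proceed by choosing $K$ to be exactly the set of \shared arcs and transforming solutions back and forth between $G$ and $G(K,p)$, noting that walk lengths are preserved so the \wsmtseAcr{} case follows for free. The one difference is that the paper avoids the per-time-step injective bookkeeping you flag as the main obstacle by fixing a single global assignment---walk $W_i'$ always uses copy $(v,w)_i$ of every arc $(v,w)\in K$---which makes collision-freeness on the copies immediate, since copy $i$ appears in no walk other than $W_i'$.
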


\appendixproof{lem:reductiontozero}
{
  \begin{proof}
  \RD{}
  Let~$G$ allow for a set of $p$ $s$-$t$ walks $\calW=\{W_1,\ldots,W_p\}$ \sharing at most $k$ edges.
  Let $K\subseteq A$ denote the set of at most $k$ arcs \shared by the walks in~$\calW$.
  We construct a set of $p$ $s$-$t$ walks $\calW'=\{W_1',\ldots,W_p'\}$ in~$G(K,p)$ from~$\calW$ as follows.
  For each $i\in[p]$, let $W_i'=W_i$. 
  Whenever an arc $(v,w)\in K$ appears in~$W_i'$, we replace the arc by its copy~$(v,w)_i$.
  Observe that (i) $W_i'$ forms an $s$-$t$ walk in $G(K,p)$, (ii) the positions of the arcs in~$A\setminus K$ in the walks remain unchanged, and (iii) for each arc $(v,w)\in K$, no walk contains the same copy of the arc.
  As the arcs in~$K$ are the only \shared arcs of the walks in~$\calW$, the walks in~$\calW'$ do not \share any arc in~$G(K,p)$.
  
  \LD{}
  Let $K\subseteq A$ be a subset of arcs in~$G$ with $|K|\leq k$ such that $G(K,p)$ allows for a set of $p$ $s$-$t$ walks $\calW'=\{W_1',\ldots,W_p'\}$ with no \shared arc.
  We construct a set of $p$ $s$-$t$ walks $\calW=\{W_1,\ldots,W_p\}$ in~$G$ from~$\calW'$ as follows.
  For each $i\in[p]$, let $W_i=W_i'$. 
  Whenever an arc $(v,w)_x$, for some~$x\in[p]$, with $(v,w)\in K$ appears in~$W_i'$, we replace the arc by its original~$(v,w)$.
  Observe that (i) $W_i$ forms an $s$-$t$ walk in $G$, (ii) the positions of the arcs in~$A\setminus K$ in the walks remain unchanged. 
  As the arcs in the set~$K$ of at most~$|K|\leq k$ arcs can appear at the same positions in any pair of two walks in~$\calW$, the $s$-$t$ walks in~$\calP$ \share at most~$k$ arcs in~$G$.
  
  \smallskip\noindent Note that as the length of the walks do not change in the proof, the statement of the lemma also holds for~\wsmtseAcr{}.
  \ifshort{}\qed\fi{}
  \end{proof}
}%
{
  \begin{proof}[\iflong{}Proof of~\cref{thm!dagsxp}\else{}\cref{thm!dagsxp}\fi{}]
  Let~$(G=(V,A),s,t,p,k)$ be an instance of \wmtseAcr{} with~$G$ being a directed acyclic graph.
  For each $k$-sized subset~$K\subseteq A$ of arcs in~$G$, we decide the instance $(G(K,p),s,t,p,0)$.
  The statement for~\mtseAcr{} then follows from~\cref{lem:reductiontozero} and~\cref{thm!dagsptime}.
  We remark that the value of a maximum flow between two terminals in an $n$-vertex $m$-arc graph can be computed in~$\Oh(n\cdot m)$~time~\cite{Orlin13}.
  The running time of the algorithm is in~$\Oh(|A|^k\cdot (|V|^3\cdot |A|))$.
  The statement for~\smtseAcr{} follows analogously with~\cref{lem:reductiontozero} and~\cref{cor:smtsedags}.
  \ifshort{}\qed\fi{}
  \end{proof}
}
\subsection{Arbitrary Number of Shared Arcs}
If the number~$k$ of \shared arcs is \iflong{}part of the input\else{}arbitrary\fi, then both \mtseAcr{} and \smtseAcr{} are \iflong\NP-complete and \W{2}-hard with respect to~$k$\else{}hard\fi.

\begin{theorem}%
  \label{thm!dagshard}
  \mtseAcr on DAGs is \NP-complete and \W{2}-hard with respect to~$k$.
\end{theorem}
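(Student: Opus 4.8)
The plan is to establish membership in \NP\ via \cref{lem:continNP}, and to prove \NP-hardness and \W{2}-hardness with respect to~$k$ by a single parameterized reduction from \textsc{Set Cover}, which is \NP-hard and \W{2}-hard with respect to the solution size~$h$. Since on DAGs all three \route types coincide, it suffices to produce paths. Given a universe $U=\{u_1,\dots,u_n\}$, sets $S_1,\dots,S_m\subseteq U$, and a bound~$h$, I would build the following four-layer DAG~$G$: layer~$0$ is the source~$s$; layer~$1$ contains an \emph{element vertex}~$e_j$ for every~$u_j$ and a \emph{set-gate}~$g_i$ for every~$S_i$; layer~$2$ contains a \emph{set vertex}~$f_i$ for every~$S_i$; layer~$3$ is the sink~$t$. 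The arcs are $(s,e_j)$ and $(s,g_i)$ into layer~$1$, the arcs $(e_j,f_i)$ exactly when $u_j\in S_i$, the arcs $(g_i,f_i)$, and the arcs $a_i:=(f_i,t)$. Finally I set $p:=n+m$ and $k:=h$.

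Because the construction is strictly layered, every $s$-$t$~path has length exactly~$3$, so an arc between layers~$\ell$ and~$\ell+1$ is always traversed at time step~$\ell+1$. Hence two paths \share an arc (time-wise) if and only if they both use it at all, which removes the timing subtlety and lets me reason as in the (static) \textsc{Minimum Shared Edges} setting. For a cover~$\mathcal S'$ I would route the~$p$ paths canonically: the~$m$ \emph{set-paths} $s\to g_i\to f_i\to t$, and one \emph{element-path} $s\to e_j\to f_{\sigma(j)}\to t$ per element, where $S_{\sigma(j)}\in\mathcal S'$ contains~$u_j$. In such a routing no two paths meet on an arc of layers $0\to1$ or $1\to2$ (the out-arcs of~$s$ are used bijectively, and each~$e_j,g_i$ forwards a single path on a distinct arc), so the only \shared arcs are those~$a_i$ carrying both a set-path and at least one element-path; their number is at most~$|\mathcal S'|=h$. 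Conversely, from any routing with at most~$h$ \shared arcs I would read off a cover: every element-path ends with some~$a_i$, which is automatically \shared because the set-path~$i$ also uses~$a_i$; thus the \shared~$a_i$ index a family covering every element that receives an element-path.

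The correspondence ``canonical routings $\leftrightarrow$ set covers'' gives the easy direction, and the main obstacle is the converse lower bound: an adversarial routing need not be canonical, e.g.\ it could leave some~$e_j$ unused (so that~$u_j$ receives no element-path and appears uncovered) by piling the freed path onto another out-arc of~$s$. I would handle this with a normalization/exchange argument showing that no such deviation decreases the number of \shared arcs: leaving one~$e_j$ unused forces an extra path onto a \shared out-arc $(s,e_{j'})$ or $(s,g_i)$ together with the induced collision on $(g_i,f_i)$, and this newly \shared arc can be charged against the single arc~$a_i$ one could have saved by not covering~$u_j$. Formalizing this charging---that is, proving that some optimal routing is canonical and hence that the minimum number of \shared arcs equals the minimum set-cover size (after trivially discarding instances with an element in no set)---is the technical heart of the proof. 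Once it is in place, a cover of size at most~$h$ exists if and only if the constructed instance admits~$p$ paths \sharing at most~$k=h$ arcs; as the reduction runs in polynomial time with $k=h$, it witnesses both \NP-hardness and \W{2}-hardness with respect to~$k$, which together with \cref{lem:continNP} proves \cref{thm!dagshard}.
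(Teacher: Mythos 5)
Your overall strategy (membership via \cref{lem:continNP}, hardness by a parameterized reduction from \textsc{Set Cover} with $k$ equal to the cover size, exploiting that all \route types coincide on DAGs) matches the paper, and your construction is essentially the paper's \cref{constr:dagshard} with the chains from~$s$ collapsed to single arcs. That collapse, however, is fatal, and the normalization/charging argument you defer as ``the technical heart'' cannot be completed because the statement it would prove is false. In your four-layer graph the distance from~$s$ to~$t$ is~$3$, so routing \emph{all} $p=n+m$ paths along one fixed path $s\to g_i\to f_i\to t$ \shares exactly three arcs --- $(s,g_i)$, $(g_i,f_i)$, and $(f_i,t)$ --- no matter how large $p$ is, since identical paths traversed in lockstep contribute each arc only once to the count of \shared arcs. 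Hence every constructed instance with $k=h\geq 3$ is a yes-instance regardless of the \textsc{Set Cover} answer. More generally, your charging scheme fails quantitatively: piling $x$ paths onto a single out-arc of~$s$ costs a \emph{bounded} number of \shared arcs (at most three), while freeing $x-1$ element arcs, so skipping many elements saves up to $x-1$ cover arcs at constant cost. A concrete counterexample: $U=\{u_1,\dots,u_5\}$ with the five singleton sets and $h=4$ has minimum cover~$5$, yet your instance ($p=10$, $k=4$) admits a routing with four \shared arcs (five paths stacked on $s\to g_1\to f_1\to t$, the gate paths for $g_2,\dots,g_5$, and one element path through $f_5$).

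This is precisely the obstacle the paper's construction is engineered around: in \cref{constr:dagshard} every out-neighbor of~$s$ is reached via an $(\ell+1)$- or $(\ell+2)$-chain, so (i) $\dist(s,t)=\ell+3>k$, killing the trivial all-on-one-path solution (cf.\ the paper's preliminary observation that instances with $\dist(s,t)\leq k$ are trivially yes), and (ii) any two paths using the same chain must traverse it simultaneously and thus \share more than $k$ arcs, forcing the routing to be ``canonical'': each chain, in particular each element chain, carries exactly one path. With that forcing in place, the only sharable arcs are the $(w,t)$ arcs, and the equivalence of \cref{lemma:dagshard} follows without any exchange argument. To repair your proof, replace your arcs $(s,e_j)$ and $(s,g_i)$ by chains of length $h+1$ and $h+2$ respectively (the offset by one keeps the element paths and set paths synchronized at the $(f_i,t)$ arcs, which your layered timing argument otherwise gets right); the rest of your forward direction and your cover-extraction then go through as in the paper.
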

The construction in the reduction for~\cref{thm!dagshard} is similar to the one used by Omran et al.~\cite[Theorem~2]{OmranSZ13}. %
Herein, we give a (parameterized) many-one reduction from the \NP-complete~\cite{Karp72} \textsc{Set Cover} problem: given a set $U=\{u_1,\ldots,u_n\}$, a set of subsets~$\calF=\{F_1,\ldots,F_m\}$ with $F_i\subseteq U$ for all $i\in[m]$, and an integer~$\ell\leq m$, is there a subset~$\calF'\subseteq \calF$ with $|\calF'|\leq \ell$ such that $\bigcup_{F\in \calF'} F = U$. 
\iflong{}We say that $\calF'$ is a \emph{set cover} and we say that the elements in $F \in \calF$ are \emph{covered} by~$F$. \fi{}
Note that \textsc{Set Cover} is \W{2}-complete with respect to the solution size~$\ell$ in question~\cite{DowneyF13}.
In the following \cref{constr:dagshard}, given a \textsc{Set Cover} instance, we construct the DAG in an equivalent \mtseAcr\ or \smtseAcr\ instance.
\begin{constr}
 \label{constr:dagshard}
\looseness=-1 Let a set $U=\{u_1,\ldots,u_n\}$, a set of subsets~$\calF=\{F_1,\ldots,F_m\}$ with $F_i\subseteq U$ for all $i\in[m]$, and an integer~$\ell\leq m$ be given.
 Construct a directed acyclic graph~$G=(V,A)$ as follows.
 Initially, let $G$ be the empty graph. 
 Add the vertex sets $V_U=\{v_1,\ldots,v_n\}$ and $V_\calF=\{w_1,\ldots,w_m\}$, corresponding to~$U$ and~$\calF$, respectively.
 Add the edge $(v_i,w_j)$ to $G$ if and only if $u_i\in F_j$.
 Next, add the vertex $s$ to $G$.
 For each $w\in V_\calF$, add an $(\ell+2)$-chain to $G$ connecting~$s$ with~$w$, and direct all edges in the chain from~$s$ towards~$w$.
 For each $v\in V_U$, add an $(\ell+1)$-chain to~$G$ connecting $s$ with~$v$, and direct all edges in the chain from~$s$ towards~$v$.
 Finally, add the vertex~$t$ to~$G$ and add the arcs $(w,t)$ for all $w\in V_\calF$. 
 \qed
\end{constr}

\begin{lemma}%
 \label{lemma:dagshard}
 Let $U$,~$\calF$,~$\ell$, and~$G$ as in \cref{constr:dagshard}.
 Then there are at most~$\ell$ sets in $\calF$ such that their union is~$U$ if and only if $G$ admits $n+m$ $s$-$t$~walks \sharing at most~$\ell$ arcs in~$G$.
\end{lemma}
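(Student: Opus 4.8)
The plan is to prove both directions by relating a set cover to a canonical family of $n+m$ walks. The decisive structural fact, which I would establish first, is that in $G$ every $s$-$t$ path has length exactly $\ell+3$ and comes in one of two types: a \emph{direct} path $s \to \dots \to w_j \to t$ that traverses the $(\ell+2)$-chain to some $w_j$ and then the arc $(w_j,t)$, or an \emph{element} path $s \to \dots \to v_i \to w_j \to t$ that traverses the $(\ell+1)$-chain to some $v_i$, then an arc $(v_i,w_j)$ with $u_i\in F_j$, then $(w_j,t)$. Since $G$ is a DAG every walk is a path, and since $s$ has outgoing arcs only into the first vertices of the chains and the chains branch nowhere internally, the first arc of a walk forces it to traverse one entire chain before reaching $v_i$ or $w_j$. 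I would then record the crucial timing observation: the chain lengths $\ell+2$ and $\ell+1$ are chosen precisely so that in \emph{both} path types the terminal arc $(w_j,t)$ is used at the same time step $\ell+3$. Consequently two walks \share the arc $(w_j,t)$ if and only if both route through $w_j$.

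For the forward direction ($\Leftarrow$ of the set cover), given a set cover $\{F_j : j\in J\}$ with $|J|\le \ell$, I would route one direct walk $s \to \dots \to w_j \to t$ for every $j\in[m]$ and, for every element $u_i$, one element walk through some $w_{j(i)}$ with $j(i)\in J$ and $u_i\in F_{j(i)}$ (such $j(i)$ exists because the chosen sets cover $U$). This produces $n+m$ walks on pairwise distinct chains, so no chain arc is \shared, and the arcs $(v_i,w_j)$ have pairwise distinct tails, so they are not \shared either. The only remaining collisions are on the arcs $(w_j,t)$ at time $\ell+3$, and such a collision occurs exactly for those $j\in J$ through which an element walk is routed; hence at most $|J|\le\ell$ arcs are \shared.

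The converse is where the real work lies. Starting from $n+m$ walks \sharing at most $\ell$ arcs, I would first argue that no two walks may traverse the same chain: two walks on a common chain would \share all of its $\ell+1$ (resp.\ $\ell+2$) arcs — and, crucially, at matching time steps, since all walks leave $s$ at time $0$ — thereby exceeding the budget $\ell$. As there are exactly $n+m$ chains and $n+m$ walks, this yields a bijection between walks and chains, pinning the solution down to exactly one direct walk per $w_j$ and exactly one element walk per $v_i$. It then follows that the arcs $(v_i,w_j)$ are distinct across element walks and never \shared, so the only \shared arcs are those $(w_j,t)$ through which at least one element walk passes, each being \shared with the (guaranteed) direct walk for $w_j$. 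Letting $J$ index these arcs, we get $|J|\le\ell$, and since each $u_i$ is routed through a set $F_{j(i)}$ containing it with $j(i)\in J$, the family $\{F_j : j\in J\}$ is a set cover of size at most $\ell$.

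I expect the main obstacle to be the structural argument in the converse direction, namely the ``no shared chain'' step together with the ensuing counting that collapses the arbitrary solution onto the two canonical walk types. Once the timing alignment at step $\ell+3$ and the impossibility of \sharing a full chain within budget $\ell$ are in place, the translation between shared $(w_j,t)$-arcs and covering sets is immediate, and both directions close.
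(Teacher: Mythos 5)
Your proposal is correct and follows essentially the same route as the paper's proof: the same canonical routing of one walk per chain in the forward direction, and in the converse the same key observations that no two walks can \share a chain within budget~$\ell$ (forcing a bijection between the $n+m$ walks and the $n+m$ chains), that the timing alignment places every arc $(w_j,t)$ at time step $\ell+3$ in both walk types, and that the \shared $(w_j,t)$-arcs therefore index a set cover. Your write-up makes the walk--chain bijection and the two canonical path types slightly more explicit than the paper does, but the argument is the same.
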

\appendixproof{lemma:dagshard}
{
  \begin{proof}
  \RD{}
  Suppose there are $\ell$ sets $F_1',\ldots,F_\ell'\in \calF$ such that $\bigcup_{i\in[\ell]} F_i'=U$. 
  Let $w_1',\ldots,w_\ell'$ be the vertices in~$V_\calF$ corresponding to~$F_1',\ldots,F_k'$.
  We construct $n+m$ $s$-$t$ walks as follows.
  Each outgoing chain on $s$ corresponds to exactly one $s$-$t$~walk. 
  Those walks that start with the chains connecting $s$ with a $w\in V_\calF$ are extended directly to $t$ (there is no other choice).
  For all the other walks, as $F_1',\ldots,F_k'$ cover $U$, for each~$v_i\in V_U$ there is at least one arc towards $\{w_1',\ldots,w_k'\}$.
  Route the walks arbitrarily towards one out of~$\{w_1',\ldots,w_k'\}$ and then forward to~$t$.
  Observe that all walks contain exactly one vertex in $V_\calF$ at time~step~$\ell+3$.
  Moreover, only the arcs $(w_i',t)$ for $i\in[\ell]$ are contained in more than one walk.
  As they are at most~$\ell$, the claim follows.
  
  \LD{}
  Suppose $G$ admits a set~$\mathcal{W}$ of $n+m$ $s$-$t$~walks \sharing at most~$\ell$ arcs in~$G$.
  Observe first that the arcs of the form $(w,t)$, $w\in V_\calF$, are the only arcs that can be \shared whenever at most~$\ell$ arcs are \shared, due to the fact that each outgoing chain on~$s$ is of length longer than~$k$.
  Moreover, each arc~$(w,t)$, $w\in V_\calF$ is contained in at least one $s$-$t$ walk in~$\mathcal{W}$, because no two walks in $\mathcal{W}$ can share a chain outgoing from~$s$, and for each~$w\in V_\calF$ the only outgoing arc on~$w$ has endpoint~$t$.
  Denote by $W\subseteq V_\calF$ the set of vertices such that the set~$\{(w,t)\mid w\in W\}$ is exactly the set of \shared arcs by the $n+m$ $s$-$t$~walks in~$\mathcal{W}$.
  Observe that~$|W| \leq \ell$.
  We claim that the set of sets~$\calF'$, containing the sets that by construction correspond to the vertices in~$W$, forms a set cover for~$U$.
  We show that for each element~$u\in U$ there is a $w\in W$ such that the set corresponding to~$w$ is containing~$u$.
  
  Let $u\in U$ be an arbitrary element of~$U$.
  Consider the walk~$P \in \mathcal{W}$ containing the vertex~$v\in V_U$ corresponding to element~$u$.
  As $P$ forms an $s$-$t$~walk in $G$, walk~$P$~contains a vertex~$w'\in V_\calF$.
  As discussed before, there is a walk~$P'$ containing the chain from~$s$ to~$w'$ not containing any vertex in~$V_U$.
  By construction, $w'$ is at time~step~$\ell+3$ in both~$P$ and~$P'$.
  As the only outgoing arc on~$w'$ is~$(w',t)$, both~$P$ and $P'$ use the arc~$(w',t)$ at time step~$\ell+3$, and hence~$(w,t)$ is \shared by~$P$ and $P'$.
  It follows that~$w'\in W$, and hence~$u$ is covered by the set in~$\calF'$ corresponding to~$w'$.
  \ifshort{}\qed\fi{}
  \end{proof}%
}%
\ifshort{}%
\cref{thm!dagshard} follows then from \cref{constr:dagshard} and \cref{lemma:dagshard} (see~\cref{proof:thm!dagshard}).

\fi{}
\appendixproof{thm!dagshard}
{
  \begin{proof}[\iflong{}Proof of~\cref{thm!dagshard}\else{}\cref{thm!dagshard}\fi{}]
  We give a (parameterized) many-one reduction from \textsc{Set Cover} to \mtseAcr{}.
  Let $(U,\calF,\ell)$ be an instance of~\textsc{Set Cover}.
  We construct the instance $(G,s,t,p,k)$, where $G$ is obtained by applying~\cref{constr:dagshard}, $p=|U|+|\calF|$, and $k=\ell$.
  The correctness of the reduction then follows from~\cref{lemma:dagshard}.
  Finally, note that as $k=\ell$ and \textsc{Set Cover} is \W{2}-hard with respect to the size~$\ell$ of the set cover, it follows that \mtseAcr{} is \W{2}-hard with respect to the number~$k$ of \shared arcs.
  \ifshort{}\qed\fi{}
  \end{proof}
}%
Observe that each $s$-$t$ walk in the graph obtained from~\cref{constr:dagshard} is of length at most~$\ell+3$.
\iflong{}
Hence, in the proof of \cref{thm!dagshard}, we can instead reduce to an instance $(G,s,t,p,k,\alpha)$ of~\smtseAcr{}, where~$G$ is obtained by applying~\cref{constr:dagshard}, $p=|U|+|\calF|$, and $k=\ell$, and $\alpha=\ell+3$.
\fi{}
Therewith, we obtain the following.
\begin{corollary}
 \label{cor:smtsedagshard}
 \smtseAcr on DAGs is \NP-complete and \W{2}-hard with respect to~$k + \alpha$.
\end{corollary}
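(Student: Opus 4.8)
The plan is to reuse the reduction from \textsc{Set Cover} underlying \cref{thm!dagshard} almost verbatim, merely attaching a length bound. Given a \textsc{Set Cover} instance $(U,\calF,\ell)$, I would build the directed acyclic graph $G$ by applying \cref{constr:dagshard} and output the \smtseAcr instance $(G,s,t,p,k,\alpha)$ with $p=|U|+|\calF|$, $k=\ell$, and $\alpha=\ell+3$.

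The single fact that drives the argument is that every $s$-$t$ walk in $G$ already has length at most $\ell+3$. Indeed, since $G$ is a DAG every walk is a path, and every $s$-$t$ path is of one of two forms: it either leaves $s$ along an $(\ell+1)$-chain to some $v\in V_U$, then uses the arc to some $w\in V_\calF$, then the arc $(w,t)$ (total length $\ell+3$), or it leaves $s$ along an $(\ell+2)$-chain to some $w\in V_\calF$ and then takes $(w,t)$ (again total length $\ell+3$). Hence the cap $\alpha=\ell+3$ forbids no $s$-$t$ walk at all, so imposing it does not change the collection of feasible routes. Consequently the equivalence of \cref{lemma:dagshard} transfers unchanged: there are at most $\ell$ sets in $\calF$ whose union is $U$ if and only if $G$ admits $|U|+|\calF|$ $s$-$t$ walks, each of length at most $\alpha$, that share at most $k=\ell$ arcs. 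This yields \NP-hardness, and \NP-membership is supplied by \cref{lem:continNP}, giving \NP-completeness.

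For the parameterized claim I would note that both $k=\ell$ and $\alpha=\ell+3$ are bounded by a linear function of the \textsc{Set Cover} parameter $\ell$, so $k+\alpha\le 2\ell+3$ is bounded by a computable function of $\ell$. As \cref{constr:dagshard} runs in polynomial time, this constitutes a valid parameterized reduction, and the \W{2}-hardness of \textsc{Set Cover} with respect to $\ell$ transfers to \W{2}-hardness of \smtseAcr with respect to $k+\alpha$. Since walks, trails, and paths coincide on DAGs, the statement holds for every route type. I do not foresee a real obstacle: the whole argument rests on the gadget graph having bounded $s$-$t$ distance, so the only point needing care is confirming that $\ell+3$ upper-bounds \emph{every} $s$-$t$ walk length rather than just the lengths of the routes used in the forward direction of \cref{lemma:dagshard}; this is immediate from the chain lengths fixed in \cref{constr:dagshard}.
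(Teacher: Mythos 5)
Your proposal is correct and matches the paper's own argument: the paper likewise observes that every $s$-$t$ walk in the graph of \cref{constr:dagshard} has length at most~$\ell+3$, so setting $\alpha=\ell+3$ (with $p=|U|+|\calF|$, $k=\ell$) makes the reduction of \cref{thm!dagshard} carry over verbatim to \smtseAcr, with $k+\alpha$ bounded in~$\ell$ for the \W{2}-hardness claim. Your additional checks (both path types have length exactly $\ell+3$, \NP-membership via \cref{lem:continNP}) only spell out details the paper leaves implicit.
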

\section{Path-\mtseAcr}
\label{sec:pmtse}
\appendixsection{sec:pmtse}

In this section, we %
prove the following theorem.
\begin{theorem}%
  \label{thm!pmtsehard}
\pmtseAcr both on undirected planar and directed planar graphs is \NP-complete, even if $k \geq 0$ is constant or $\Delta\geq 4$ is constant.
\end{theorem}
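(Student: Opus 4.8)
The plan is to obtain membership in \NP\ from \cref{lem:continNP} and to prove \NP-hardness by a polynomial-time many-one reduction from \textsc{Planar 3-SAT}, which is \NP-hard; the planar incidence structure of that problem is exactly what lets me keep the constructed graph planar. The mechanism I would exploit is the time-synchronization intrinsic to \pmtseAcr: every path leaves~$s$ at time~$0$ and advances exactly one edge per step, so whether two paths collide is determined solely by where each path is at each time step. I would first design the reduction so that no collision at all is permitted, i.e.\ with $k=0$, which already yields the ``constant~$k$'' part of the statement; to lift this to an arbitrary fixed constant $k\ge0$ I would attach a small self-contained gadget that forces exactly $k$ unavoidable time-shared edges while leaving the rest of the analysis intact.

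Concretely, in the directed construction I would route one simple \emph{variable path} from~$s$ to~$t$ for each variable~$x_i$. A \emph{variable gadget} gives this path two internal branches of \emph{equal length}, a true-branch and a false-branch; equal lengths are essential, since they keep every path's time stamp on schedule regardless of its choice. The path then threads, in the order dictated by the planar layout, through the clause gadgets of all clauses containing $x_i$ or $\neg x_i$, placed above and below the variable line as in the standard \textsc{Planar 3-SAT} drawing. A \emph{clause gadget} for $C_j$ with three literals contains a bottleneck that, at one fixed time step~$\theta_j$, admits at most two of the contending paths without a shared edge: a variable path is forced into this region at time~$\theta_j$ exactly when its literal is \emph{false} in~$C_j$, and bypasses it otherwise. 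If all three literals are false, three paths contend at time~$\theta_j$ and, by pigeonhole, two share an edge; if at least one literal is true, the at most two remaining paths fit without sharing. Hence with $p$ equal to the number of variable (plus auxiliary) paths and $k=0$, collision-free routings correspond exactly to satisfying assignments.

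The undirected case follows from the same construction once every gadget edge is made effectively one-directional via distance-from-$s$ bookkeeping (time stamps strictly increase along any useful $s$-$t$ path), so that a repeated vertex or a ``backwards'' move can never help; the \pmtseAcr\ simple-path restriction is convenient here, since it already forbids the waiting and backtracking that make the walk variant tractable. For the bounded-degree statement ($\Delta\ge4$) I would post-process the graph by replacing each high-degree vertex with a balanced tree (or chain) of degree-$\le 4$ splitter vertices of matching depth, preserving all traversal times; because this can introduce additional unavoidable time-shared edges, the bounded-degree result is naturally stated for arbitrary~$k$ rather than for constant~$k$.

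I expect the main obstacle to be the simultaneous bookkeeping in the clause gadgets: the contested time~$\theta_j$, the arrival times of all three variable paths at~$C_j$, and the equal-length padding of every branch and every inter-clause segment must be chosen so that the two-capacity bottleneck is contested at exactly one time step by exactly the false-literal paths, with no path able to desynchronize or take an unintended detour and thereby dodge a deserved collision. Proving this ``no cheating'' direction—that every collision-free routing is honest and hence yields a satisfying assignment—while keeping the incidence wiring planar and, after degree reduction, bounded-degree, is where the careful analysis concentrates.
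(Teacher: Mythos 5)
There is a genuine gap, and it sits exactly where you predicted: the global synchronization of your clause bottlenecks is not a bookkeeping nuisance but an impossibility in general. Your scheme needs, for each clause~$C_j$, a single contested time step~$\theta_j$, and since every path moves rigidly one edge per step, each variable path must meet the bottlenecks of its clauses at times that strictly increase along its visiting order; padding (whether on the row segments or inside the gadget corridors) can only delay, never reorder. So a consistent choice of the~$\theta_j$ exists only if the union of all variables' visiting orders is acyclic. It is not: take two clauses~$C$ and~$C'$ that share two variables~$x$ and~$y$, drawn with nested arcs on the same side of the variable line (which non-crossing planarity can force). Then~$x$ meets~$C'$ before~$C$ while~$y$ meets~$C$ before~$C'$, giving the cyclic constraints~$\theta_{C'}<\theta_C$ and~$\theta_C<\theta_{C'}$. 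Repairing this would require crossing/reordering gadgets in a timed setting or a different collision mechanism altogether, and your proposal supplies neither; the ``no cheating'' direction is likewise only gestured at, and in the undirected case your claim that distance-from-$s$ bookkeeping makes edges effectively one-directional is unsupported --- a simple path in an undirected graph may move sideways or away from~$t$ at will, so nothing forces time stamps to track distance.

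For contrast, the paper avoids all inter-path scheduling by reducing from \textsc{Planar Cubic Hamiltonian Cycle} rather than \textsc{Planar 3-SAT} (\cref{constr:pmtsehard}, \cref{lem:pmtsehard}): the input graph is attached to the terminals via a bridge~$\{w,t\}$, and $n-2$ chains of pairwise \emph{distinct} lengths~$4,\dots,n+1$ occupy that bridge at time steps~$5,\dots,n+2$, so the one remaining path must survive inside the copy of~$G$ for exactly~$n$ steps --- which, for a simple path in a cubic graph, is possible only along a Hamiltonian cycle. Only a single path's behavior must be controlled, so no consistency among gadget times ever arises. Your peripheral ideas do match the paper's: membership in \NP{} via \cref{lem:continNP}, lifting $k=0$ to any constant~$k$ by a $k$-chain appended at~$s$, obtaining the directed case by orienting the construction (the paper replaces edges of~$H$ by anti-parallel arcs), and trading constant~$k$ for constant degree by replacing the high-degree hub with balanced binary trees (\cref{constr:pmtsehardconstdeg}, where the cost is exactly~$\varn-2$ shared edges and where the paper must also prove, via a tree-separator argument, that no path can use two leaf-to-leaf chains --- the analogue of the ``no cheating'' argument your sketch omits). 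But the heart of your reduction, the per-clause two-capacity bottleneck with a common contested time step, fails as described.
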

In the proof of~\cref{thm!pmtsehard}, we reduce from the following NP-complete~\cite{GareyJT76} problem (a cubic graph is a graph where every vertex has degree exactly three):

\decprob{Planar Cubic Hamiltonian Cycle (PCHC)}
  {An undirected, planar, cubic graph $G$.}
  {Is there a cycle in $G$ that visits each vertex exactly once?}
  Roughly, the instance of \pmtseAcr obtained in the reduction consists of the original graph~$G$ connected to the terminals~$s, t$ via a bridge (see \cref{fig!pmtsehard}).
  We ask for constructing roughly $n$ paths connecting the terminals, where $n$ is the number of vertices in the input graph of PCHC.
  All but one of these paths will use the bridge to~$t$ in the constructed graph for~$n$ time steps in total, each in a different time step. 
  Thus, this bridge is occupied for roughly $n$ time steps, and the final path is forced to stay in the input graph of PCHC for $n$ time steps.
  For a path, this is only possible by visiting each of the $n$ vertices in the graph exactly once, and hence it corresponds to a \hcycle{}.

The reduction to prove \cref{thm!pmtsehard} uses the following \cref{constr:pmtsehard}.

\begin{constr}
 \label{constr:pmtsehard}
 Let $G=(V,E)$ be an undirected, planar, cubic graph with $n=|V|$.
 Construct in time polynomial in the size of~$G$ an undirected planar graph~$G'$ as follows (refer to \cref{fig!pmtsehard}\iflong{} for an illustration of the constructed graph\fi{}).
 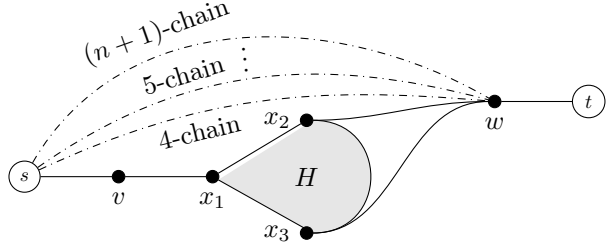
\begin{figure}[t!]
  \centering
 \begin{tikzpicture}

  \tikzstyle{xnode}=[circle, fill, scale=1/2, draw];
  \def\xsc{1.25}
  \node (s) at (0*\xsc,-1)[circle, scale=3/4, draw]{$s$};
  \node (v) at (1*\xsc,-1)[xnode,label=-90:{$v$}]{};

  \node (x1) at (2*\xsc,-1)[xnode,label=-90:{$x_1$}]{};
  \node (x2) at (3*\xsc,-0.25)[xnode,label=180:{$x_2$}]{};
  \node (x3) at (3*\xsc,-1.75)[xnode,label=180:{$x_3$}]{};

  \node (w) at (5*\xsc,0)[xnode,label=-90:{$w$}]{};
  \node (t) at (6*\xsc,0)[circle, scale=3/4, draw]{$t$};

  \draw[dashdotted] (s) to [out=25, in=175] node[pos=.35,below,sloped] {$4$-chain}(w);
  \draw[dashdotted] (s) to [out=35, in=165] node[pos=.35,above,sloped] {$5$-chain}(w);
  \path (s) to [out=50, in=155]  node[sloped] {$\vdots$} (w);
  \draw[dashdotted] (s) to [out=60, in=150]  node[pos=.35,above,sloped] {$(n+1)$-chain} (w);

  \begin{pgfonlayer}{background}
    \draw[fill=gray!20, draw] (x1) to (x2) to [out=0,in=0,looseness=1.75](x3) to (x1);
  \end{pgfonlayer}
  \node (Gh) at (3*\xsc,-1)[]{$H$};

  \draw (x2) to [out=0,in=180](w);
  \draw (x3) to [out=0,in=180](w);

  \draw (w) to (t);
  \draw (s) to (v);
  \draw (v) to (x1);

  \end{tikzpicture}
   \caption{Graph~$G'$ obtained in~\cref{constr:pmtsehard}. The gray part represents the graph~$H$. Dashed lines represent chains.}\label{fig!pmtsehard}
 \end{figure}
 Let initially~$G'$ be the empty graph.
 Add a copy of~$G$ to~$G'$.
 Denote the copy of~$G$ in~$G'$ by~$H$.
 Next, add the new vertices~$s,t,v,w$ to~$G$.
 Connect~$s$ with~$v$, and~$w$ with~$t$ by an edge.
 For each~$m\in\{4,5,\ldots,n+1\}$, add an $m$-chain connecting~$s$ with~$w$.
 Next, consider a fixed plane embedding~$\phi(G)$ of~$G$.
 Let~$x_1$ denote a vertex incident to the outer face in~$\phi(G)$.
 Then, there are two neighbors~$x_2$ and~$x_3$ of~$x_1$ also incident to the outer face in~$\phi(G)$.
 Add the edges~$\{v,x_1\}$,~$\{x_2,w\}$ and~$\{x_3,w\}$ to~$G'$ completing the construction of~$G'$.
 We remark that~$G'$ is planar as it allows a plane embedding (see \cref{fig!pmtsehard}) using $\phi$ as an embedding of~$H$.
 \qed
\end{constr}

\begin{lemma}
 \label{lem:pmtsehard}
 Let $G$ and~$G'$ be as in~\cref{constr:pmtsehard}.
 Then~$G$ admits a \hcycle{} if and only if~$G'$ allows for at least~$n-1$ $s$-$t$~paths with no \shared edge.
\end{lemma}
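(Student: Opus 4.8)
The plan is to exploit the rigidity of $G'$ near the terminals, which pins down both the number and the lengths of the paths. I would start with three structural facts. First, $t$ has only the single incident edge $\{w,t\}$, so every $s$-$t$ path ends with this edge and uses it at the time step equal to its own length; hence any family of $s$-$t$ paths with no \shared edge consists of paths of pairwise distinct lengths. Second, $s$ has degree exactly $n-1$ (the edge $\{s,v\}$ together with the first edges of the $n-2$ chains), and the first edge of every $s$-$t$ path is used at time step~$1$, so no two paths may share it; thus at most $n-1$ paths exist, and demanding $n-1$ of them forces every edge at $s$ to be used by exactly one path. Third, $v$ and the chain interiors are rigid: all internal chain vertices and $v$ have degree~$2$, so a path entering the $m$-chain is forced to traverse it to $w$ and then, since $w$ is now visited and $\{w,t\}$ is the only edge into $t$, to continue directly to $t$, yielding length $m+1$; likewise the unique path using $\{s,v\}$ is forced into $s\to v\to x_1$, must then wander inside $H$, and can only reach $t$ by leaving $H$ through $x_2$ or $x_3$ to $w$ and on to $t$. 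I will call this last path the $H$-path. Consequently the $n-2$ chain paths have the distinct lengths $5,6,\dots,n+2$, while the $H$-path has length $4+\ell_H$, where $\ell_H\le n-1$ is the length of the $H$-internal portion from $x_1$ to $x_2$ or $x_3$.

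\RD{} Given a \hcycle{} $C$ of $G$, I would route the $n-2$ chains as above to obtain chain paths of lengths $5,\dots,n+2$. Because $x_1$ has degree~$3$ with neighbors $x_2,x_3$ and one further vertex, the two cycle edges of $C$ at $x_1$ include at least one of $\{x_1,x_2\},\{x_1,x_3\}$; deleting that edge from $C$ yields a Hamiltonian path of $H$ from $x_1$ to the corresponding $x_i$ (of length $n-1$), which I extend to the $H$-path $s\to v\to x_1\to\cdots\to x_i\to w\to t$ of length $n+3$. One then checks that the only edge shared by any two of these $n-1$ paths is $\{w,t\}$, used at the pairwise distinct time steps $5,\dots,n+2,n+3$, so no edge is \shared.

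\LD{} Suppose $G'$ admits $n-1$ $s$-$t$ paths with no \shared edge. By the second and third facts, exactly $n-2$ of them are chain paths (occupying the $\{w,t\}$-time steps $5,\dots,n+2$) and exactly one is an $H$-path. The distinct-length requirement forces the $H$-path to use $\{w,t\}$ at a time step outside $\{5,\dots,n+2\}$; as its length $4+\ell_H$ lies in $\{5,\dots,n+3\}$, the only possibility is $4+\ell_H=n+3$, i.e.\ $\ell_H=n-1$. Thus the $H$-internal portion visits all $n$ vertices of $H$ and is a Hamiltonian path of $H$ (hence of $G$) from $x_1$ to $x_2$ or $x_3$. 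Since $x_2,x_3$ are neighbors of $x_1$ and such a Hamiltonian path cannot reuse the edge back to $x_1$, adding that edge closes it into a \hcycle{} of $G$.

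The main obstacle is the \LD{} direction, specifically making rigorous that there is exactly one $H$-path and that its length is forced up to $n+3$. This hinges entirely on the deliberate choice of chain lengths $4,\dots,n+1$: they occupy precisely the $\{w,t\}$-time steps that leave the single long slot $n+3$ as the only room for the $H$-path, and this slot is attainable only by a Hamiltonian traversal of $H$. Care is also needed to exclude degenerate routings — a chain path re-entering $H$ from $w$, or the $H$-path leaving and re-entering $H$ — which the degree-$2$ rigidity of $v$ and of the chain interiors, together with the uniqueness of $\{w,t\}$ as the only edge into $t$, precludes.
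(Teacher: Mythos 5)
Your proposal is correct and follows essentially the same approach as the paper's proof: the degree-$(n-1)$ bottleneck at $s$ assigns exactly one path per edge, the chains force the edge $\{w,t\}$ to be occupied at time steps $5,\ldots,n+2$, and the remaining path through $H$ is thereby forced to length exactly $n+3$, i.e., a Hamiltonian traversal of $H$ from $x_1$ to $x_2$ or $x_3$, closed into a cycle by the edge $\{x_1,x_2\}$ or $\{x_1,x_3\}$. Your preliminary observation that the pendant edge $\{w,t\}$ forces pairwise distinct path lengths, and your explicit exclusion of degenerate routings via the degree-$2$ rigidity, are slightly more systematic packagings of arguments the paper makes inline, but the substance is identical.
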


\begin{proof}
 \LD{} %
 Let $\calP$ denote a set of $n-1$ $s$-$t$~paths in $G'$ with no \shared edge.
 Note that the degree of $s$ is equal to $n-1$.
 As no two paths in~$\calP$ \share any edge in~$G'$, each path in~$\calP$ uses a different edge incident to~$s$.
 This implies that $n-2$ paths in $\calP$ uniquely contain each of the chains connecting $s$ with $w$, and one path~$P \in \calP$ contains the edge $\{s,v\}$.
 Note that each of the $n-2$ paths contain the vertex~$w$ at most once, and since they contain the chains connecting~$s$ with~$w$, the edge~$\{w,t\}$ appears at the time~steps~$\{5,6,\ldots,n+2\}$ in these $n - 2$ paths~$\calP$. 
 Hence, the path~$P$ has to contain the edge~$\{w,t\}$ at a time~step smaller than five or larger than~$n+2$. 
 Observe that, by construction, the shortest path between $s$ and $w$ is of length 4 and, thus, $P$ cannot contain the edge~$\{w,t\}$ on any time~step smaller than five.
 Hence, $P$ has to contain the edge at time~step at least $n+3$. 
 Since the distance between $s$ and $x_1$ is two, and the distance from $x_2$, $x_3$ to $w$ is one, $P$ has to visit each vertex in~$H$ exactly once, starting at $x_1$, and ending at one of the two neighbors $x_2$ or $x_3$ of $x_1$.
 Hence, $P$ restricted to $H$ describes a Hamiltonian path in $H$, which can be extended to an Hamiltonian cycle by adding the edge $\{x_1,x_2\}$ in the first or $\{x_1,x_3\}$ in the second case.
 
 \RD{} 
 \looseness=-1 Let $G$ admit a Hamiltonian cycle~$C$.
 Since $C$ contains every vertex in~$G$ exactly once, it contains $x_1$ and its neighbors $x_2$ and $x_3$. 
 Since $C$ forms a cycle in~$G$ and~$G$ is cubic, at least one of the edges $\{x_1,x_2\}$ or $\{x_1,x_3\}$ appears in~$C$.
 Let $C'$ denote an ordering of the vertices in $C$ such that $x_1$ appears first and the neighbor $x\in \{x_2,x_3\}$ of $x_1$ with $\{x_1,x\}$ contained in $C$ appears last.
 We construct $n-1$ $s$-$t$ paths without \sharing an edge. 
 First, we construct $n-2$ $s$-$t$ paths, each containing a different chain connecting $s$ with $w$ and the edge $\{w,t\}$.
 Observe that since the lengths of each chain is unique, no edge (in particular, not $\{w,t\}$) is \shared.
 Finally, we construct the one remaining $s$-$t$ path~$P$ as follows.
 We lead~$P$ from~$s$ to~$x_1$ via~$v$, then following~$C'$ in~$H$ to~$x$, and then from~$x$ to~$t$ via~$w$.
 Observe that~$P$ has length $n+3$ and contains the edge $\{w,t\}$ at time~step~$n+3$.
 Hence, no edge is \shared as the path containing the $(n+1)$-chain contains the edge~$\{w,t\}$ at time~step~$(n+2)$.
 We constructed $n-1$ $s$-$t$~paths in~$G'$ with no \shared edge.
 \ifshort{}\qed\fi{}
\end{proof}

\toappendix
{
  Note that the maximum degree of the graph obtained in the~\cref{constr:pmtsehard} depends on the number of vertices in the input graph.
  In what follows, we give a second construction where the obtained graph has constant maximum degree~$\Delta=4$.

  \begin{constr}
  \label{constr:pmtsehardconstdeg}
  Let $G=(V,E)$ be an undirected, planar, cubic graph with $n=|V|$.
  Construct in time polynomial in the size of~$G$ an undirected planar graph~$G'$ as follows (refer to \cref{fig!pmtsehardconstdeg} for an illustration of the constructed graph).
  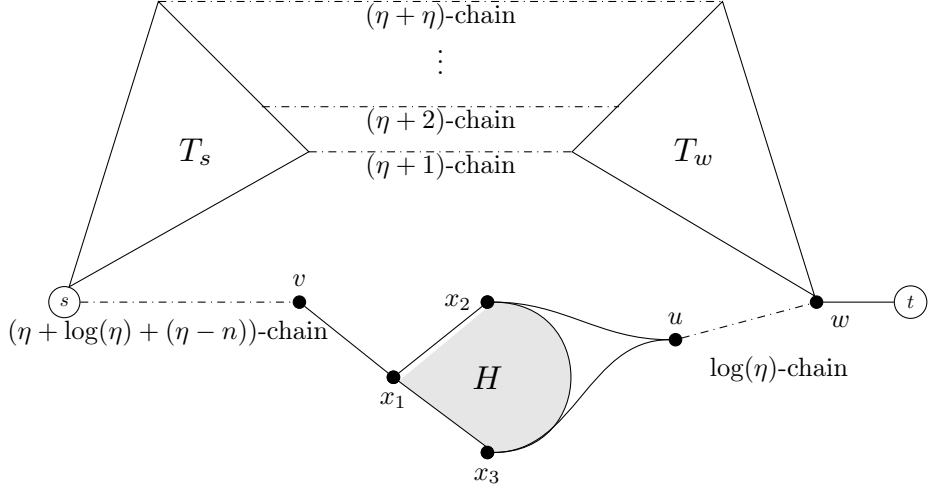
\begin{figure}[t!]
  \centering
    \begin{tikzpicture}

    \tikzstyle{xnode}=[circle, fill, scale=1/2, draw];
    \def\xsc{1.25}
    \node (s) at (-1.5*\xsc,0)[circle, scale=3/4, draw]{$s$};
    \node (v) at (1*\xsc,0)[xnode,label=90:{$v$}]{};

    \node (x1) at (2*\xsc,-1)[xnode,label=-90:{$x_1$}]{};
    \node (x2) at (3*\xsc,0)[xnode,label=180:{$x_2$}]{};
    \node (x3) at (3*\xsc,-2)[xnode,label=-90:{$x_3$}]{};

  \node (wp) at (5*\xsc,0-0.5)[xnode,label=90:{$u$}]{};
    \node (w) at (6.5*\xsc,0)[xnode,label=-45:{$w$}]{};
    \node (t) at (7.5*\xsc,0)[circle, scale=3/4, draw]{$t$};

    \draw[fill=gray!20, draw] (x1) to (x2) to [out=0,in=0,looseness=1.75](x3) to (x1);
    \node (x1) at (2*\xsc,-1)[xnode]{};
    \node (x2) at (3*\xsc,0)[xnode]{};
    \node (x3) at (3*\xsc,-2)[xnode]{};
    \node (Gh) at (3*\xsc,-1)[scale=1.25]{$H$};

    \draw (x2) to [out=0,in=180](wp);
    \draw (x3) to [out=0,in=180](wp);

  \draw[dashdotted] (wp) to (w);
	  \node (txt) at (6.1*\xsc,0-0.9)[]{$\log(\varn)$-chain};
    \draw (w) to (t);
    \draw[dashdotted] (s) to (v);
	  \node (txt) at (-0.4*\xsc,0-0.4)[]{$(\varn+\log(\varn)+(\varn-n))$-chain};
    \draw (v) to (x1);

  \draw (s) -- (-0.5*\xsc,4) -- (1.1*\xsc,2.0) -- cycle;
  \draw (w) -- (5.5*\xsc,4) -- (3.9*\xsc,2.0) -- cycle;
    \node (txt) at (-0.1*\xsc,2)[scale=1.25]{$T_s$};
    \node (txt) at (5.2*\xsc,2)[scale=1.25]{$T_w$};
  \draw[dashdotted]  (1.1*\xsc,2.0)  to (3.9*\xsc,2.0);
  \draw[dashdotted]  (0.6*\xsc,2.6)  to (4.4*\xsc,2.6);
  \draw[dashdotted] (-0.5*\xsc,4) to (5.5*\xsc,4);
    \node (txt) at (2.5*\xsc,1.8)[]{$(\varn+1)$-chain};
    \node (txt) at (2.5*\xsc,2.4)[]{$(\varn+2)$-chain};
    \node (txt) at (2.5*\xsc,3.3)[]{$\vdots$};
    \node (txt) at (2.5*\xsc,3.8)[]{$(\varn+\varn)$-chain};

    \end{tikzpicture}
  \caption{$T_s$ and $T_w$ refer to complete binary trees with $\varn$ leaves.}
  \label{fig!pmtsehardconstdeg}
  \end{figure}
  Let initially~$G'$ be the graph obtained from~\cref{constr:pmtsehard}.
  Remove $s$, $w$, and the chains connecting~$s$ with~$w$ from~$G'$.
  Add a vertex~$u$ to~$G'$ and add the edges~$\{x_2,u\}$ and~$\{x_3,u\}$ to~$G'$.
  Let $\varn$ be the smallest power of two larger than~$n$ (note that $n\leq \varn\leq 2n-2$).
  Add a complete binary tree~$T_s$ with $\varn$~leaves to~$G'$, and denote its root by~$s$.
  Denote the leaves by $a_1,\ldots,a_\varn$, ordered by a post-order traversal on~$T_s$.
  Next add a copy of~$T_s$ to~$G'$, and denote the copy by~$T_w$ and its root by~$w$. 
  If~$a_i$ is a leaf of~$T_s$, denote by $a_i'$ its copy in~$T_w$.
  Next, for each $i\in[\varn]$, connect~$a_i$ and~$a_i'$ via a~$(\varn+i)$-chain.
  Finally, connect~$s$ with~$v$ via an $(\varn+\log(\varn)+(\varn-n))$-chain, connect~$u$ with~$w$ via a $\log(\varn)$-chain, and add the edge~$\{w,t\}$ to~$G'$, which completes the construction of~$G'$.
  Note that~$T_s$ and~$T_w$ allow plane drawings, and the chains connecting the leaves can be aligned as 
  illustrated in~\cref{fig!pmtsehardconstdeg}. 
  It follows that~$G'$ allows for a plane embedding.
  \qed
  \end{constr}
    
  \begin{lemma}%
  \label{lem:pmtsehardconstdeg}
  Let $G$ and~$G'$ be as in~\cref{constr:pmtsehardconstdeg}.
  Then~$G$ admits a \hcycle{} if and only if~$G'$ allows for at least~$\varn+1$ $s$-$t$~paths with at most~$\varn-2$ \shared edges.
  \end{lemma}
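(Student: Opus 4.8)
The plan is to follow the same high-level idea as the proof of~\cref{lem:pmtsehard}: all but one of the~$\varn+1$ paths will occupy the unique edge~$\{w,t\}$ incident to~$t$ during a block of consecutive time steps, which forces the remaining path to linger inside~$H$ for about~$n$ steps; a simple path can do this only by traversing a Hamiltonian path of~$H$. The only change is that the fan of chains of distinct lengths leaving~$s$ in~\cref{constr:pmtsehard} is now replaced by the two binary trees~$T_s,T_w$ (each perfect, since~$\varn$ is a power of two) together with the~$\varn$ chains of pairwise distinct lengths~$\varn+1,\dots,2\varn$; this keeps the maximum degree equal to~four while reproducing the timing behaviour. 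Throughout I write~$D:=\log(\varn)$ for the common depth of~$T_s$ and~$T_w$ and I record, for each path, the time step at which it uses~$\{w,t\}$; since~$t$ has degree one, every $s$-$t$~path ends with this edge, so these time steps control all \sharing at~$t$.

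For the direction from a \hcycle{} to the paths I start from a Hamiltonian cycle~$C$ of~$G=H$ and, as in~\cref{lem:pmtsehard}, note that~$C$ uses at least one of~$\{x_1,x_2\},\{x_1,x_3\}$, say~$\{x_1,x\}$; deleting it gives a Hamiltonian path~$Q$ of~$H$ from~$x_1$ to~$x$. I then build~$\varn$ \emph{tree paths}, one per leaf: the $i$-th descends from~$s$ to the leaf~$a_i$ in~$T_s$, follows the $(\varn+i)$-chain to~$a_i'$, ascends~$T_w$ to~$w$, and ends at~$t$. A short calculation shows that tree path~$i$ uses~$\{w,t\}$ at time step~$\varn+i+2D+1$, so these~$\varn$ time steps form a block of consecutive integers. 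The one remaining \emph{long path} runs from~$s$ along the $(\varn+D+(\varn-n))$-chain to~$v$, enters~$H$ at~$x_1$, follows~$Q$ to~$x$, goes to~$u$, up the $D$-chain to~$w$, and to~$t$; since~$Q$ has length~$n-1$ it uses~$\{w,t\}$ at time step~$2\varn+2D+2$, one beyond the block. To count \shared edges: the tree~$T_s$ is descended synchronously by all tree paths, so its~$\varn-2$ internal edges are \shared and its~$\varn$ leaf edges are not; the staggering of the chains makes every pair of tree paths meet the edges of~$T_w$ at different time steps, and each chain as well as~$\{w,t\}$ is used at pairwise distinct time steps. Hence exactly~$\varn-2$ edges are \shared.

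The converse is the crux. Assume~$G'$ admits~$\varn+1$ $s$-$t$~paths \sharing at most~$\varn-2$ edges. I first pin down their coarse shape by budget arguments: two paths that both enter the $s$-$v$ chain, or that both reach the same leaf~$a_i$, would traverse a whole chain of length at least~$\varn$ synchronously and thus \share more than~$\varn-2$ edges; hence at most one path uses the $s$-$v$ chain and at most one path uses each leaf. As there are~$\varn$ leaves and~$\varn+1$ paths, exactly one path takes the $s$-$v$ chain and the other~$\varn$ descend to the~$\varn$ distinct leaves, each then being forced along its chain into~$T_w$ and monotonically up to~$w$ (descending again would re-enter a chain already occupied by another path). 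Consequently every edge of~$T_s$ is used synchronously, contributing exactly~$\varn-2$ \shared edges and thereby \emph{exhausting the budget}; therefore no other edge may be \shared.

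In particular~$\{w,t\}$, used by all~$\varn+1$ paths, must be used at~$\varn+1$ pairwise distinct time steps. The~$\varn$ tree paths occupy precisely the block~$\{\varn+2D+2,\dots,2\varn+2D+1\}$, so the long path must hit~$\{w,t\}$ outside it. Writing~$\ell$ for the length of the long path's excursion through~$H$ from~$x_1$ to some~$x\in\{x_2,x_3\}$, a direct computation gives that it uses~$\{w,t\}$ at time step~$2\varn+2D-n+3+\ell$, which always lies at or above the bottom of the block; escaping the block thus forces~$\ell\ge n-1$, while a simple path in the $n$-vertex graph~$H$ has~$\ell\le n-1$. Hence~$\ell=n-1$, the excursion is a Hamiltonian path of~$H$ from~$x_1$ to~$x$, and adding~$\{x_1,x\}$ closes it into a \hcycle{}. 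I expect the main obstacle to be exactly this last piece of bookkeeping: one must check that the chain lengths are calibrated so that the long path can dodge the consecutive block only by visiting every vertex of~$H$, and that the pairwise distinct chain lengths simultaneously prevent any \sharing inside~$T_w$, so that the entire budget is forced to be spent within~$T_s$.
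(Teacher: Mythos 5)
Your forward direction and the endgame of your backward direction (budget exhaustion on the internal edges of~$T_s$, the consecutive block of time steps on~$\{w,t\}$, and the calculation forcing~$\ell=n-1$) match the paper's proof. The genuine gap is the parenthetical with which you force the~$\varn$ tree paths to ascend~$T_w$ monotonically: ``descending again would re-enter a chain already occupied by another path.'' This does not yield a contradiction, because \sharing is time-wise and direction-sensitive. Suppose a tree path, having crossed from leaf~$a_i$ to~$a_i'$, ascends in~$T_w$ to an ancestor at height~$h$, descends to another leaf~$a_j'$, and traverses the $(\varn+j)$-chain back to~$a_j$. The owner of that chain uses its $m$-th edge (counted from~$a_j$) at time step~$\log(\varn)+m$, while the wanderer traverses the chain in the \emph{opposite} direction, using that edge at time step~$\log(\varn)+\varn+i+2h+(\varn+j-m+1)$; a collision requires $2m=2\varn+i+j+2h+1$, which has no integer solution whenever~$i+j$ is even. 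So such a detour can avoid \sharing any chain edge; re-entering~$T_s$ afterwards only re-uses internal $T_s$~edges that are already \shared (which costs nothing, since~$k$ counts \shared edges, not \sharing events) and uses leaf edges of~$T_s$ at time steps different from their owners'. Hence the budget alone does not rule out paths that weave back and forth through several chains, and the claim ``monotonically up to~$w$'' --- on which your block~$\{\varn+2\log(\varn)+2,\ldots,2\varn+2\log(\varn)+1\}$ rests --- is unsupported (and the mechanism you invoke for it is, for suitable parities, simply false).

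The paper closes exactly this hole with a different mechanism: vertex-disjointness of a simple path combined with a separator. If a path uses two chains, say it first crosses via~$(a,a')$ and later returns via~$(b,b')$, then it must visit, in this order, $r$~(the root of the minimal subtree of~$T_s$ containing~$a$ and~$b$), $a$, $a'$, $r'$~(the copy of~$r$ in~$T_w$), $b'$, $b$; since $\{r,r'\}$ is a $b$-$t$~separator in~$G'$ and the path has already consumed both~$r$ and~$r'$, it can never reach~$t$ from~$b$ --- contradiction. You need this (or an equivalent) argument to justify that each tree path uses exactly one chain and then ascends directly to~$w$. Your closing worry about calibrating the chain lengths is not the real obstacle; those computations in your proposal are correct.
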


  {
  \begin{proof}
  \RD{}
  Let~$G$ admit a \hcycle{}~$C$.
  As discussed in the proof of~\cref{lem:pmtsehard}, there is an ordering~$C'$ of the vertices in~$C$ such that~$x_1$ appears first and $x\in\{x_2,x_3\}$ appears last in~$C'$.
  We construct~$\varn$ $s$-$t$~paths as follows.
  We route them from $s$ in $T_s$ to the leaves of $T_s$ in such a way that each path contains a different leaf of~$T_s$.
  Herein, $\varn-2$ edges are \shared.
  Next, route each of them via the chain connecting the leaf to the corresponding leaf in~$T_w$, then to~$w$, and finally to~$t$.
  In this part, no edge is \shared, as the lengths of the chains connecting the leaves of~$T_s$ and~$T_w$ are pairwise different.
  Hence, the $\varn$ $s$-$t$~paths contain the edge~$\{w,t\}$ at the time steps~$2\log(\varn)+\varn+i+1$ for each~$i\in[\varn]$.
  We construct the one remaining $s$\nobreakdash-$t$~path~$P$ as follows.
  The path~$P$ contains the chain connecting $s$ with $v$, the edge~$\{v,x_1\}$.
  Then~$P$ follows~$C'$ in~$H$ to $x\in\{x_2,x_3\}$, via the edge~$\{x,u\}$ to~$u$, via the chain connecting~$u$ with~$w$ to~$w$, and finally to~$t$ via the edge~$\{w,t\}$.  
  Observe that $P$ contains the edge~$\{w,t\}$ at the time step~$2\varn+2\log(\varn)+2$, and hence no \sharing any further edge in~$G'$.
  
  \LD{}
  Let~$\calP$ be a set of $\varn+1$ $s$-$t$~paths in~$G'$ \sharing at most~$k:=\varn-2$ edges.
  First observe that no two paths contain the chain connecting~$s$ with~$v$, as otherwise more that $k$ edges are \shared.
  Hence, at most one $s$-$t$~path leaves~$s$ via the chain to~$v$.
  It follows that at least~$\varn$ paths leave~$s$ via the edges in $T_s$. 
  By the definition of paths, observe that each of $s$-$t$~paths arrive at a leaf of~$T_s$ at same time step.
  Suppose at least two $s$-$t$ paths contain the same leaf of~$T_s$.
  As each leaf is of degree two, the $s$-$t$ paths follow the chain towards a leaf of~$T_w$ simultaneously.
  This introduces at least $\varn+1>k$ \shared edges, 
  contradicting the choice of~$\calP$.
  It follows that exactly~$\varn$ $s$-$t$~paths leave~$s$ via~$T_s$ (denote the set by~$\calP'$), and they arrive each at a different leaf of~$T_s$ at time~step~$\log(\varn)$.
  Moreover, by construction, each path in~$\calP'$ arrives at a different leaf of $T_w$ at the time~steps~$\log(\varn)+\varn+i+1$ for every~$i\in[\varn]$.
  
  We next discuss why no path in~$\calP'$ contains more than one chain connecting a pair~$(a,a')$ of leaves, where~$a$ and~$a'$ are leaves of~$T_s$ and~$T_w$, respectively.
  Assume that there is a path~$P'\in\calP'$ containing at least two chains connecting the pairs~$(a,a')$ and~$(b,b')$ of leaves, where~$a,b$ and~$a',b'$ are leaves of~$T_s$ and~$T_w$, respectively, and vertex~$a$ appears at smallest time~step over all such leaves of~$T_s$ in~$P'$ and $b'$ appears at smallest time~step over all such leaves of~$T_w$ in~$P'$ (recall that~$P'$ must contain a leaf of~$T_s$ at smaller time~step than every leaf in~$T_w$).
  By construction, $a'$ and $b'$ are the copies of~$a$ and~$b$ in~$T_w$.
  Let~$r$ denote the vertex in~$T_s$ such that $r$ is the root of the subtree of minimum height in~$T_s$ containing~$a$ and~$b$ as leaves. 
  Let $r'$ denote its copy in~$T_w$.
  Observe that by construction, $r'$~is the root of the subtree of minimum height in~$T_w$ containing~$a'$ and~$b'$ as leaves.
  As $P'$ starts at vertex~$s$, and the path from $s$ to $a$ in~$T_s$ is unique, $P'$ contains the vertex~$r$ at smallest time step among the vertices in~$X:=\{r,a,a',r',b',b\}$.
  As vertex~$a$ appears at smallest time step over all leaves of~$T_s$ and $a$ is of degree two, $P'$ contains the vertices~$a$ and $a'$ at second and third smallest time step, respectively, among the vertices in~$X$.
  As in any tree, the unique path between every two leaves contains the root of the subtree of minimum height containing the leaves, $P'$ contains the vertex~$r'$ at fourth smallest time step among the vertices in~$X$. 
  Finally, as vertex~$b'$ appears at smallest time step over all leaves of~$T_w$ and $b'$ is of degree two, $P'$ contains the vertices~$b'$ and $b$ at fifth and sixth smallest time steps, respectively, among the vertices in~$X$.
  In summary, the vertices in~$X$ appear in~$P'$ in the order~$(r,a,a',r',b',b)$.
  Now, observe that~$\{r,r'\}$ forms a $b$-$t$ separator in~$G'$, that is, there is no $b$-$t$~path in~$G'-\{r,r'\}$.
  As~$P'$ contains~$r$ and~$r'$ at smaller time steps than~$b$, $P'$ contains a vertex different to~$t$ at last time~step.
  This contradict the fact that $P'$ is an $s$-$t$~path in $G'$.
  It follows that no path in~$\calP'$ contains more than one chain connecting a pair consisting of leaf of~$T_s$ and a leaf of~$T_w$.

  It follows that the paths in~$\calP'$ contain the edge~$\{w,t\}$ at the time steps~$2\log(\varn)+\varn+i+1$ for each~$i\in[\varn]$.
  Hence, the remaining $s$-$t$~path containing the chain connecting~$s$ with~$v$, denoted by~$P$, has to contain the edge~$\{w,t\}$ at time~step at most~$2\log(\varn)+\varn+1$ or at 
  least~$2\log(\varn)+2\varn+2$.
  At the earliest $P$ can contain the edge~$\{w,t\}$ at time~step~$2\log(\varn)+\varn+(\varn-n)+3> 2\log(\varn)+\varn+1$, and thus, path~$P$ has to contain edge~$\{w,t\}$ at 
  time~step~$2\log(\varn)+2\varn+2$.
  This is only possible if~$P$ forms a path in~$H$ that visits each vertex in~$H$, starting at~$x_1$ and ending at vertex~$x\in\{x_2,x_3\}$. 
  As the edge~$\{x_1,x\}$ is contained in~$H$, it follows that~$P$ restricted to~$H$ forms a \hcycle{} in~$G$.
  \ifshort{}\qed\fi{}
  \end{proof}
  }%
}

\ifshort{}
\noindent The remaining proof of \cref{thm!pmtsehard} is deferred to \cref{appsec:sec:pmtse}. We remark that the statement in \cref{thm!pmtsehard} for constant~$k$ follows from~\cref{lem:pmtsehard}.

\fi{}
\appendixproof{thm!pmtsehard}
{
  \begin{proof}[\iflong{}Proof of~\cref{thm!pmtsehard}\else{}\cref{thm!pmtsehard}\fi{}]
    We provide a many-one reduction from \textsc{PCHC} to \pmtseAcr on undirected graph via \cref{constr:pmtsehard} (for constant number~$k$ of shared edges) on the one hand, and~\cref{constr:pmtsehardconstdeg} (for constant maximum degree~$\Delta$) on the other.
  Let $(G)$ be an instance of~\textsc{PCHC} with $n=|V(G)|$.
  
  \smallskip\noindent\emph{Via \cref{constr:pmtsehard}.}
  Let $(G',s,t,p,0)$ be an instance of~\pmtseAcr where $G'$ is obtained from~$G$ by applying~\cref{constr:pmtsehard} and $p=n-1$.
  Note that~$(G',s,t,p,0)$ is constructed in polynomial time and, by~\cref{lem:pmtsehard}, $G$ is a yes-instance of~\textsc{PCHC} if and only if~$(G',s,t,p,0)$ is a yes-instance of~\pmtseAcr.
  
  \emph{The case of constant~$k>0$.}
  Reduce $(G',s,t,p,0)$ to an equivalent instance $(G_k',s',t,p,k)$ of~\pmtseAcr with $k>0$ as follows.
  Let~$G_k'$ denote the graph obtained from~$G'$ by the following modification: Add a chain of length~$k$ to~$G'$, and identify one endpoint with~$s$ and denote by~$s'$ the other endpoint.
  Set~$s'$ as the new source.
  Observe that any $s'$-$t$ path in~$G_k'$ contains the $k$-chain appended on~$s$, and hence, any solution introduces exactly~$k$ \shared edges.
  
  \emph{The directed case.}
  Direct the edges in $G'$ as follows.
  Direct each chain connecting~$s$ with~$w$ from~$s$ towards~$t$. (In the case of $k>0$, also direct the chain from~$s'$ towards~$s$.)
  Direct the edges~$\{s,v\}$, $\{v,x_1\}$, $\{x_2,w\}$, $\{x_3,w\}$, and $\{w,t\}$ as $(s,v)$, $(v,x_1)$, $(x_2,w)$, $(x_3,w)$, and $(w,t)$.
  Finally, replace each edge~$\{a,b\}$ in~$H$ by two (anti-parallel) arcs $(a,b),(b,a)$ to obtain the directed variant of~$H$.
  The correctness follows from the fact that we consider paths that are not allowed to contain vertices more than once.
  Note that the planarity is not destroyed.

  {
      \smallskip\noindent\emph{Via \cref{constr:pmtsehardconstdeg}.}
      Let $(G',s,t,p,k)$ be an instance of~\pmtseAcr where $G'$ is obtained from~$G$ by applying~\cref{constr:pmtsehardconstdeg}, $p=\varn+1$, and $k=\varn-2$.
      Note that~$(G',s,t,p,k)$ is constructed in polynomial time and, by~\cref{lem:pmtsehardconstdeg}, $G$ is a yes-instance of~\textsc{PCHC} if and only if~$(G',s,t,p,k)$ is a yes-instance of~\pmtseAcr.
      
      \emph{The directed case.}
      Direct the edges in $G'$ as follows.
      Direct the edges in $T_s$ from $s$ towards the leaves, and the edges in $T_w$ from the leaves towards~$w$.
      Direct each chain connecting~$T_s$ with~$T_w$ from~$T_s$ towards~$T_w$.
      Direct the edges~$\{v,x_1\}$, $\{x_2,u\}$, $\{x_3,u\}$, and $\{w,t\}$ as~$(v,x_1)$, $(x_2,u)$, $(x_3,u)$, and $(w,t)$.
      Direct the chain connecting~$s$ with~$v$ from~$s$ towards~$v$, and the chain connecting~$u$ with~$w$ from~$u$ towards~$w$.
      Finally, replace each edge~$\{a,b\}$ in~$H$ by two (anti-parallel) arcs $(a,b),(b,a)$ to obtain the directed variant of~$H$.
      The correctness follows from the fact that we consider paths that are not allowed to contain vertices more than once.
      Note that the planarity is not destroyed.
    }
  \ifshort{}\qed\fi{}%
  \end{proof}
}
As the length of every $s$-$t$ path is upper bounded by the number of vertices in the graph, we immediately obtain the following.

\begin{corollary}
 \label{cor:psmtsehard}
 \psmtseAcr{} both on undirected planar and directed planar graphs is \NP-complete, even if~$k\geq0$ is constant or $\Delta\geq4$~is constant.
\end{corollary}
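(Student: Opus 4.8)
The plan is to reuse, essentially verbatim, the reductions underlying \cref{thm!pmtsehard} and to append to each constructed instance a length bound~$\alpha$ that is guaranteed to be non-binding. The single observation driving the argument is that, in any simple $n$-vertex graph, every $s$-$t$~path visits each vertex at most once and therefore has length at most~$n-1$. Consequently, setting the length bound to this trivial upper bound forces the length restriction of \psmtseAcr{} to become vacuous, so that \pmtseAcr{} coincides with the special case of \psmtseAcr{} in which $\alpha:=|V|-1$.

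Concretely, I would start from an instance~$G$ of \textsc{PCHC} and apply \cref{constr:pmtsehard} (for the constant-$k$ case) or \cref{constr:pmtsehardconstdeg} (for the constant-$\Delta$ case) to obtain the graph~$G'$, exactly as in the proof of \cref{thm!pmtsehard}; for the directed variants I would additionally orient the edges as described there. I then output the \psmtseAcr{} instance $(G',s,t,p,k,\alpha)$ with the same~$p$ and~$k$ and with $\alpha:=|V(G')|-1$. Since $\alpha$ equals the maximum possible length of any $s$-$t$~path in~$G'$, every collection of $s$-$t$~paths in~$G'$ automatically meets the length bound, so the feasible solutions of this instance are precisely the feasible solutions of the corresponding \pmtseAcr{} instance. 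The equivalences of \cref{lem:pmtsehard} and \cref{lem:pmtsehardconstdeg} then show that the produced instance is a yes-instance of \psmtseAcr{} if and only if~$G$ admits a \hcycle{}.

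It remains to collect the invariants. The reduction is computable in polynomial time, as it merely augments the previous output by the integer~$\alpha$. The graph~$G'$ is unchanged, so planarity, the constant bound~$k$, and the constant maximum degree~$\Delta=4$ (respectively the directed/undirected distinction) all carry over immediately from \cref{thm!pmtsehard}. Membership in \NP{} follows from \cref{lem:continNP}. I expect no genuine obstacle here: the only point requiring (trivial) verification is that $\alpha=|V(G')|-1$ is indeed non-binding in each of the four cases, which holds because every constructed graph is simple and $s$-$t$~paths therefore never repeat a vertex.
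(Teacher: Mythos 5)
Your proposal is correct and matches the paper's argument exactly: the paper derives \cref{cor:psmtsehard} from \cref{thm!pmtsehard} via the same observation that every $s$-$t$~path has length bounded by the number of vertices, so the length restriction is vacuous and the hardness reductions carry over unchanged. Your only addition is to make the choice $\alpha:=|V(G')|-1$ explicit, which is a harmless elaboration of the paper's one-line justification.
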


\section{Trail-\mtseAcr}
\label{sec:tmtse}
\appendixsection{sec:tmtse}

\looseness=-1 We now show that \tmtseAcr{} has the same computational complexity fingerprint as \pmtseAcr{}.
That is, \tmtseAcr{} (\tsmtseAcr{}) is \NP-complete on undirected and directed planar graphs, even if the number~$k\geq0$ of shared edges (arcs) or the maximum degree~$\Delta\geq 5$ ($\Delta_{\rm i/o}\geq 3$) is constant. The reductions are slightly more involved, because it is harder to force trails to take a certain way.  

\subsection{On Undirected Graphs}
\label{ssec:tmtse:undir}

\iflong{}In this section, we prove the following.\fi{}

\begin{theorem}%
  \label{thm!tmtseundirhard}
 \tmtseAcr on undirected planar graphs is \NP-complete, even if~$k\geq 0$ is constant or~$\Delta\geq 5$ is constant.
\end{theorem}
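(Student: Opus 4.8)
The plan is to reduce from \textsc{PCHC}, exactly as in the path case, and to reuse the global architecture of \cref{constr:pmtsehard}: attach a copy~$H$ of the cubic instance~$G$ to the terminals through the vertices~$v,w$, together with a bundle of $s$-$w$ chains of pairwise distinct lengths and the bridge edge~$\{w,t\}$. The intended behaviour mirrors \cref{lem:pmtsehard}: of the requested \routes, all but one are forced through the distinct-length chains and therefore occupy~$\{w,t\}$ at pairwise distinct time steps, leaving a single ``free'' \route that must linger inside~$H$ for roughly~$n$ time steps before it may cross~$\{w,t\}$. For a path, lingering this long forces a Hamiltonian traversal of~$H$, and the whole point of this section is to recover the same conclusion for trails.

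The crucial difficulty, and the reason the construction must be refined, is that a trail may revisit vertices as long as it repeats no edge. In a cubic graph a pass through an intermediate vertex consumes two of its three incident edges, so each intermediate vertex is still used at most once; however, the \emph{start} vertex~$x_1$ of the free \route can be re-entered through its third incident edge, yielding a ``lollipop'' trail of the target length that is not Hamiltonian. First I would therefore modify~$H$ so that this padding is impossible: I replace each vertex of~$H$ by a small planar gadget containing a single interior bridge edge that any \route passing through the gadget must traverse, so that a \route enters each gadget at most once and running through all~$n$ gadgets has a forced length. The free \route can then reach~$\{w,t\}$ on time only by running through every gadget exactly once, i.e.\ by describing a \hcycle{} of~$G$. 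This gadgetry accounts for the increase of the degree bound to the claimed value~$\Delta\ge 5$.

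With the construction fixed, I would prove a lemma in the style of \cref{lem:pmtsehard}: $G$ admits a \hcycle{} if and only if the constructed graph~$G'$ admits the required number of $s$-$t$~trails \sharing the prescribed number of edges. The forward direction copies the path argument almost verbatim, routing the bundle trails along the distinct-length chains and the free trail along the image of the \hcycle{}. The reverse direction is where the real work lies and is the step I expect to be the main obstacle: I must argue that the free trail, pinned to occupy~$H$ for exactly the right number of steps without repeating an edge, cannot save time by revisiting vertices, and hence corresponds to a genuine \hcycle{}; this is precisely the cheating ruled out by the bridge gadgets. Finally, as in \cref{thm!pmtsehard}, I would obtain the two claimed regimes separately: appending a $k$-chain at~$s$ turns a constant-$k\ge 0$ instance into one that forces exactly~$k$ \shared edges, while replacing the high-degree distribution vertex by a balanced binary tree of chains (as in the bounded-degree path construction) keeps the degree at the constant bound; planarity is preserved throughout, since every added gadget, tree, and chain can be drawn in the outer face as in \cref{fig!pmtsehard}.
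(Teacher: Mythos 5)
You correctly identify the central obstacle (a trail can re-enter the attachment vertex and trace a ``lollipop''), but your proposed fix does not work, and the timing analysis it is supposed to rescue is itself broken. First, the gadget you describe cannot exist: in a connected gadget, deleting a single (bridge) edge leaves at most two components, so among the three ports of a cubic vertex (four at the entry vertex) two ports lie on the same side of the bridge, and a through-passage between them avoids the bridge entirely; hence the property ``every passage traverses the bridge, so each gadget is entered at most once'' is unenforceable. It is also unnecessary for the cubic interior vertices, since a trail passing twice through a degree-3 vertex would need four edge slots --- the only problematic vertex is the entry vertex~$x_1$ (degree four after adding $\{v,x_1\}$), and precisely there your gadget fails: even placing the bridge between the $v$-port and the three $H$-ports, the vertex carrying the $H$-ports has degree four inside the gadget and still admits the return passage within its edge budget. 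Second, even granting single entry per vertex, the chain-bundle schedule of \cref{constr:pmtsehard} is unsound for trails: a lollipop consisting of a cycle of length~$c$ through~$x_1$ followed by a tail of length~$\ell$ with $c+\ell=n-1$ occupies $\{w,t\}$ at time step~$n+3$, which is exactly the first slot left free by the chains, so the reverse direction of your key lemma is false; and even a spanning lollipop with $c+\ell=n$ visits all vertices without yielding a \hcycle{}. So the ``main obstacle'' you flag is a genuine gap, not a technicality.

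The paper resolves it quite differently. It subdivides every edge of~$H$ (so all subdivision vertices have degree two and every closed trail in the resulting graph~$H'$ has even length), and --- crucially --- routes the free trail's entry \emph{and} exit through the \emph{same} vertex~$x$: by the degree bounds, any admissible excursion into~$H'$ is then a single closed trail through~$x$, i.e.\ a cycle, and the parity-based timing forces its length to be exactly~$2n$, which in the subdivided graph means a \hcycle{} of~$G$ (\cref{constr:tmtseundirhard}). The delay mechanism is also replaced: instead of a bundle of distinct-length chains, the paper uses a queue of parallel edges at~$s$ with $2n$ trails in two interleaved groups (the parallel edges are afterwards eliminated by a length-3 subdivision lemma, \cref{lem:doublesubdiv}). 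Finally, the constant-degree regime ($\Delta=5$) is not obtained via binary trees of chains (those appear in the path and directed-trail constructions) but via a separate diamond-chain queue with a second subdivision of~$H'$ and shared-edge budget $k=2n-4$ (\cref{constr:tmtseundirhardconstdeg}). If you want to salvage your architecture, the single most important repair is to make the free trail enter and leave~$H$ at one and the same vertex and to control parity by subdivision; forcing single passages by bridge gadgets alone cannot be made to work.
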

We provide two constructions supporting the two subresults for constants~$k, \Delta$. The reductions are again from \textsc{Planar Cubic Hamiltonian Cycle} (PCHC).

{
  \begin{constr}
  \label{constr:tmtseundirhard}
  Let $G=(V,E)$ be an undirected planar cubic graph with $n=|V|$.
  Construct an undirected planar graph~$G'$ as follows (refer to \cref{fig:undirtrailmtse}\iflong{} for an illustration of the constructed graph\fi{}).
  \begin{figure}[t]\centering
  \begin{tikzpicture}

	\tikzstyle{xnode}=[circle, fill, scale=1/2, draw];
	\def\xsc{0.95}
	\node (s) at (0*\xsc,0)[circle, scale=4/5, draw]{$s$};
	\node (v) at (2*\xsc,-.5)[xnode,label=-20:{$v$}]{};
	\node (x1) at (2*\xsc,0)[xnode,label=-90:{$x$}]{};
	\node (x2) at (2*\xsc,1.5)[scale=0.1]{};
	\node (w) at (4*\xsc,0)[xnode,label=-90:{$w$}]{};
	\node (t) at (6*\xsc,0)[circle, scale=4/5, draw]{$t$};

	\draw[fill=gray!20, draw] (x1) to [out=0,in=0,looseness=1.75](x2) to [out=0+180,in=+180,looseness=1.75](x1);
	\node (x1) at (2*\xsc,0)[xnode]{};
	\node (x2) at (2*\xsc,1.5)[scale=0.1]{};
	\node (Gh) at (2*\xsc,0.75)[]{$H'$};

	\draw[-]  (x1) to (w);
	\draw[-] (w) to (t);
	\draw[-] (s) to (v);
	\draw[-] (s) to (x1);
	\draw[-] (v) to (w);

      \node (l1) at (-0.75*\xsc,0)[xnode]{};
      \draw (s) to[out=-165,in=-35](l1);
      \draw (s) to[out=165,in=35](l1);
      \node (l1) at (-1.5*\xsc,0)[xnode]{};
      \draw (s) to[out=-155,in=-35](l1);
      \draw (s) to[out=155,in=35](l1);
      \node (ld) at (-2.25*\xsc,0)[]{$\ldots$};
      \node (l1) at (-3*\xsc,0)[xnode]{};
      \draw (s) to[out=-145,in=-30](l1);
      \draw (s) to[out=145,in=30](l1);

      \draw[decorate,decoration={brace,amplitude=10pt}] (-3*\xsc,1.4-0.75) -- (-0.5*\xsc,1.4-0.75);
      \node (txt) at (-1.75*\xsc,2.05-0.75)[]{$n-1$};

    \end{tikzpicture}
  \caption{Graph~$G'$ obtained in~\cref{constr:tmtseundirhard}. The gray part represents the graph~$H'$.}
  \label{fig:undirtrailmtse} 
  \end{figure}
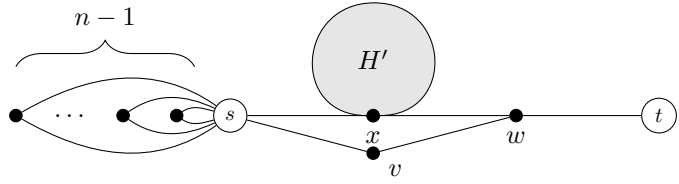
  Initially, let $G'$ be the empty graph.
  Add a copy of~$G$ to~$G'$ and denote the copy by~$H$.
  Subdivide each edge in~$H$ and denote the resulting graph~$H'$.
  Note that $H'$ is still planar. 
  Consider a plane embedding~$\phi(H')$ of~$H'$ and and let~$x\in V(H')$ be a vertex incident to the outer face in the embedding.
  Next, add the vertex set~$\{s,v,w,t\}$ to~$G$.
  Add the edges~$\{s,x\}$, $\{s,v\}$, $\{v,w\}$, and~$\{w,t\}$ to~$G$.
  Finally, add $n-1$ vertices~$B=\{b_1,\ldots,b_{n-1}\}$ to~$G$ and connect each of them with $s$ by two edges (in the following, we distinguish these edges as~$\{s,b_i\}_1$ and~$\{s,b_i\}_2$, for each~$i\in[n-1]$).
  Note that the graph is planar (see \cref{fig:undirtrailmtse} for an embedding, where $H'$ is embedded as~$\phi(H')$) but not simple.
  \qed
  \end{constr}

  \begin{lemma}%
  \label{lem:tmtseundirhard}
  Let $G$ and $G'$ as in \cref{constr:tmtseundirhard}.
  Then $G$ admits a \hcycle{} if and only if~$G'$ admits~$2n$ $s$-$t$~trails with no \shared edge.
  \end{lemma}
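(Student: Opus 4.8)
The plan is to prove both directions around a single structural fact about where the trails may spend time, mirroring the path case but accounting for the extra freedom of trails. Throughout I would use that $G'$ is bipartite: a two-colouring that puts $s$ and $w$ into one class (forcing $v,x,t$ and all original vertices of $H'$ into the other, and all midpoints of $H'$ back with $s,w$) shows that every $s$-$w$ walk has even length, so every trail traverses $\{w,t\}$ at an \emph{odd} time step.

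For the backward direction, suppose $G'$ admits $2n$ $s$-$t$~trails with no \shared edge. First I would record the forced skeleton: since $\deg(s)=2n$, the $2n$ trails use all edges at $s$ at time step~$1$; and since $\{w,t\}$ is the only edge incident to $t$, every trail ends with $\{w,t\}$, at $2n$ pairwise distinct (odd) time steps. A short degree argument at $w$ (degree $3$) shows each trail visits $w$ exactly once and leaves it along $\{w,t\}$; consequently $\{x,w\}$ is used only in the direction $x\to w$, the interior of $H'$ is reachable only through $\{s,x\}$, and each trail makes at most one excursion into $H'$, of the shape $s\to x\to(\text{closed trail in }H'\text{ at }x)\to x\to w\to t$. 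Now consider the trail $R$ attaining the largest $\{w,t\}$-time; as the $2n$ times are distinct odd numbers $\ge 3$, this time is at least $4n+1$, so $R$ has length at least $4n+1$. Its only ways to accumulate length are bouncing off the $b_i$ (at most $n-1$ of them, since both edges of a $b_i$ can be used at most once, contributing at most $2(n-1)$ edges) and a single $H'$-excursion of some length $L$; hence $L\ge 2n$. Finally I would count inside the excursion: in a closed trail at $x$ in $H'$ every visited original vertex (degree~$3$) and every visited midpoint (degree~$2$) is passed exactly once, using exactly two incident edges, so $L=2\lvert S\rvert$ where $S$ is the set of visited original vertices; with $\lvert S\rvert\le n$ this forces $\lvert S\rvert=n$ and $L=2n$, i.e. the excursion projects to a closed walk in $G$ meeting all $n$ vertices each exactly once, which is a \hcycle{}.

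For the forward direction, given a \hcycle{}~$C$ I would realise the $2n$ required $\{w,t\}$-times $3,5,\dots,4n+1$ explicitly. For $c=0,\dots,n-1$ I build a ``$v$-trail''~$v_c$ that first bounces off $c$ of the $b_i$ and then runs $s\to v\to w\to t$ (time $2c+3$), and an ``$x$-trail''~$x_c$ that first bounces off $c$ of the $b_i$, then runs $s\to x$, once around the $2n$-edge image of $C$ in $H'$, and $x\to w\to t$ (time $2c+2n+3$); together these hit all of $3,5,\dots,4n+1$. These are genuine trails (no edge is repeated inside any of them), and, away from the bounce phase, the only edges several trails share ($\{s,v\},\{v,w\},\{s,x\},\{x,w\},\{w,t\}$ and the cycle edges) are used at pairwise distinct time steps, because the trails are offset by their differing bounce counts.

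The main obstacle is scheduling the bounces so that no two trails use the same $b_i$-edge at the same time step while no single trail reuses a $b_i$ (the trail constraint). I would resolve this with a simple pairing: in bounce-slot~$m$ (time steps $2m-1,2m$) the still-bouncing trails are exactly the $v_c,x_c$ with $c\ge m$, and I pair $v_c$ with $x_c$ and send them through $b_{c-m+1}$ along its two parallel edges, swapping back to $s$. Distinct values of $c$ use distinct $b$'s within a slot, and trail $v_c$ (resp.\ $x_c$) visits $b_c,b_{c-1},\dots,b_1$ over its slots, so it never repeats an edge; since the bounce phase uses only $b$-edges it never clashes with the main-route edges. Verifying the remaining non-collisions (at $s$ many trails coincide, but only \emph{edge}-time coincidences matter) is then a routine check, completing the construction.
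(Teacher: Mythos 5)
Your proposal is correct and takes essentially the same approach as the paper: the identical two-group forward construction (offset bounces through the parallel $b_i$-edges with the copy-swapping trick, $v$-routes occupying $\{w,t\}$ at odd times $3,\ldots,2n+1$ and $x$-routes at $2n+3,\ldots,4n+1$) and the same backward analysis via forced structure (single visit to $w$ of degree~$3$, entry into $H'$ only through $\{s,x\}$, odd $\{w,t\}$-times by parity, and the degree-$3$/degree-$2$ cap showing a closed trail at $x$ in $H'$ has length $2\lvert S\rvert\leq 2n$). Your one deviation is a modest improvement: instead of the paper's pigeonhole over bounce counts (exactly two trails per count, so the $v$-trails occupy all odd slots up to $2n+1$), you take the trail with the latest $\{w,t\}$-time, which is at least $4n+1$, subtract the bounce budget $2(n-1)$ and the three fixed edges to force an excursion of length at least $2n$ — a tidier way to pin down the Hamiltonian excursion, with the same implicit assumption as the paper that $x$ is a vertex of $H'$ corresponding to a vertex of $G$.
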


   \appendixproof{lem:tmtseundirhard}
  {
    \begin{proof}
    \RD{}
    Let~$G$ admit a \hcycle{}~$C$.
    Observe that~$H'$ allows for a cycle~$C'$ in~$H'$ that contains each vertex corresponding to a vertex in~$H$ exactly once.
    We construct~$2n$ trails in~$G'$ as follows.
    
    We group the trails in two groups.
    The first group of trails first visits some of the vertices~$b_1,\ldots,b_{n-1}$ by each time first using the edge~$\{s,b_i\}_1$, $i\in[n-1]$, and then proceeding to~$t$ via~$v$.
    The second group of trails first visits some of the vertices~$b_1,\ldots,b_{n-1}$ by each time first using the edge~$\{s,b_i\}_2$, $i\in[n-1]$, and then proceeding via~$x$, then following the cycle~$C'$, and finally again via~$x$ towards~$t$.
    Let $T_1^i,\ldots,T_n^i$ denote the trails of group~$i\in\{1,2\}$.
    For each $j<n$, the trail $T^i_j$ first visits the vertices $b_j,\ldots,b_{n-1}$ in that order before proceeding as described above.
    The trails $T_n^i$, $i\in\{1,2\}$, do not contain any of the vertices~$b_1,\ldots, b_{n-1}$, and directly approach~$t$ as described above.
    
    Observe that, within each of the two groups, no two trails \share an edge.
    Between trails of different groups, only edge~$\{w, t\}$ can possibly be \shared.
    Note that any cycle in~$H'$ is of even length. 
    Hence, the trails of group~1 contain the edge~$\{w,t\}$ at each of the time steps~$2j+1$ for every~$j\in[n]$.
    The trails of group~2 contain the edge~$\{w,t\}$ at each of the time steps~$2n+2j+1$ for every~$j\in[n]$.
    Hence, no two trails \share an edge.
    
    \LD{}
    Let $G'$ admit $2n$ $s$-$t$~trails with no \shared edge.
    First note that $s$ has exactly $2n$ incident edges.
    Observe that, for each $\beta = 0, \ldots, |B|$, no more than two trails contain $\beta$ vertices of~$B$, as otherwise any of the edges~$\{s,v\}$ or~$\{s,x\}$ would be \shared.
    By the pigeon hole principle it follows that, for each $\beta = 0, \ldots, |B|$, there are exactly two trails that contain $\beta$ vertices of~$B$.
    Hence, for each even time step, there are two trails leaving $s$ via the edges $\{s,v\}$ and $\{s,x\}$, respectively. 
    Observe that those trails that proceed towards $t$ via~$v$ use the edges~$\{w,t\}$ exactly at the time steps $2j+1$ for every~$j\in[n]$.
    Because each trial in $H'$ that starts and ends at the same vertex has even length, those trials that proceed towards $t$ via~$x$ can use $\{w, t\}$ only at odd time steps.
    Hence, since the edge $\{w,t\}$ is not \shared, the trails proceeding towards~$t$ via~$x$ need to stay in~$H'$ for~$2n$ time steps.
    As $H'-x$ has maximum degree three, no vertex in~$H'$ beside~$x$ is contained more than once in all of these trails.
    As the length between every two vertices in~$H'$ corresponding to vertices in~$H$, it follows that every of these trails visits the vertices in~$H'$ corresponding to the vertices in~$H$.
    It follows that each of these trails forms a Hamiltonian cycle~$C'$ in~$H'$.
    As $C'$ can easily turned into a Hamiltonian cycle~$C$ in~$G$ (consider the sequence when deleting all vertices that do not correspond to a vertex in~$H$), the statement follows.
      \ifshort{}\qed\fi{}
    \end{proof}
  }%
  To deal with the parallel edges in graph~$G'$ in \cref{constr:tmtseundirhard}, we now subdivide edges, maintaining an equivalent statement as in~\cref{lem:tmtseundirhard}.
  \begin{lemma}%
  \label{lem:doublesubdiv}
  \looseness=-1 Let $G$ be an undirected graph (not necessarily simple) \iflong{}with two distinct vertices $s$ and $t$\else{}and $s,t \in V(G)$\fi{}.
  Obtain graph~$G'$ from~$G$ by replacing each edge~$\{u,v\}\in E$ in~$G$ by a path of length three, identifying its endpoints with~$u$ and~$v$.
  Then $G$ admits $p\in \N$ $s$-$t$ trails with no \shared edge if and only if $G'$ admits $p$ $s$-$t$ trails with no \shared edge.
  \end{lemma}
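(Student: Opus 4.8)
The plan is to set up a time-respecting correspondence between $s$-$t$ trails in $G$ and in $G'$ under which the relation of being \shared{} is preserved in both directions. Fix notation by writing $P_e=(u,a_e,b_e,v)$ for the length-three path replacing an edge $e=\{u,v\}$, where $a_e,b_e$ are fresh internal vertices; distinct edges of $G$ (including parallel ones) receive disjoint internal vertices, so the paths $P_e$ are pairwise edge-disjoint. The structural fact I would establish first is that each internal vertex $a_e,b_e$ has degree exactly two in $G'$ and is never a terminal. Hence, whenever a trail in $G'$ uses one edge of $P_e$ it is forced to traverse all three edges of $P_e$ consecutively: upon entering $a_e$ (or $b_e$) it cannot return along the just-used edge, and the vertex offers only one alternative. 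Moreover it does so at most once, since the middle edge $\{a_e,b_e\}$ can be used at most once in a trail. Consequently every $s$-$t$ trail $T'$ in $G'$ canonically decomposes into a concatenation of full traversals of paths $P_{e_1},\ldots,P_{e_\ell}$; contracting each $P_{e_j}$ to $e_j$ yields an $s$-$t$ trail $T$ in $G$, and expanding each edge of an $s$-$t$ trail in $G$ to its path yields an $s$-$t$ trail in $G'$. Because every path-traversal has length three and they are concatenated without gaps, edge $e_j$ occupies time step $j$ in $T$ exactly when the three edges of $P_{e_j}$ occupy the block of time steps $3j-2,3j-1,3j$ in $T'$.

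The heart of the argument is an invariant on the middle edge: a trail uses $e$ at time step $i$ if and only if the corresponding trail uses $\{a_e,b_e\}$ at time step $3i-1$, \emph{regardless of the direction} in which $P_e$ is traversed. This is where the length three matters: for an odd-length subdivision the central edge sits in the central position of its block under either orientation, whereas the two boundary edges merely swap time steps $3i-2$ and $3i$. (For a single subdivision this invariant fails, which is exactly why a path of length three is chosen.)

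With the decomposition and the invariant in hand, both directions are short. For $(\Rightarrow)$, given $p$ $s$-$t$ trails in $G$ that do not \share{} an edge, expand each to a trail in $G'$. If two expanded trails \shared{} an edge $f$ at time step $\tau$, then $f$ lies on some $P_e$ and $\tau$ determines the block index $i=\lceil \tau/3\rceil$ for \emph{both} trails, forcing both original trails to use $e$ at time step $i$, contradicting that they do not \share{} $e$. For $(\Leftarrow)$, given $p$ $s$-$t$ trails in $G'$ that do not \share{} an edge, contract each to a trail in $G$ as above. If two contracted trails \shared{} $e$ at time step $i$, then each underlying $G'$ trail traverses $P_e$ during the block $3i-2,3i-1,3i$, and by the invariant each uses $\{a_e,b_e\}$ at time step $3i-1$; hence the original $G'$ trails \share{} $\{a_e,b_e\}$, a contradiction.

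The main obstacle I anticipate is the $(\Leftarrow)$ direction, and within it two points that must be handled with care: the rigidity claim that a $G'$ trail cannot touch a single edge of $P_e$ without traversing the whole path consecutively (so that the contraction is well defined and produces a genuine trail), and the direction-independence of the middle edge's time step (so that \sharing{} is detected even when two trails run through $P_e$ in opposite orientations during the same block). The forward direction, by contrast, follows cleanly from the block partition of the time axis.
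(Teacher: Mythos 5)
Your proof is correct and takes essentially the same route as the paper's: both decompose every $s$-$t$ trail in $G'$ into consecutive full traversals of the length-three paths replacing the edges of $G$, and both exploit that the middle edge of each such path occupies the central time step of its block regardless of traversal direction, so that \sharing{} transfers in both directions. Your uniform middle-edge invariant merely subsumes the paper's two-case analysis in the backward direction (same direction: all three edges \shared{}; opposite directions: the middle edge \shared{} at the same time step), and your degree-two rigidity argument makes explicit what the paper states as an observation.
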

  \appendixproof{lem:doublesubdiv}
  {
    \begin{proof}
      For each edge~$e\in E(G)$ in $G$ denote by~$P(e)$ the corresponding path of length three in~$G'$ .
      By definition, for all $e,f\in E(G')$ it holds that $e\neq f$ if and only if $P(e)\neq P(f)$.
      
    \RD{}
	Let $\calP=\{P_i\mid i\in [p]\}$ be a set of $p$ $s$-$t$ trails in~$G$ with no \shared edge.
	Each $P_i$ as an edge sequence representation~$P_i=(e_1^i,\ldots,e_{\ell_i}^i)$, where $\ell_i$ is the number of edges in~$P_i$.
	For each~$P_i$, consider the corresponding trail~$P_i'=(P(e_1^i),\ldots,P(e_{\ell_i}^i))$ in~$G'$, and the set~$\calP'=\{P_i'\mid i\in[p]\}$.
	Suppose that two trails~$P_i'$ and~$P_j'$ \share an edge.
	Then the \shared edge is contained in subpaths~$P(e_x^i)$ and $P(e_x^j)$.
	As $P(e_x^i)$ and $P(e_x^j)$ are not edge-disjoint (as they \share an edge), it follows that $e_x^i=e_x^j$, and hence $P_i$ and $P_j$ \share the edge~$e_x^i$ in $G$.
	This contradicts the fact that $\calP=\{P_i\mid i\in [p]\}$ is a set of $p$ $s$-$t$ trails in~$G$ with no \shared edge.
	It follows that~$\calP'$ is a set of $p$ $s$-$t$ trails in~$G'$ with no \shared edge.

    \LD{}
      Let $\calP'=\{P_i'\mid i\in [p]\}$ be a set of $p$ $s$-$t$ trails in~$G'$ with no \shared edge.
      Observe that, by the construction of~$G'$, each $s$-$t$ trail in~$G'$ is composed of paths of length three with endpoints corresponding to vertices in~$G$.
      Hence, for each $i\in[p]$, let $P_i'$ be represented as~$P_i'=(P(e_1^i),\ldots,P(e_{\ell_i}^i))$, where~$3\cdot \ell_i$ is the number of edges in~$P_i'$.
      For each~$P_i'$, consider the corresponding trail~$P_i=(e_1^i,\ldots,e_{\ell_i}^i)$ in~$G$, and the set~$\calP=\{P_i\mid i\in[p]\}$.
      Suppose that two trails~$P_i$ and $P_j$ \share an edge, that is, there is an index~$x$ such that~$e_x^i=e_x^j$.
      It follows that $P(e_x^i)=P(e_x^j)$.
      Let~$e_x^i=\{v,w\}=:e$.
      If both trails~$P_i'$ and~$P_j'$ traverse~$P(e)$ in the same ``direction'', i.e.~either from~$v$ to~$w$ or from~$w$ to~$v$, then $P_i'$ and $P_j'$ \share at least three edges (all edges in~$P(e)$).
      This contradicts the definition of~$\calP'$.
      Consider the case that the trails~$P_i'$ and~$P_j'$ traverse~$P(e)$ in opposite ``directions'', i.e.~one from~$v$ to~$w$ and the other from~$w$ to~$v$.
      As $P(e)$ is of length three, the edge in~$P(e)$ with no endpoint in~$\{v,w\}$ is then used by~$P_i'$ and~$P_j'$ at the same time step, yielding that the edge is \shared.
      This contradicts the definition of~$\calP'$.
      It follows that $\calP=\{P_i\mid i\in[p]\}$ is a set of $p$ $s$-$t$ trails in~$G$ with no \shared edge.
    \ifshort{}\qed\fi{}
    \end{proof}
  }%
}
We now show how to modify \cref{constr:tmtseundirhard} for maximum degree five, giving up, however, a constant upper bound on the number of shared edges.

\begin{constr}
 \label{constr:tmtseundirhardconstdeg}
\looseness=-1 Let $G=(V,E)$ be an undirected planar cubic graph with $n=|V|$.
 Construct an undirected planar graph~$G'$ as follows (see \cref{fig:undirtrailmtseconstdeg}\iflong{} for an illustration of the constructed graph\fi{}).
 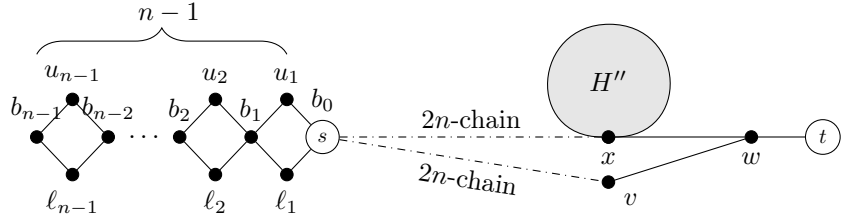
\begin{figure}[t!]
  \centering
  \begin{tikzpicture}

    \usetikzlibrary{arrows}
    \usetikzlibrary{decorations.pathreplacing}

      \tikzstyle{xnode}=[circle, fill, scale=1/2, draw];
      \def\xsc{0.95}
      \node (s) at (0*\xsc,0)[circle, scale=4/5, draw,,label=90:{$b_0$}]{$s$};
      \node (v) at (4*\xsc,-.60)[xnode,label=-20:{$v$}]{};
      \node (x1) at (4*\xsc,0)[xnode,label=-90:{$x$}]{};
      \node (x2) at (4*\xsc,1.5)[scale=0.1]{};
      \node (w) at (6*\xsc,0)[xnode,label=-90:{$w$}]{};
      \node (t) at (7*\xsc,0)[circle, scale=4/5, draw]{$t$};

      \draw[fill=gray!20, draw] (x1) to [out=0,in=0,looseness=1.75](x2) to [out=0+180,in=+180,looseness=1.75](x1);
      \node (x1) at (4*\xsc,0)[xnode]{};
      \node (x2) at (4*\xsc,1.5)[scale=0.1]{};
      \node (Gh) at (4*\xsc,0.75)[]{$H''$};

      \draw[-]  (x1) to (w);
      \draw[-] (w) to (t);
      \draw[dashdotted] (s) to node[below,sloped]{$2n$-chain}(v);
      \draw[dashdotted] (s) to node[above]{$2n$-chain} (x1);
      \draw[-] (v) to (w);

	\node (l01) at (-0.5*\xsc,0.5)[xnode,label=90:{$u_1$}]{};
	\node (l02) at (-0.5*\xsc,-0.5)[xnode,label=-90:{$\ell_1$}]{};
    \node (l1) at (-1*\xsc,0)[xnode,label=90:{$b_1$}]{};
	\node (l11) at (-1.5*\xsc,0.5)[xnode,label=90:{$u_2$}]{};
	\node (l12) at (-1.5*\xsc,-0.5)[xnode,label=-90:{$\ell_2$}]{};
       \node (l2) at (-2*\xsc,0)[xnode,label=90:{$b_2$}]{};
	\node (ln1) at (-3*\xsc,0)[xnode,label=90:{$b_{n-2}$}]{};   
	\node (ln11) at (-3.5*\xsc,0.5)[xnode,label=90:{$u_{n-1}$}]{};   
	\node (ln12) at (-3.5*\xsc,-0.5)[xnode,label=-90:{$\ell_{n-1}$}]{};   
	 \node (ln) at (-4*\xsc,0)[xnode,label=90:{$b_{n-1}$}]{};
	 
		\node (ld) at (-2.5*\xsc,0)[]{$\ldots$};
		\draw (s) -- (l01);
		\draw (s) -- (l02);
		\draw (l1) -- (l01);
		\draw (l1) -- (l02);
		\draw (l1) -- (l11);
		\draw (l1) -- (l12);
		\draw (l2) -- (l11);
		\draw (l2) -- (l12);
		\draw (ln1) -- (ln11);
		\draw (ln1) -- (ln12);
		\draw (ln) -- (ln11);
		\draw (ln) -- (ln12);

    \draw[decorate,decoration={brace,amplitude=10pt}] (-4*\xsc,1.6-0.5) -- (-0.5*\xsc,1.6-0.5);
    \node (txt) at (-2.15*\xsc,2.15-0.5)[]{$n-1$};

  \end{tikzpicture}
  \caption{Graph~$G'$ obtained in~\cref{constr:tmtseundirhardconstdeg}. The gray part represents the graph~$H'$.}
 \label{fig:undirtrailmtseconstdeg} 
 \end{figure}
 Let initially~$G'$ be the graph obtained from~\cref{constr:tmtseundirhard}.
 Subdivide each edge in~$H'$ and denote the resulting graph by~$H''$.
 Observe that the distance in $H''$ between any two vertices in~$V(H'') \cap V(H')$ is divisible by four.
 Next, delete all edges incident with vertex~$s$.
 Connect~$s$ with~$v$ via a $2n$-chain, and connect~$s$ with~$x$ via a~$2n$-chain.
 Connect~$s$ with~$b_1$ via two~$P_2$'s.
 Denote the two vertices on the $P_2$'s by~$\ell_1$ and~$u_1$.
 Finally, for each $i\in[n-2]$, connect~$b_i$ with~$b_{i+1}$ via two~$P_2$'s.
 For each $i\in[n-2]$, denote the two vertices on the $P_2$'s between $b_i$ and $b_{i+1}$ by~$\ell_{i+1}$ and~$u_{i+1}$.
 For an easier notation, we denote vertex~$s$ also by~$b_0$.
 \qed
\end{constr}

\begin{lemma}%
 \label{lem:tmtseundirhardconstdeg}
 Let $G$ and $G'$ as in \cref{constr:tmtseundirhardconstdeg}.
 Then $G$ admits a \hcycle{} if and only if $G'$ has $2n$ $s$-$t$~trails with at most~$2n-4$ \shared edges.
\end{lemma}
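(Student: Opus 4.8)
The plan is to adapt the proof of \cref{lem:tmtseundirhard} to the degree-bounded ladder $b_0=s,b_1,\dots,b_{n-1}$ and the relaxed budget of $2n-4$ \shared edges. As there, I would build $2n$ \routes in two groups of $n$: one group reaching $t$ through $v$ and one through $x$ (and hence through $H''$), and I would use detours into the ladder to put the \routes of a group out of phase in time, so that the two $2n$-chains and the unique sink edge $\{w,t\}$ stay free of \sharing. The genuinely new features are that a detour now costs four steps per rung instead of two, and that the degree-$4$ vertex $s$ forces the \routes to overlap inside the ladder; the whole art is to make these forced overlaps land on as few edges as possible.

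For the direction $(\Rightarrow)$ I would first lift a given \hcycle{} $C$ of $G$ to a closed trail $C''$ of length $4n$ in $H''$ visiting every vertex of $H''$ that comes from $G$ (each edge of $G$ is subdivided twice). For every depth $d\in\{0,\dots,n-1\}$ I create one via-$v$ and one via-$x$ \route. Each such \route first performs a detour that leaves $s$, runs out to $b_d$ along one ladder rail and back to $s$ along the other, returning after exactly $4d$ steps; the via-$v$ \route then takes the $2n$-chain to $v$ followed by $\{v,w\}$ and $\{w,t\}$, while the via-$x$ \route takes the $2n$-chain to $x$, traverses $C''$, and finishes via $\{x,w\}$ and $\{w,t\}$. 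The key idea is to route the via-$x$ detours as the exact \emph{time-reversals} of the via-$v$ detours (out along the lower rail, back along the upper rail). A short parity computation then shows that a forward depth-$d$ detour and the reversed depth-$d$ detour occupy every ladder edge at different time steps, so the two groups never collide with one another. Since the delays $0,4,\dots,4(n-1)$ are pairwise distinct, the chains, the Hamiltonian part and $\{w,t\}$ carry no collision, and all remaining \sharing is confined to the ladder rails.

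For $(\Leftarrow)$ I would start at $s$, which has degree $4$. Any two \routes leaving $s$ along the same $2n$-chain at time step $1$ travel in phase and \share all interior chain edges, which already exceeds the budget; hence at most one \route uses the edge toward $v$ and at most one the edge toward $x$ at time step $1$, so at least $2n-2$ \routes first enter the ladder and must later return to $s$. Two \routes leaving toward $v$ (resp.\ $x$) in phase again \share a whole chain, so the \routes through $v$ have pairwise distinct return times, as do the \routes through $x$. A closed trail from $s$ in the ladder has length a positive multiple of $4$ that is at most $4(n-1)$, giving at most $n$ distinct return times per group; with $2n$ \routes this forces exactly $n$ through each of $v$ and $x$, with delays $0,4,\dots,4(n-1)$. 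Finally, using that distances between $H'$-vertices in $H''$ are divisible by $4$, the via-$v$ \routes occupy $\{w,t\}$ in a fixed residue class up to time $6n-2$; keeping $\{w,t\}$ free of \sharing (the budget being already spent in the ladder) pushes all via-$x$ \routes to strictly later steps, which forces the latest via-$x$ \route to spend a closed trail of length $4n$ inside $H''$. As $H''-x$ has maximum degree three, such a closed trail visits every $G$-vertex of $H''$ exactly once and thus yields a \hcycle{} of $G$.

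The main obstacle is the \sharing{} count in the direction $(\Rightarrow)$. Because all detours necessarily leave $s$ at time step $1$, the \routes of one group that run deep into the ladder \share their common rail prefix, and a naive routing pays this cost on \emph{both} rails, giving on the order of $4n$ \shared edges. Reaching the promised bound $2n-4$ requires arranging the two groups so that their unavoidable prefix collisions coincide on one common set of at most $2(n-2)$ rail edges, and then verifying time step by time step that the returns, the reversed group, and the rung endpoints $b_{n-1}$ contribute nothing further. The symmetric difficulty in the converse is to argue that the budget is tight enough to forbid any via-$x$ \route from shortcutting to $w$ without a length-$4n$ traversal of $H''$. Once this bookkeeping is in place, the Hamiltonicity-forcing and the chain/degree arguments are routine adaptations of \cref{lem:tmtseundirhard}.
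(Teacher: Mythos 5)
Your backward direction is essentially the paper's argument (degree four at $s$, return times that are multiples of four and bounded by $4(n-1)$, pigeonhole forcing exactly one \route per delay through each $2n$-chain, the mod-$4$ distance property of $H''$, occupancy of $\{w,t\}$, and maximum degree three in $H''-x$ forcing a \hcycle{}), so that half is fine. The genuine gap is in the forward direction, and it is exactly the one you flag yourself: with all detours launched at time step~$1$, the via-$v$ group pairwise collides on its common outbound rail prefix, and because you make the via-$x$ detours exact time-reversals, that group collides on the outbound prefix of the \emph{other} rail. These two collision sets are disjoint and each has size $2(n-2)$, so your routing \shares $4(n-2)=4n-8$ edges, twice the budget $2n-4$. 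Worse, the time-reversal device cannot be repaired into the fix you gesture at: your own parity computation, which keeps the two groups from colliding with one another, shows that time-reversed trails traverse every rail edge at opposite parities, so the two groups can never be co-phased on a common prefix and their ``unavoidable prefix collisions'' can never coincide on one set of $2(n-2)$ edges. Since confining the collisions to $2n-4$ edges is the heart of the lemma rather than bookkeeping, the proposal as it stands does not prove the forward direction.

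The paper resolves this by pairing trails \emph{across} the two groups rather than within them: for each depth $i$ it builds $P_i^u$ (which later goes via $v$) and $P_i^\ell$ (which later goes via $x$), and sends the pair out \emph{together, in phase, along the same rail}, deliberately \sharing{} the lower-rail edges---these form the set $X=\{\{b_j,\ell_j\}\mid 1\leq j\leq n-2\}\cup\{\{b_j,\ell_{j+1}\}\mid 0\leq j<n-2\}$ of size $2n-4$, which is \shared{} anyway among all deeper pairs travelling the same prefix at the same time steps. The two members of a pair are separated only at their deepest rung $i$, where $P_i^u$ traverses the rung via $u_i$ first and $\ell_i$ second while $P_i^\ell$ does the opposite; a four-step time-step check shows the rung-$i$ edges are used by the pair at distinct times, and rung $n-1$ is touched only by the single deepest pair, hence never \shared. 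The pair stays co-phased on its way back as well, so any further collisions again land on edges already counted in~$X$. The resulting delays $4i$ are pairwise distinct, after which your analysis of the chains, of $C''$ in $H''$, and of $\{w,t\}$ goes through unchanged. In short: same global architecture, but where you anti-synchronize the two groups, the paper synchronizes one trail of each group per depth so that both groups pay their collision cost on the \emph{same} $2n-4$ lower-rail edges.
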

\appendixproof{lem:tmtseundirhardconstdeg}
{
  \begin{proof}
  \LD{} %
  Let~$G'$ admit~a set~$\calP$ of $2n$ $s$-$t$~trails with at most~$2n-4$ \shared edges.
  At each time step, at most two trails leave~$s$ towards~$v$ and~$x$.
  Otherwise, all the edges in at least one of the $2n$-chains connecting~$s$ with~$v$ and~$s$ with~$x$ are \shared, contradicting the fact that the trails in~$\calP$ \share at most~$2n-4$ edges.
  Note that every $s$-$t$~trail contains vertex~$s$ at the first time~step and at most once more at time~step~$4j+1$, for some~$j\in \N$ (indeed, we will show that~$j\in[n-1]$).
  This follows on the one hand from the fact that~$s$ has degree four and hence every trail can contain~$s$ at most twice, and on the other hand from the fact that for each $i\in[n-1]$, every $s$-$b_i$ path is of even length.
  
  We show that at each time step~$4j+1$, $0\leq j\leq n-1$, exactly one~$s$-$t$~trial leaves~$s$ towards~$v$ and exactly one~$s$-$t$~trial leaves~$s$ towards~$x$. 
  First, observe that $|\{b_i,u_i,\ell_i\mid i\in[n-1]\}|=3(n-1)$ and each trail can contain each vertex in $\{u_i,\ell_i\mid i\in[n-1]\}\cup \{b_{n-1}\}$ at most once (as each vertex in this set is of degree two) and each vertex~$b_i$, $i\in[n-2]$, at most twice (as they are of degree four).
  Hence, any trail starting on~$s$ and returning to~$s$ after visiting the vertices in~$B$ contains at most~$3(n-1)+(n-2)+2=4(n-1)+1$ vertices.
  It follows that every $s$-$t$~trail contains~$s$ at the first time~step and at most once more at time~step~$4j+1$ for some~$j\in[n-1]$.
  As there are $2n$ $s$-$t$~trails and at each time step at most two trails leave~$s$ towards~$v$ and~$x$, together with the pigeon hole principle it follows that exactly one~$s$-$t$~trial leaves~$s$ towards~$v$ and exactly one~$s$-$t$~trial leaves~$s$ towards~$x$ at each time step~$4j+1$, $0\leq j\leq n-1$.
  Moreover, note that each $b_i$, $i\in [n-1]$, appears in at least two $s$-$t$~trails.

  We claim that there are exactly $2n-4$ \shared edges and that every shared edge is incident with a vertex in~$\{u_i,\ell_i\mid i\in[n-1]\}$. %
  This follows from the fact that at least three trails going at the same time from~$b_i$ to~$b_{i+1}$, $0\leq i\leq n-3$, \share at least two edges.
  As four trails contain~$b_{n-2}$ (those four which leave~$s$ towards~$v$ and~$x$ at the time~steps~$4j+1$ with $j\in\{n-2,n-1\}$), it follows that at least $2(n-2)$ edges are \shared.
  Hence, no two trails share an edge after they have left $s$ for $v$ or $x$.
  
  There is a trail~$P\in\calP$ that contains the vertex~$x$ and that contains vertex~$s$ only once at the first time~step, because at each time~step, two trails leave~$s$ for~$v$ or~$x$.
  Observe that vertex~$w$ is contained in the trails containing~$v$ at the time~steps~$4j+2n+2$ for all~$j\in[n-1]\cup\{0\}$ whence edge $\{w, t\}$ is occupied at time steps~$4j + 2n +3$ for all $j \in [n -1] \cup \{0\}$%
  .
  Hence, the edge~$\{w,t\}$ is contained at time~step~$2n+2$ in a trail different to~$P$ and, thus, trail~$P$ contains at least one vertex in~$H''$.
  Furthermore, $P$ can contain~$x$ a second time only at time~steps of the form~$4j+2n+1$, $3\leq j\leq n$, because each path in~$H''$ between two vertices that correspond to vertices in~$G$ has length four.
  However, as mentioned, $\{w, t\}$ is occupied at time steps $4j+2n+3$, $j \in [n - 1]$.
  Hence, $P$ has to stay in~$H''$ for~$4n$ time steps.
  Recall that~$G$ is cubic, and hence no vertex in~$H''-\{x\}$ corresponding to a vertex in~$G$ appears more than once in any trail.
  That is, $P$~follows a cycle in $H''$ containing each vertex corresponding to vertex in~$G$ exactly once.
  It follows that~$G$ admits a \hcycle{}.
  
  \RD{}
  Let~$G$ admit a \hcycle{}~$C$.
  Let~$C'$ be~$C$, ordered such that~$x$ is the first and last vertex in~$C'$.
  Let $C''$ denote the cycle in~$H''$ following the order of the vertices in~$C'$.
  We construct $2n$ $s$-$t$ trails in~$G'$ \sharing at most $2n-4$ edges as follows.
  We denote the trails by~$P_i^x$, $0\leq i \leq n-1$, $x\in\{u,\ell\}$.
  The trails are divided into two groups according to their superscript~$x\in\{u,\ell\}$.
  For $i\geq1$, the trails~$P_i^u$ and~$P_i^\ell$ start with the sequences
  \begin{align*}
    P_i^u: &&(s,\ell_1,b_1,\ldots,\ell_{i-1},b_{i-1},u_{i},b_{i},\ell_{i},b_{i-1},\ell_{i-1},\ldots,b_1,\ell_1,s),\\
    P_i^\ell: &&(s,\ell_1,b_1,\ldots,\ell_{i-1},b_{i-1},\ell_{i},b_{i},u_{i},b_{i-1},\ell_{i-1},\ldots,b_1,\ell_1,s).
  \end{align*}
  Trails $P_0^u$ and $P_0^\ell$ do not visit any vertex in~$B$ and simply start at~$s$.
  Then, for each $i = 0, \ldots, n - 1$, trail $P_i^u$ follows the chain to~$v$, the edge to~$w$, and then to~$t$.
  For each $i = 0, \ldots, n - 1$, trail $P_i^\ell$ follows the chain to~$x$, then the cycle~$C''$ in $H''$, then the edge from~$x$ to~$w$, then to~$t$.
  Observe that trail~$P_i^x$, $x\in\{u,\ell\}$, contains $s$ at time~step one and $4i+1$.
  Hence, $P_i^u$ contains the vertex~$w$ at time~step $4i+2n+2$.
  Moreover, $P_i^\ell$ contains the vertex~$x$ at time~steps~$4i+2n+1$ and~$4i+2n+4n+1$.
  From the latter it follows that~$P_i^\ell$ contains the vertex~$w$ at time~step~$4i+2n+4n+2$.
  Altogether, the edge~$\{w,t\}$ is not \shared by any pair of trails.
  
  Next, we count the number of edges \shared between the two visits of~$s$.
  Denote by $X\subseteq E(G')$ the set~$\{\{b_i,\ell_i\}\mid 1\leq i\leq n-2\}\cup\{\{b_i,\ell_{i+1}\}\mid 0\leq i< n-2\}$.
  Observe that $|X|=n-2+n-2=2(n-2)$.
  We claim that the edges in~$X$ are the only \shared edges by the trails~$P_i^x$, $0\leq i\leq n-1$, $x\in\{u,\ell\}$.
  As~$P_{n-1}^\ell$ and $P_{n-1}^u$ contain the set~$X$ at the same time~steps, every edge in~$X$ is \shared.
  For each $i\geq 1$, the edges~$\{u_i,b_{i-1}\}$ and~$\{u_i,b_i\}$ are only contained in the trails~$P_i^x$, $x\in \{u,\ell\}$.
  Recall that~$P_i^u$ and~$P_i^\ell$ contain~$b_i$ exactly once and at the same time~step. 
  The subsequence around~$b_i$ of $P_i^\ell$ and $P_i^u$ is $(b_{i-1},\ell_i,b_i,u_i,b_{i-1})$ and $(b_{i-1},u_i,b_i,\ell_i,b_{i-1})$, respectively.
  It follows that both edges~$\{u_i,b_{i-1}\}$ and~$\{u_i,b_i\}$ appear at two different time steps in~$P_i^u$ and $P_i^\ell$.
  The same argument holds for the edges~$\{b_{n-2},\ell_{n-1}\}$ and $\{b_{n-1},\ell_{n-1}\}$ as~$P_{n-1}^u$ and~$P_{n-1}^\ell$ are the only trails containing the two edges.
  
  Altogether, it follows that~$X$ is the set of \shared edges of the $s$-$t$ trails~$P_i^x$, $x\in \{u,\ell\}$, and the claim follows.
  Finally, as $|X|=2n-4$, the statement follows.
  \ifshort{}\qed\fi{}
  \end{proof}
}
\ifshort{}
The proof of \cref{thm!tmtseundirhard} then follows from~\cref{lem:tmtseundirhard} and~\cref{lem:tmtseundirhardconstdeg} (\cref{proof:thm!tmtseundirhard}).

\fi{}
\appendixproof{thm!tmtseundirhard}
{
  \begin{proof}[\iflong{}Proof of~\cref{thm!tmtseundirhard}\else{}\cref{thm!tmtseundirhard}\fi{}]
  We provide a many-one reduction from \textsc{Planar Cubic Hamilton Circuit (PCHC)} to \tmtseAcr on undirected graph via \cref{constr:tmtseundirhard} on the one hand, and~\cref{constr:tmtseundirhardconstdeg} on the other.
  Let $(G=(V,E))$ be an instance of PCHC and let $n:=|V|$~vertices.
  
  \smallskip\noindent\emph{Via \cref{constr:tmtseundirhard}.}
  Let $(G',s,t,p,0)$ an instance of \tmtseAcr where $G'$ is obtained from~$G$ by applying \cref{constr:tmtseundirhard} and~$p=2n$.
  Note that $(G',s,t,p,0)$ can be constructed in polynomial time and by \cref{lem:tmtseundirhard}, $(G)$ is a yes-instance of PCHC if and only if $(G',s,t,p,0)$ is a yes-instance of~\tmtseAcr{}.
  However, $G'$ is not simple in general.
  Hence, replace each edge~$\{u,v\}\in E(G')$ in~$G'$ by a path of length three and identify its endpoints with~$u$ and~$v$.
  Denote by~$G''$ the obtained graph.
  Due to~\cref{lem:doublesubdiv}, $(G'',s,t,p,0)$ is a yes-instance of~\tmtseAcr{} if and only if $(G',s,t,p,0)$ is a yes-instance of~\tmtseAcr{}.
  
  The case of constant~$k>0$ works analogously as in the proof of~\cref{thm!pmtsehard}.
  
  \smallskip\noindent\emph{Via \cref{constr:tmtseundirhardconstdeg}.}
  Let $(G',s,t,p,k)$ an instance of \tmtseAcr where $G'$ is obtained from~$G$ by applying \cref{constr:tmtseundirhardconstdeg} and~$p=2n$.
  Instance~$(G',s,t,p,k)$ can be constructed in polynomial time.
  By \cref{lem:tmtseundirhardconstdeg}, $(G)$ is a yes-instance of PCHC if and only if $(G',s,t,p,k)$ is a yes-instance of~\tmtseAcr{}.
    \ifshort{}\qed\fi{}
  \end{proof}
}
As the length of each $s$-$t$ trail is upper bounded by the number of edges in the graph, we immediately obtain the following.

\begin{corollary}
  \label{cor:tsmtsehardundir}
 \tsmtseAcr{} on undirected planar graphs is \NP-complete, even if~$k\geq0$ is constant or~$\Delta\geq 5$ is constant.
\end{corollary}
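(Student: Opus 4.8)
The plan is to obtain the result as an immediate consequence of \cref{thm!tmtseundirhard} by showing that the length restriction can be made vacuous. First I would recall that in any graph a trail uses every edge at most once, so every $s$-$t$ trail has length at most the number of edges of the graph it lives in. Hence, if $G'$ denotes the graph produced by the reduction behind \cref{thm!tmtseundirhard}, then choosing the length bound $\alpha := |E(G')|$ imposes no genuine constraint: a set of $p$ trails \shares at most $k$ edges if and only if it does so \emph{and} each trail additionally has length at most~$\alpha$.

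Concretely, I would reuse the many-one reductions from \textsc{PCHC} given in the proof of \cref{thm!tmtseundirhard} (via \cref{constr:tmtseundirhard} for the constant-$k$ subresult and via \cref{constr:tmtseundirhardconstdeg} for the constant-$\Delta$ subresult), and simply augment the produced \tmtseAcr instance $(G',s,t,p,k)$ to the \tsmtseAcr instance $(G',s,t,p,k,\alpha)$ with $\alpha := |E(G')|$. Since $|E(G')|$ is polynomial in the input size, the result is still a polynomial-time reduction, and planarity, the constant value of~$k$, and the degree bound $\Delta \le 5$ are inherited unchanged from~$G'$. The equivalence of the two instances follows directly from the length observation above together with \cref{lem:tmtseundirhard} and \cref{lem:tmtseundirhardconstdeg}: any family of trails certifying the \tmtseAcr instance is automatically a family of trails of length at most~$\alpha$, and conversely dropping the length bound from a \tsmtseAcr solution leaves a valid \tmtseAcr solution.

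Membership in \NP follows from \cref{lem:continNP}, which already covers the length-restricted trail variant, so combining hardness with membership yields \NP-completeness of \tsmtseAcr on undirected planar graphs even when $k \ge 0$ is constant or $\Delta \ge 5$ is constant. There is essentially no obstacle here; the only point requiring a moment's care is to confirm that the specific trails constructed in the correctness proofs of \cref{lem:tmtseundirhard} and \cref{lem:tmtseundirhardconstdeg} indeed respect the chosen bound~$\alpha$, which is immediate since $\alpha$ equals the total number of edges and no trail can be longer.
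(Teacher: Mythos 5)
Your proposal is correct and matches the paper's argument exactly: the paper derives \cref{cor:tsmtsehardundir} from \cref{thm!tmtseundirhard} via the same one-line observation that every $s$-$t$ trail has length at most the number of edges, so setting~$\alpha$ accordingly makes the length bound vacuous. Your elaboration (choosing $\alpha := |E(G')|$, inheriting planarity, constant~$k$, and $\Delta \le 5$, and invoking \cref{lem:continNP} for \NP-membership) is simply a spelled-out version of what the paper treats as immediate.
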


\subsection{On Directed Graphs}
\label{ssec:tmtse:dir}

We know that \tmtseAcr{} and \tsmtseAcr{} are \NP-complete on undirected graphs, even if the number of \shared edges or the maximum degree is constant.
In what follows, we show that this is also the case for \tmtseAcr{} and \tsmtseAcr{} on directed graphs.

\begin{theorem}%
  \label{thm!tmtsedirhard}
 \tmtseAcr on directed planar graphs is \NP-complete, even if~$k\geq0$ is constant or~$\Deltad\geq 3$ is constant.
\end{theorem}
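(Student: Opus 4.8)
The plan is to reduce from \textsc{Planar Cubic Hamiltonian Cycle} (PCHC), mirroring the undirected result \cref{thm!tmtseundirhard} but supplying two \emph{directed} planar constructions: one that yields a constant number~$k$ of \shared arcs (at the cost of unbounded degree) and one that keeps $\Deltad\le 3$ (at the cost of a nonconstant~$k$). Membership in \NP{} is already established, so only hardness remains. As in the undirected case, each construction attaches a directed copy~$H$ of the cubic input graph~$G$ to the terminals through a short bridge $\cdots\to w\to t$, and the heart of the argument is a timing argument forcing exactly one trail to remain inside~$H$ for precisely the number of time steps needed to trace a \hcycle{}.

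The new ingredient needed to pass from paths to \emph{directed trails} is a device that prevents a long trail confined to~$H$ from cheating by revisiting vertices via distinct arcs. I would obtain~$H$ from~$G$ in two steps: first replace every edge~$\{a,b\}$ by the two anti-parallel arcs $(a,b),(b,a)$ (so that every undirected cycle of~$G$ can be traced in~$H$ in either orientation), and then \emph{split} every original vertex~$v$ into an in-copy~$v_{\mathrm{in}}$ and an out-copy~$v_{\mathrm{out}}$ joined by a single arc $(v_{\mathrm{in}},v_{\mathrm{out}})$, routing all arcs entering~$v$ into~$v_{\mathrm{in}}$ and all arcs leaving~$v$ out of~$v_{\mathrm{out}}$. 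Because the arc $(v_{\mathrm{in}},v_{\mathrm{out}})$ can appear at most once in a trail, any trail passes through~$v$ at most once; hence a trail staying inside~$H$ behaves like a \emph{path}, so a sufficiently long such trail must traverse all~$n$ split-vertices and therefore encodes a \hcycle{} of~$G$ (this replaces the ``subdivision makes every vertex degree at most three'' argument of \cref{lem:tmtseundirhard}). Crucially, vertex splitting also makes $\indeg(v_{\mathrm{in}})=\outdeg(v_{\mathrm{out}})=3$ and $\outdeg(v_{\mathrm{in}})=\indeg(v_{\mathrm{out}})=1$, so $\Deltad(H)=3$; this is exactly what we need for the bounded-degree variant and is why the directed bound drops to~$3$ (against~$5$ undirected).

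For the constant-$k$ variant I would route the delay through directed chains of pairwise distinct lengths from~$s$ into the bridge, as in the directed path construction behind \cref{thm!pmtsehard}: the short trails then occupy the exit arc $(w,t)$ at pairwise distinct time steps, leaving one trail forced to spend the complementary block of time inside~$H$; setting the number of trails and chain lengths appropriately gives an equivalent instance with $k=0$, and a general constant~$k$ is obtained by prepending a shared $k$-chain to~$s$ exactly as in the proof of \cref{thm!pmtsehard}. For the bounded-degree variant I would instead replace the chains by a directed analogue of the delay ladder of \cref{constr:tmtseundirhardconstdeg}: a chain of junction vertices $s=b_0,b_1,\dots,b_{n-1}$ in which each~$b_i$ offers a bounded number of forward and return arcs, so that a trail can ``store'' any delay in $\{0,1,\dots,n-1\}$ while every vertex keeps in- and out-degree at most three. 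As in \cref{lem:tmtseundirhardconstdeg}, a counting/parity argument then shows that exactly one trail leaves toward~$H$ at each admissible start time and that all unavoidable \shared arcs sit inside the ladder, pinning the number of \shared arcs to a fixed value $\Theta(n)$.

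The main obstacle is the bounded-degree directed delay ladder together with its timing analysis: unlike the undirected case, arc orientations dictate how a trail may enter, dwell in, and leave the ladder, so I must design forward and return arcs that realize all delays $0,1,\dots,n-1$ with $\Deltad\le 3$, guarantee (by a pigeonhole argument over the admissible departure times) that precisely one trail is pushed into the long $H$-traversal, and show that the parities of the times at which $(w,t)$ is used by the ``$v$-side'' and ``$H$-side'' trails never collide---while keeping the whole graph planar so that the drawings of~$H$, the ladder, and the bridge can be combined without crossings. Once the ladder and the accompanying equivalence lemma (\hcycle{} $\Leftrightarrow$ trail solution) are in place, directing the bridge edges from~$s$ toward~$t$ and invoking the split-vertex property completes both the constant-$k$ and the $\Deltad\geq 3$ reductions, and the length-restricted corollary for \tsmtseAcr{} follows because every $s$-$t$ trail is bounded in length by the number of arcs.
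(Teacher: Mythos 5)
Your constant-$k$ half is essentially sound, and it is interesting that you reach the crucial ``a trail confined to~$H$ cannot revisit a vertex'' property by a different route than the paper: the paper reduces from Plesnik's \textsc{Directed Planar 2/3-In-Out Hamiltonian Circuit}, where the degree constraints $\max\{\indeg(v),\outdeg(v)\}\leq 2$ and $\indeg(v)+\outdeg(v)\leq 3$ already make revisits impossible (a trail passing through $v$ twice needs in- and out-degree at least two, hence total degree at least four), so no gadget is needed at all. Your in/out vertex splitting of the PCHC instance achieves the same property and is correct as far as it goes, at the price of doubling every traversal inside~$H$: a \hcycle{} becomes a closed trail of length $2n$, so you need $2n$ chains whose lengths occupy the arc $(w,t)$ at \emph{consecutive} time steps (lengths $3,\ldots,2n+2$ instead of the paper's $3,\ldots,n+2$); with that adjustment, the argument that the remaining trail must dwell in~$H$ for exactly $2n$ steps, visiting all $n$ gadgets, goes through, and the $k>0$ case follows by the prepended $k$-chain as in the path proof.

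The genuine gap is the bounded-degree half, which you yourself defer as ``the main obstacle'', and the ladder as sketched would fail. First, in a directed ladder each rung has a single forward and a single return route, so the two trails that must depart with the same delay (one toward~$v$, one toward~$x$) traverse identical rung arcs simultaneously; this inflates the unavoidable number of \shared{} arcs, and the budget~$k$ must then be calibrated against the cost of every cheat (e.g., two trails entering the same long chain simultaneously) --- repairable, but not done. Second, and more seriously, directed round trips from~$s$ into a ladder have lengths in arithmetic gaps (length $4i$ in the natural design), so $(w,t)$ is occupied only at time slots spaced four apart; but in your split-vertex~$H$ every closed trail through~$x$ has length $2c$ for an \emph{arbitrary} cycle length~$c$ of~$G$, so a trail tracing a short cycle with $2c\equiv 2 \pmod 4$ exits through an unoccupied slot and the forcing argument collapses. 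The undirected proof blocks exactly this by subdividing twice so that all relevant distances in~$H''$ are divisible by four; you would need an analogous modulus normalization of your split gadget. The paper sidesteps all of this: its bounded-degree construction (\cref{constr:tmtsedirhardconstdeg}) replaces the ladder by two directed complete binary trees~$T_s$ and~$T_w$ joined by leaf chains of consecutive lengths $\varn+1,\ldots,2\varn$, which occupy $(w,t)$ at consecutive time steps, pin the \shared{} arcs to exactly the $\varn-2$ arcs inside~$T_s$, and keep $\Deltad\leq 3$, mirroring \cref{constr:pmtsehardconstdeg}. Until you supply a delay structure with these properties and prove the accompanying equivalence lemma, the $\Deltad\geq 3$ part of the theorem remains unproven in your write-up.
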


\appendixproof{thm!tmtsedirhard}
{
  To prove~\cref{thm!tmtsedirhard}, we reduce from the following \NP-complete~\cite{Plesnik79} problem.

  \decprob{Directed Planar 2/3-In-Out Hamiltonian Circuit (DP2/3HC)}
  {A directed, planar graph $G=(V,A)$ such that, for each $v\in V$, $\max\{\outdeg(v),\indeg(v)\}\leq 2$ and $\outdeg(v)+\indeg(v)\leq 3$.}
  {Is there a directed Hamiltonian cycle in $G$?}

  \begin{constr}
  \label{constr:tmtsedirhard}
  Let~$G=(V,A)$ be a directed, planar graph where for each vertex $v\in V$ holds $\max\{\outdeg(v),\indeg(v)\}\leq 2$ and $\outdeg(v)+\indeg(v)\leq 3$, and $n=|V|$.
  Construct a directed graph~$G'$ as follows (refer to \Cref{fig:dirtrailmtse}\iflong for an illustration of the constructed graph\fi).
  \begin{figure}[t]\centering
    \begin{tikzpicture}

    \usetikzlibrary{arrows}

      \tikzstyle{xnode}=[circle, fill, scale=1/2, draw];
      \def\xsc{1.25}
      \node (s) at (0*\xsc,0)[circle, scale=3/4, draw]{$s$};
      \node (v) at (1*\xsc,0)[xnode,label=-90:{$v$}]{};

      \node (x1) at (2.5*\xsc,0)[xnode,label=90:{$x$}]{};
      \node (x2) at (2.5*\xsc,-1.5)[scale=0.1]{};

      \node (w) at (4.5*\xsc,0)[xnode,label=-90:{$w$}]{};
      \node (t) at (6*\xsc,0)[circle, scale=3/4, draw]{$t$};

      \draw[dashdotted,->,>=triangle 45] (s) to [out=25, in=155](w);
      \draw[dashdotted,->,>=triangle 45] (s) to [out=35, in=145](w);
      \draw[dashdotted,->,>=triangle 45] (s) to [out=75, in=105](w);
      \node (txt) at (2.2*\xsc,0.8)[]{$3$-chain};
      \node (txt) at (1.8*\xsc,1.1)[]{$4$-chain};
      \node (txt) at (2.3*\xsc,1.5)[]{$\vdots$};
      \node (txt) at (1.4*\xsc,1.9)[]{$(n+2)$-chain};

      \draw[fill=gray!20, draw] (x1) to [out=0,in=0,looseness=1.75](x2) to [out=0+180,in=+180,looseness=1.75](x1);
      \node (x1) at (2.5*\xsc,0)[xnode]{};
      \node (x2) at (2.5*\xsc,-1.5)[scale=0.1]{};
      \node (Gh) at (2.5*\xsc,-0.75)[scale=1.25]{$H$};

      \draw[->,>=triangle 45]  (x1) to (w);

      \draw[->,>=triangle 45] (w) to (t);
      \draw[->,>=triangle 45] (s) to (v);
      \draw[->,>=triangle 45] (v) to (x1);

      \end{tikzpicture}
    \caption{Sketch of the graph~$G'$ obtained in~\cref{constr:tmtsedirhard}. The enclosed gray part represents the graph~$H$. Dashed lines represent directed chains.}
    \label{fig:dirtrailmtse} 
    \end{figure}
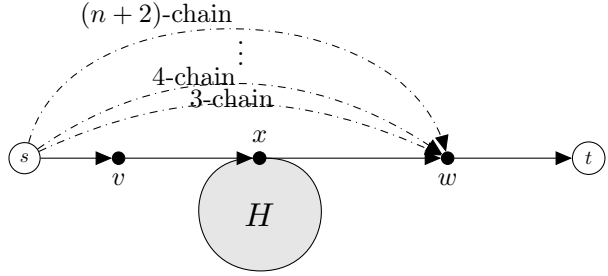
  Initially, let $G'$ be the empty graph.
  Add a copy of the graph~$G$ to $G'$ and denote the copy by~$H$.
  Add the vertex set~$\{s,t,v,w\}$ to~$G'$.
  Consider a plane embedding~$\phi(G)$ and choose a vertex~$x\in V(H)$ incident to the outer face.
  Add the arcs $(s,v)$, $(v,x)$, $(x,w)$, and~$(w,t)$ to~$G'$.
  Moreover, add $n$~chains connecting~$s$ with~$v$ of lengths~$3,4,\ldots,n+2$ respectively to~$G'$, and direct the edges from~$s$ towards~$v$.
  Note that~$G'$ is planar (see~\cref{fig:dirtrailmtse} for an embedding where~$H$ is embedded as~$\phi(H)$).
  \qed
  \end{constr}

  \begin{lemma}%
  \label{lem:tmtsedirhard}
  Let~$G$ and~$G'$ as in~\cref{constr:tmtsedirhard}.
  Then $G$ admits a \hcycle{} if and only if $G'$ admits $n+1$ $s$-$t$ trails with no \shared arc.
  \end{lemma}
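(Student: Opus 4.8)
The plan is to prove the equivalence by tracking the time step at which each trail traverses the unique bottleneck arc $(w,t)$, and then to exploit the in/out-degree constraints of DP2/3HC to force any sufficiently long closed trail through $x$ to be Hamiltonian. First I would fix the global structure of~$G'$ from \cref{constr:tmtsedirhard}: every out-arc of~$s$ is the first arc of either $(s,v)$ or one of the $n$ chains to~$w$, so~$s$ has out-degree $n+1$. Since all trails start at~$s$ at time step~$0$, two trails sharing their first arc would collide at time step~$1$; hence in any family of $n+1$ pairwise non-sharing $s$-$t$~trails, each out-arc of~$s$ is used by exactly one trail. Thus exactly one trail~$P$ uses $(s,v)$ and the remaining~$n$ trails traverse the~$n$ chains, one each. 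As every $s$-$t$~trail must finish with the unique arc $(w,t)$ and the trails are otherwise arc-disjoint, the only possible collisions occur on $(w,t)$, so it suffices to control the time steps at which this arc is used.

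For the backward direction I would first compute that the chain of length~$k$ reaches~$w$ at time step~$k$ and hence occupies $(w,t)$ at time step~$k+1$; for $k\in\{3,\ldots,n+2\}$ this fixes the chain-trails on~$(w,t)$ exactly at the time steps $\{4,\ldots,n+3\}$. Trail~$P$ leaves~$s$ via $(s,v)$, and since the only out-arc of~$v$ is $(v,x)$, it reaches~$x$ at time step~$2$; moreover the only arc from~$H$ towards~$w$ is $(x,w)$, so $P$ must return to~$x$ and depart via $(x,w)$. As $P$ cannot occupy $(w,t)$ at a time step~$\le 3$ and must avoid $\{4,\ldots,n+3\}$, it is forced to use $(w,t)$ at time step~$\ge n+4$. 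Consequently $P$ traverses a closed trail from~$x$ to~$x$ inside~$H$ of length at least~$n$.

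The crucial step is to show that such a closed trail must be a Hamiltonian cycle. Here the DP2/3HC bounds enter: from $\max\{\outdeg(u),\indeg(u)\}\le 2$ and $\indeg(u)+\outdeg(u)\le 3$ one gets $\min\{\indeg(u),\outdeg(u)\}\le 1$ for every vertex~$u$ (otherwise the in/out-degrees would sum to at least~$4$). Therefore no vertex other than~$x$ can serve as an internal vertex of a trail more than once, and likewise the degree bound at~$x$ prevents the closed trail from revisiting~$x$ in the interior. Hence the closed trail visits at most~$n-1$ vertices besides~$x$, has length at most~$n$, and combined with the lower bound~$n$ has length exactly~$n$, visiting every vertex of~$G$ exactly once. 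Restricting~$P$ to~$H$ then yields a Hamiltonian cycle of~$G$. For the forward direction I would route the~$n$ chain-trails straight to~$w$ and then to~$t$ (occupying $(w,t)$ at the distinct time steps $4,\ldots,n+3$), and route~$P$ along $s,v,x$, around a given Hamiltonian cycle~$C$ starting and ending at~$x$, and finally via $(x,w)$ and $(w,t)$; since $|C|=n$, trail~$P$ uses $(w,t)$ at time step~$n+4$, and because~$C$ uses each arc at most once $P$ is a genuine trail, so no arc is shared.

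The main obstacle is precisely the highlighted step of the backward direction: proving that the forced closed trail of length at least~$n$ through~$x$ cannot repeat vertices and must be Hamiltonian. This is exactly where the sharp DP2/3HC constraints are needed, since the bound $\indeg+\outdeg\le 3$ is what guarantees $\min\{\indeg,\outdeg\}\le 1$ everywhere and thereby forces trails in~$H$ to behave like paths — which is the whole reason for reducing from this particular Hamiltonicity variant rather than from \textsc{PCHC}.
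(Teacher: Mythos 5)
Your proposal is correct and follows essentially the same route as the paper's proof: a bijection between the out-arcs of~$s$ and the $n+1$ trails, the chain-trails occupying $(w,t)$ at time steps $4,\ldots,n+3$, which forces the remaining trail to spend at least~$n$ steps on a closed trail through~$x$ in~$H$, and the DP2/3HC degree bounds turning that closed trail into a Hamiltonian cycle. If anything, you are more explicit than the paper, which leaves the key observation $\min\{\indeg(u),\outdeg(u)\}\leq 1$ (and the resulting length-$n$ upper bound on the closed trail) implicit in the phrase that no vertex besides~$x$ is visited twice.
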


  {
    \begin{proof}
    \RD{}
    Let $G$ admit a \hcycle{}~$C$. 
    We construct $n$ $s$-$t$~trails in~$G'$, each using a chain connecting $s$ with $w$, where no two use the same chain.
    By construction, the $n$ $s$-$t$~trails do not introduce any \shared arc.
    Moreover, the trails contain the arc~$(w,t)$ at every time~step in~$\{4,\ldots,n+3\}$.
    The remaining trail contains no chain, but the vertices~$v,x,w$ as well as~$C$.
    As $C$ is a \hcycle{}, the trail uses the arc~$(w,t)$ at time~step~$n+4$.
    
    \LD{}
    Let~$G'$ admit a set~$\calP$ of~$n+1$ $s$-$t$~trails with no \shared arc.
    As~$s$ has outdegree~$n+1$, in~$\calP$ $n$ trails contain a chain connecting~$s$ with~$w$, where no two contain the same chain.
    As there is no \shared arc, the remaining trail cannot use the arc $(w,t)$ before time~step~$n+3$.
    As the shortest $s$-$w$ path containing $v$ is of length three, the remaining trail has to contain $n$~arcs in the copy~$H$ of~$G$.
    As for each vertex~$v\in V(G)$ holds that~$\max\{\outdeg(v),\indeg(v)\}\leq 2$ and $\outdeg(v)+\indeg(v)\leq 3$, no vertex despite $x$ is visited twice by the trail.
    Hence, the trail restricted to the copy~$H$ of~$G$ forms an \hcycle{} in~$G$.
    \ifshort{}\qed\fi{}
    \end{proof}
  }
  We provide another construction where the obtained graph has constant maximum in- and out-degree.

  {
    \begin{constr}
    \label{constr:tmtsedirhardconstdeg}
    Let~$G=(V,A)$ be a directed, planar graph where for each vertex $v\in V$ holds $\max\{\outdeg(v),\indeg(v)\}\leq 2$ and $\outdeg(v)+\indeg(v)\leq 3$, and let $n:=|V|$.
    Construct a directed graph~$G'$ as follows (refer to \cref{fig:dirtrailmtseconstdeg} for an illustration of the constructed graph).
    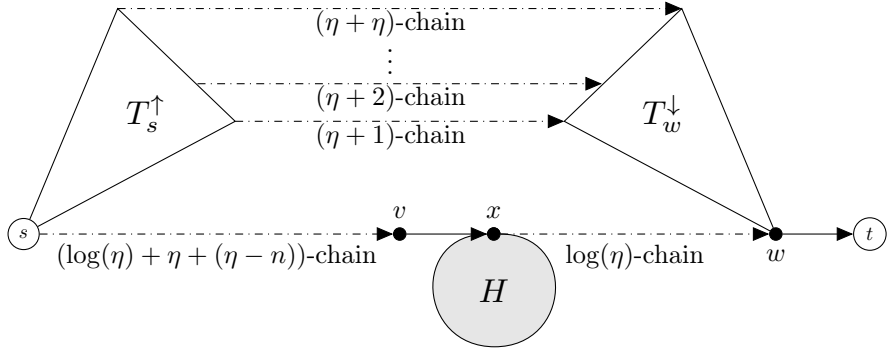
\begin{figure}[t!]
      \centering
      \begin{tikzpicture}

	\tikzstyle{xnode}=[circle, fill, scale=1/2, draw];
	\tikzstyle{xarc}=[dashdotted,->,>=triangle 45];
	\def\xsc{1.25}
	\node (s) at (-2*\xsc,0)[circle, scale=3/4, draw]{$s$};

	\node (v) at (2.0*\xsc,0)[xnode,label=90:{$v$}]{};
	\node (x1) at (3*\xsc,0)[xnode,label=90:{$x$}]{};
	\node (x2) at (3*\xsc,-1.5)[scale=0.1]{};

	\node (w) at (6*\xsc,0)[xnode,label=-90:{$w$}]{};
	\node (t) at (7*\xsc,0)[circle, scale=3/4, draw]{$t$};

	    \draw (s) -- (-1*\xsc,3) --  (0.25*\xsc,1.5) -- (s);
	    \draw (w) -- (5*\xsc,3) --  (3.75*\xsc,1.5) -- (w);
	    
	    \node (txt) at (-0.7*\xsc,1.6)[scale=1.25]{$T_s^{\uparrow}$};
	    \node (txt) at (4.8*\xsc,1.6)[scale=1.25]{$T_w^{\downarrow}$};

	\draw[xarc] (-1*\xsc,3) to (5*\xsc,3);
	\draw[xarc] (-0.15*\xsc,2) to (4.15*\xsc,2);
	\draw[xarc] (0.25*\xsc,1.5) to (3.75*\xsc,1.5);
	\node (txt) at (1.9*\xsc,1.3)[]{$(\varn+1)$-chain};
	\node (txt) at (1.9*\xsc,1.8)[]{$(\varn+2)$-chain};
	\node (txt) at (1.9*\xsc,2.4)[]{$\vdots$};
	\node (txt) at (1.9*\xsc,2.8)[]{$(\varn+\varn)$-chain};

	\draw[fill=gray!20, draw] (x1) to [out=0,in=0,looseness=1.75](x2) to [out=0+180,in=+180,looseness=1.75](x1);
	\node (x1) at (3*\xsc,0)[xnode]{};
	\node (x2) at (3*\xsc,-1.5)[scale=0.1]{};
	\node (Gh) at (3*\xsc,-0.75)[scale=1.25]{$H$};

	\draw[xarc]  (x1) to node[below]{$\log(\varn)$-chain}(w);

	\draw[->,>=triangle 45] (w) to (t);
	\draw[xarc] (s) to node[below]{$(\log(\varn)+\varn+(\varn-n))$-chain}(v);
	\draw[->,>=triangle 45] (v) to (x1);
	\end{tikzpicture}
	\caption{Sketch of the graph~$G'$ obtained in~\cref{constr:tmtsedirhardconstdeg}. 
	  The enclosed gray part represents the graph~$H$. 
	  Dashed lines represent directed chains. 
	  $T_s^{\uparrow}$ and $T_w^{\downarrow}$ refer to the complete binary (directed) trees with $\varn$ leaves and rooted at~$s$ and~$w$, respectively.}
	\label{fig:dirtrailmtseconstdeg}
    \end{figure}
    Let initially~$G'$ be the graph obtained from~\cref{constr:tmtsedirhard}.
    Remove $s$, $w$, and the directed chains connecting~$s$ with~$w$ from~$G'$.
    Let $\varn$ be the smallest power of two larger than~$n$ (note that $n\leq \varn\leq 2n-2$).
    Add a complete binary tree~$T_s$ with $\varn$~leaves to~$G'$, and denote its root by~$s$.
    Denote the leaves by $a_1,\ldots,a_\varn$, ordered by a post-order traversal on~$T_s$.
    Next add a copy of~$T_s$ to~$G'$, and denote the copy by~$T_w$ and its root by~$w$. 
    For each leaf~$a_i$ of~$T_s$, denote by $a_i'$ its copy in~$T_w$.
    Next, for each $i\in[\varn]$, connect~$a_i$ and~$a_i'$ via a~$(\varn+i)$-chain.
    Direct all edges in~$T_s$ away from~$s$ towards the leaves of~$T_s$.
    Direct all edges in~$T_w$ away from the leaves of~$T_w$ towards~$w$.
    Next, direct all chains connecting the leaves of~$T_s$ and~$T_w$ from the leaves of~$T_s$ towards the leaves of~$T_w$.
    To complete the construction of~$G'$, connect~$s$ with~$v$ via a $(\log(\varn)+\varn+(\varn-n))$-chain, and connect~$x$ with~$w$ via a $\log(\varn)$-chain.
    Note that~$T_s$ and~$T_w$ allow plane drawings, and the chains connecting the leaves can be aligned as 
    illustrated in~\cref{fig!pmtsehardconstdeg}. 
    It follows that~$G'$ allows for an plane embedding.
    \qed
    \end{constr}

    \begin{lemma}%
    \label{lem:tmtsedirhardconstdeg}
    Let~$G$ and~$G'$ as in~\cref{constr:tmtsedirhardconstdeg}.
    Then $G$ admits a \hcycle{} if and only if $G'$ admits $\varn+1$ $s$-$t$ trails with at most~$\varn-2$ \shared arcs.
    \end{lemma}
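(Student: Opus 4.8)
The plan is to mirror the path-based argument of \cref{lem:pmtsehardconstdeg}, while coping with the extra freedom that trails (unlike paths) may revisit vertices; this is exactly where the in-/out-degree restriction of \textsc{DP2/3HC} enters. Write $h:=\log(\varn)$ for the common height of $T_s$ and $T_w$. For the \emph{forward direction}, given a \hcycle{} $C$ of $G$ I would order it so that $x$ is its first and last vertex and realise it as a closed trail $C''$ of $H$ from $x$ to $x$ (of length $n$, since $H$ is a plain copy of $G$). I then build $\varn$ \emph{tree trails}: for each leaf $a_i$ of $T_s$ route one trail $s\to a_i$ (reaching $a_i$ at time $h$), along the $(\varn+i)$-chain to $a_i'$, up $T_w$ to $w$, and across $(w,t)$. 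Because the chain lengths $\varn+i$ are pairwise distinct, these trails occupy $(w,t)$ at the pairwise distinct times $2h+\varn+i+1$, so they \share no arc outside $T_s$; inside $T_s$ they \share precisely the internal arcs (those whose subtree contains at least two leaves), numbering $2\varn-2-\varn=\varn-2$. The remaining trail $P$ runs from $s$ along the $(h+2\varn-n)$-chain to $v$, then $(v,x)$, then follows $C''$ back to $x$, then the $\log(\varn)$-chain to $w$, and uses $(w,t)$ at time $2h+2\varn+2$, one step after the last tree trail. Hence all $\varn+1$ trails \share exactly the $\varn-2$ internal arcs of $T_s$.

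The \emph{backward direction} is the main work. Given $\varn+1$ trails with at most $\varn-2$ \shared arcs, I would first pin down their coarse structure. Since $s$ has in-degree $0$, every trail leaves $s$ exactly once, at time step $1$; two trails entering the $s$-$v$ chain would then \share all of its $h+2\varn-n>\varn-2$ arcs, so at most one trail uses that chain and at least $\varn$ enter $T_s$. In the directed tree $T_s$ each such trail is forced along the unique root-to-leaf dipath, reaching its leaf at time $h$; two trails reaching the same leaf would \share the whole outgoing $(\varn+i)$-chain of $>\varn-2$ arcs, so the $\geq\varn$ trails reach $\varn$ distinct leaves. Thus exactly $\varn$ trails traverse $T_s$ (one per leaf) and exactly one trail $P$ takes the $s$-$v$ chain. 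One-per-leaf already forces every one of the $\varn-2$ internal arcs of $T_s$ to be \shared, so the budget is exhausted inside $T_s$ and $P$ \shares no arc with the tree trails.

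Since the tree trails are forced up $T_w$ to $w$ and out along $(w,t)$, they occupy $(w,t)$ exactly at the times $2h+\varn+i+1$, $i\in[\varn]$, and $P$ (whose only possible common arc with them is $(w,t)$) must use $(w,t)$ outside the window $\{2h+\varn+2,\dots,2h+2\varn+1\}$. A short computation shows the earliest $P$ could reach $(w,t)$, via the direct route ($s$-$v$-chain, $(v,x)$, $x$-$w$-chain, $(w,t)$), is time $2h+2\varn-n+2$, which already lies inside this window; so $P$ must reach $(w,t)$ no earlier than $2h+2\varn+2$. As $P$ can only reach $w$ through the $x$-$w$ chain and first arrives at $x$ at time $h+2\varn-n+1$, it is forced to linger in $H$ along a closed trail from $x$ to $x$ of length at least $n$ before taking that chain.

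The hard part, and the place where trails differ essentially from the path case, is to upgrade this closed trail to a \hcycle{}: for paths the vertices could not repeat at all, whereas a trail could a priori traverse $H$ via a non-simple closed walk such as a figure-eight through $x$. I would exclude this using the \textsc{DP2/3HC} bounds: if the closed trail traversed some vertex $u$ twice it would use two in-arcs and two out-arcs at $u$, forcing $\indeg(u),\outdeg(u)\geq 2$ and hence $\indeg(u)+\outdeg(u)\geq 4$, contradicting $\indeg(u)+\outdeg(u)\leq 3$; applied to $u=x$ this also rules out any intermediate revisit of $x$. Therefore the closed trail is a simple cycle through $x$, of length at most $n$; together with the length-$\geq n$ bound it has length exactly $n$ and visits every vertex of $H$ once. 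Its restriction to $H$ is then a \hcycle{} of $G$, which completes the proof.
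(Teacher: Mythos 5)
Your proof is correct and takes essentially the same route as the paper's: the identical forward construction (one trail per leaf of $T_s$, sharing exactly the $\varn-2$ internal arcs, with $(w,t)$ occupied at times $2\log(\varn)+\varn+i+1$ and the chain trail using it at $2\log(\varn)+2\varn+2$), the same backward analysis forcing exactly one trail onto the $s$-$v$ chain and the remaining budget to be exhausted inside $T_s$, and the same appeal to the degree bounds $\max\{\outdeg(v),\indeg(v)\}\leq 2$, $\outdeg(v)+\indeg(v)\leq 3$ to rule out revisits and upgrade the length-$n$ closed trail in $H$ to a Hamiltonian cycle. Your explicit window computation and figure-eight exclusion only spell out steps the paper states tersely.
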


    {
      \begin{proof}
      \RD{}
      Let~$G$ admit a \hcycle{}~$C$. 
      First, we construct~$\varn$ $s$-$t$~trails in~$G'$ as follows.
      For each leaf of~$T_s$, there is a trail containing the unique path from~$s$ to the leaf in~$T_s$.
      This part introduces $\varn-2$ \shared arcs.
      Next, each trail follows the chain connecting the leaf of~$T_s$ with a leaf of~$T_w$, then the unique path from the leaf of~$T_w$ to~$w$, and finally the arc~$(w,t)$.
      Observe that the trails contain the leaves of~$T_w$ at different time steps~$\log(\varn)+\varn+i+1$, $i\in[\varn]$, and hence no \shared arc is introduced in this part.
      Moreover, the arc~$(w,t)$ appears in the time~steps~$2\log(\varn)+\varn+i+1$, $i\in[\varn]$.
      The remaining trail~$P$ contains the chain connecting~$s$ with~$v$, the edge~$\{v,x\}$, follows the cycle~$C$ in~$H$, starting and ending at vertex~$x$.
      Trail~$P$ then contains the chain connecting~$x$ with~$w$, and the arc~$(w,t)$.
      Observe that the arc~$(w,t)$ appears in~$P$ at time~step~$2\log(\varn)+2\varn+2$ (recall that~$C$ is a \hcycle{} in~$H$), and hence~$(w,t)$ is not \shared.
      
      \LD{}
      Let~$G'$ admit~$\varn+1$ $s$-$t$~trails with at most~$\varn-2$ \shared arcs.
      
      First observe that the chain connecting~$s$ with~$v$ is not contained in more than one~$s$-$t$ trail.
      Hence, at least~$\varn$ trails leave~$s$ through~$T_s$. 
      Note that no chain connecting the leaves of~$T_s$ with the leaves of~$T_w$ is contained in more than one $s$-$t$~trail.
      It follows that exactly~$\varn$ $s$-$t$~trails (denote the set by~$\calP'$) leave~$s$ via~$T_s$ and each contains a different leaf of~$T_s$.
      Herein, $\varn-2$ arcs are \shared by the trails in~$\calP'$. 
      Note that the path from a leaf of~$T_s$ to~$t$ is unique, each trail in~$\calP'$ follows the unique path to~$t$.
      The arc~$(w,t)$ appears in the trails in~$\calP'$ at time~steps~$2\log(\varn)+\varn+i+1$ for every~$i\in[\varn]$.
      
      The remaining $s$-$t$ trail~$P\not\in\calP$ contains the chain connecting~$s$ with~$v$.
      Note that arc~$(w,t)$ is not \shared, as all \shared arcs are contained in~$T_s$.
      As the shortest $s$-$t$~path via~$x$ is of length $2\log(\varn)+\varn+(\varn-n)+2 \geq 2\log(\varn)+\varn+2$, trail~$P$ has to contain a cycle~$C$ in~$H$.
      As the arc~$(w,t)$ is not \shared and appears in the trails in~$\calP'$ at the time~steps~$2\log(\varn)+\varn+i+1$, for every~$i\in[\varn]$, the cycle~$C$ must be of length~$n$.
      As for each vertex~$v\in V(G)$ holds that~$\max\{\outdeg(v),\indeg(v)\}\leq 2$ and $\outdeg(v)+\indeg(v)\leq 3$, no vertex in~$H$ despite~$x$ is visited twice by the trail~$P$.
      Hence, trail~$P$ restricted to the copy~$H$ of~$G$ forms a \hcycle{} in~$G$.
      \ifshort{}\qed\fi{}
      \end{proof}
    }
  }

  \begin{proof}[\iflong{}Proof of~\cref{thm!tmtsedirhard}\else{}\cref{thm!tmtsedirhard}\fi{}]
    We provide a many-one reduction from \textsc{DP2/3HC} to \tmtseAcr on directed graphs via \cref{constr:tmtsedirhard} on the one hand, 
    and~\cref{constr:tmtsedirhardconstdeg} on the other.
    Let $(G)$ be an instance of DP2/3HC where~$G$ consists of $n$~vertices.
  
  \smallskip\noindent\emph{Via \cref{constr:tmtsedirhard}.}   
  Let $(G',s,t,p,0)$ an instance of \tmtseAcr where $G'$ is obtained from~$G$ by applying \cref{constr:tmtsedirhard} and $p=n+1$.
  Note that $(G',s,t,p,0)$ is constructed in polynomial time and by \cref{lem:tmtsedirhard}, $(G)$ is a yes-instance of DP2/3HC if and only if $(G',s,t,p,0)$ is a yes-instance of \tmtseAcr{}.

  The case of constant~$k>0$ works analogously as in the proof of~\cref{thm!pmtsehard}.
  
  \smallskip\noindent\emph{Via \cref{constr:tmtsedirhardconstdeg}.}
  Let $(G',s,t,p,k)$ an instance of \tmtseAcr where~$G'$ is obtained from~$G$ by applying \cref{constr:tmtsedirhardconstdeg}, $p=\varn+1$, and $k=\varn-2$.
  Note that $(G',s,t,p,k)$ is constructed in polynomial time and $\max_{v\in V(G')}\{\outdeg(v)+\indeg(v)\}\leq 5$.
  By~\cref{lem:tmtsedirhardconstdeg}, $(G)$ is a yes-instance of DP2/3HC if and only 
  if $(G',s,t,p,k)$ is a yes-instance of \tmtseAcr{}.
    \ifshort{}\qed\fi{}
  \end{proof}
}
As the length of each $s$-$t$ trail is upper bounded by the number of edges in the graph, we immediately obtain the following.

\begin{corollary}
  \label{cor:tsmtseharddir}
 \tsmtseAcr{} on directed planar graphs is \NP-complete, even if~$k\geq0$ is constant or $\Deltad\geq 3$ is constant.
\end{corollary}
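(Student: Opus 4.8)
The plan is to reuse verbatim the two reductions underlying \cref{thm!tmtsedirhard} and merely equip the resulting \tmtseAcr{} instances with a length bound that turns out to be vacuous for trails. Concretely, given an instance $(G',s,t,p,k)$ of \tmtseAcr{} produced either by \cref{constr:tmtsedirhard} (for the constant-$k$ regime) or by \cref{constr:tmtsedirhardconstdeg} (for the constant-$\Deltad$ regime), I would form the \tsmtseAcr{} instance $(G',s,t,p,k,\alpha)$ with $\alpha := |A(G')|$. Since the graph $G'$ is left untouched, it stays directed and planar, and the degree guarantee $\Deltad \geq 3$ (respectively the constant value of $k$) is inherited unchanged; adding a number to the instance affects neither planarity nor the degrees.

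The key observation is that this choice of $\alpha$ imposes no restriction on trails whatsoever. By definition, an $s$-$t$ trail uses each arc of $G'$ at most once, so its length is bounded by the number $|A(G')|$ of arcs, that is, by $\alpha$. Consequently, any collection of $p$ $s$-$t$ trails \sharing at most $k$ arcs automatically consists of trails of length at most $\alpha$, and conversely every solution of the \tsmtseAcr{} instance is in particular a solution of the \tmtseAcr{} instance. Hence $(G',s,t,p,k)$ is a yes-instance of \tmtseAcr{} if and only if $(G',s,t,p,k,\alpha)$ is a yes-instance of \tsmtseAcr{}.

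Combining this equivalence with the correctness of the two reductions (via \cref{lem:tmtsedirhard} and \cref{lem:tmtsedirhardconstdeg}) yields \NP-hardness of \tsmtseAcr{} on directed planar graphs in both the constant-$k$ and the constant-$\Deltad$ setting; membership in \NP{} is supplied by \cref{lem:continNP}. As $\alpha = |A(G')|$ is polynomial in the input size, the resulting reductions remain polynomial-time, and the two hardness sources transfer directly.

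There is essentially no genuine obstacle here, which is exactly why the statement is phrased as a corollary. The only point that requires a moment's care is to confirm that the chosen $\alpha$ never inadvertently excludes a valid solution rather than merely being large enough in general. The reductions both hinge on one special trail that carries the Hamiltonian cycle and, in the directed construction, traverses the vertex $x$ twice. Because that trail nonetheless repeats no arc, its length is still at most $|A(G')|$, so the bound $\alpha = |A(G')|$ is safe and the equivalence above goes through without exception.
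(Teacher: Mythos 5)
Your proposal is correct and coincides with the paper's own argument: the corollary is derived from \cref{thm!tmtsedirhard} exactly because every $s$-$t$ trail repeats no arc and thus has length at most the number of arcs in the graph, so taking $\alpha=|A(G')|$ makes the length bound vacuous and the two reductions (with their planarity, constant-$k$, and constant-$\Deltad$ guarantees) carry over unchanged. Your additional check that the Hamiltonian-cycle trail through $x$ still respects the bound, and the appeal to \cref{lem:continNP} for \NP{} membership, are consistent with the paper.
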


\section{Walk-\mtseAcr}
\label{sec:wmtse}
\appendixsection{sec:wmtse}

Regarding their computational complexity fingerprint, \pmtseAcr{} and \tmtseAcr{} are equal.
In this section, we show that \wmtseAcr{} differs in this aspect.
We prove that the problem is solvable in polynomial time on undirected graphs \iflong(\cref{ssec:wmtse:undir})\fi{} and on directed graphs if~$k\geq 0$ is constant\iflong(\cref{ssec:wmtse:dir})\fi.

\subsection{On Undirected Graphs}
\label{ssec:wmtse:undir}

\looseness=-1 On a high level, the tractability on undirected graphs is because a walk can alternate arbitrarily often between two vertices.
Hence, we can model a queue on the source vertex~$s$, where at distinct time steps the walks leave~$s$ via a shortest path towards~$t$.
\iflong{}However, if the time of staying in the queue is upper bounded, that is, if the length-restricted variant~\wsmtseAcr{} is considered, the problem becomes \NP-complete.\fi{}

\begin{theorem}%
  \label{thm!wmtseundirptime}
  \wmtseAcr{} on undirected graphs is solvable in linear time. 
\end{theorem}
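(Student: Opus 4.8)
The plan is to replace the flow-based approach by an exact combinatorial criterion that can be tested in linear time. Concretely, I claim that $(G,s,t,p,k)$ is a yes-instance of \wmtseAcr{} if and only if $t$ is reachable from $s$ in $G$ and, in addition, either $k\geq 1$ or $p\leq\deg_G(s)$. Granting this, the algorithm first tests reachability of $t$ from $s$ by a single graph search in $\Oh(n+m)$ time and outputs \no if $t$ is unreachable; otherwise it outputs \yes whenever $k\geq 1$, and when $k=0$ it simply compares $p$ with $\deg_G(s)$ (readable in $\Oh(\deg_G(s))$ time). Thus the decision is linear, and it remains to prove the criterion.

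For the two positive directions I would exhibit explicit walk systems built from a fixed shortest $s$-$t$ path $Q=(s=q_0,q_1,\dots,q_d=t)$ with $d\geq 1$. If $k\geq 1$, route walk $j$ (for $j=0,\dots,p-1$) as $s,(q_1,s)^{j},q_1,\dots,q_d$, so it oscillates $j$ times on the first edge of $Q$ and then runs along $Q$. Every walk uses $\{s,q_1\}$ at time step $1$, so that edge is shared, whereas edge $\{q_i,q_{i+1}\}$ with $i\geq 1$ is used by walk $j$ precisely at step $2j+i+1$; these steps are pairwise distinct over $j$, so no further edge is shared and the number of shared edges is at most $1\leq k$. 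If instead $k=0$ and $p\leq\deg_G(s)$, I would replace the common oscillation edge by private ones: choose $p$ pairwise distinct $s$-incident edges, let one walk run straight along $Q$ (using $\{s,q_1\}$), and for each remaining walk $j$ let it oscillate $j$ times on its own reserved edge $\{s,u_j\}$ with $u_j\neq q_1$ before running along $Q$. Each reserved edge is then used by a single walk, and every edge of $Q$ is again used at the pairwise distinct steps $2j+i+1$, so no edge is shared at all.

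For the negative directions, if $t$ is unreachable from $s$ then no $s$-$t$ walk exists, so $p\geq 1$ walks cannot be found. The remaining case is $k=0$ with $p>\deg_G(s)$: every $s$-$t$ walk is at $s$ at time $0$ and hence uses an edge incident to $s$ at time step $1$; since there are only $\deg_G(s)$ such edges, the pigeonhole principle forces two of the $p$ walks to traverse a common $s$-incident edge at step $1$, which is therefore shared, contradicting $k=0$. The conceptual crux, and the part I expect to require the most care, is recognizing the asymmetry between source and sink that makes the criterion correct: all $p$ walks are simultaneously at $s$ at time $0$, forcing congestion on $s$-incident edges at the very first step, whereas arrivals at $t$ (and all interior crossings) can be spread out arbitrarily because walks, unlike paths or trails, may be lengthened by free oscillation. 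Once this asymmetry is isolated, both the pigeonhole lower bound and the staggered upper-bound constructions fall out, and the tight quantity governing the $k=0$ case is exactly $\deg_G(s)$.
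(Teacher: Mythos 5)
Your proposal is correct and follows essentially the same route as the paper's proof: the same free-oscillation idea at the source (a single staggered shared edge $\{s,q_1\}$ when $k\geq 1$, private $s$-incident oscillation edges with pairwise distinct departure offsets when $k=0$), and the same pigeonhole argument at time step~$1$ showing that $\deg_G(s)\geq p$ is necessary for $k=0$, yielding the identical linear-time criterion. The only (harmless) difference is that you make the reachability check explicit, whereas the paper assumes connectivity up front.
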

\appendixproof{thm!wmtseundirptime}
{
  \begin{proof}
  Let $\I:=(G,s,t,p,k)$ be an instance of \wmtseAcr with $G$ being connected.
  Let $P$ be a shortest $s$-$t$ path in $G$.
  We assume that $p\geq 2$, since otherwise $P$ witnesses that $\I$ is a yes-instance. 
  We can assume that the length of $P$ is at least $k+1$, otherwise we can output that $\I$ is a yes-instance.
  Let~$\{s,v\}$ be the edge in~$P$ incident to the endpoint~$s$.
  We distinguish the two cases whether~$k$ is positive or~$k=0$.
  In this proof, we represent a walk as a sequence of edges.
  
  \smallskip\noindent\emph{Case $k>0$:} We can construct $p$ $s$-$t$~walks $P_1,\ldots,P_p$ \sharing at most one edge as follows.
  We set $P_1:=P$ and $P_i=(\underbrace{\{s,v\},\ldots,\{s,v\}}_{2i\text{-times}},P)$ for $i\in[p]$, that is, the $s$-$t$~walk $P_i$ alternates between $s$ and $v$ $i$ times.
  We show that the set $\calP:=\{P_1,\ldots,P_p\}$ \share exactly edge $\{s,v\}$.
  It is easy to see that $\{s,v\}$ is \shared by all of the walks.
  Let us consider an arbitrary edge~$e=\{x,y\}\neq\{s,v\}$ in $P$, which appears in $P$ at time~step $\ell>1$ (ordered from $s$ to $t$).
  By construction, $e$~appears in~$P_i$ at position~$2i+\ell$. 
  Thus, no two walks in $\calP$ contain an edge in~$P$, that is on time~step~$\ell>1$ in~$P$, at the same time~step.
  Since each walk in $\calP$ only contains edges in $P$, it follows that $\{s,v\}$ is exactly the \shared edges by all walks in~$\calP$.
  As $k\geq 1$, it follows that we can output that~$\I$ is a yes-instance.
  
  \smallskip\noindent\emph{Case $k=0$:} Let $v_1,\ldots, v_\ell$ be the neighbors of $s$, and suppose that $v_1=v$ (that is, the vertex incident to $s$ appearing in~$P$).
  If $\ell<p$, then we can immediately output that~$\I$ is a no-instance, as by the pigeon hole principle at least one edge has to appear in at least two walks at time~step one in any set of~$p$ $s$-$t$~walks in~$G$.
  If $\ell\geq p$, we construct $\ell$ $s$-$t$~walks $P_1,\ldots,P_\ell$ that do not \share any edge in~$G$ as follows.
  We set $P_1=P$, and $P_i=(\underbrace{\{s,v_i\},\ldots,\{s,v_i\}}_{2i\text{-times}},P)$ for $i\in[\ell]$, that is, the $s$-$t$~walk $P_i$ alternates between $s$ and $v_i$ $i$ times.
  Following the same argumentation as in preceding case, it follows that no edge is \shared by the constructed walks $P_1,\ldots,P_p$.
  
  In summary, if $k>0$, then we can output that $\I$ is a yes-instance.
  If $k=0$, then we first check the degree of~$s$ in linear time, and then output that~$\I$ is a yes-instance if $\deg(s)\geq p$, and that~$\I$ is a no-instance, otherwise.
  \ifshort{}\qed\fi{}
  \end{proof}
}
\iflong{}
  The situation changes for \wsmtseAcr{}, that is, when restricting the length of the walks.
\else{}
  The situation changes for the length-restricted variant~\wsmtseAcr{}.
\fi{}

\begin{theorem}%
 \label{thm:wsmtseundirhard}
 \wsmtseAcr{} on undirected graphs is \NP-complete and \W{2}-hard with respect to~$k+\alpha$. 
\end{theorem}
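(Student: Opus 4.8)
The plan is to establish membership in \NP{} and \W{2}-hardness with respect to $k+\alpha$; since the hardness reduction will run in polynomial time, it also yields \NP-hardness and hence \NP-completeness. Containment in \NP{} follows directly from \cref{lem:continNP}: restricting each walk to length at most~$\alpha$ bounds the size of a certificate consisting of $p$ such walks polynomially in the input.

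For hardness I would give a (parameterized) many-one reduction from \textsc{Set Cover}, which is \W{2}-complete with respect to the solution size~$\ell$. The graph~$G'$ is essentially the undirected analogue of the DAG from \cref{constr:dagshard}: introduce element vertices $v_1,\dots,v_n$ and set vertices $w_1,\dots,w_m$, add an edge $\{v_i,w_j\}$ whenever $u_i\in F_j$, add $\{w_j,t\}$ for every~$j$, connect~$s$ to each~$v_i$ by an $(\ell+1)$-chain, and connect~$s$ to each~$w_j$ by an $(\ell+2)$-chain. Finally set $p:=n+m$, $k:=\ell$, and, decisively, $\alpha:=\ell+3$. A short computation gives $\dist_{G'}(s,t)=\ell+3=\alpha$.

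The key idea---and the reason the length restriction makes walks hard---is that choosing $\alpha=\dist_{G'}(s,t)$ forces every admissible $s$-$t$ walk to be a shortest path. This removes exactly the ``waiting by bouncing'' mechanism that made \wmtseAcr{} tractable in \cref{thm!wmtseundirptime}, so the rigid, DAG-style reasoning of \cref{lemma:dagshard} can be recovered. Concretely, in the backward direction I would argue that two walks cannot use the same chain: the internal chain vertices have degree two, so any shortest path entering a chain must traverse it entirely, and two walks sharing a chain would \share at least $\ell+1>k$ edges. As $s$ has exactly $n+m$ incident (chain) edges and there are $p=n+m$ walks, each of the $n+m$ chains is used by exactly one walk. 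Each $w_j$-chain walk is then forced to finish $w_j\to t$, and each $v_i$-chain walk, having reached~$v_i$ at time $\ell+1$, must finish $v_i\to w_{j(i)}\to t$ for some $F_{j(i)}\ni u_i$. Hence every \shared edge is of the form $\{w_j,t\}$ (used at time $\ell+3$), the set $W$ of such~$w_j$ satisfies $|W|\le k=\ell$, and $\{F_j: w_j\in W\}$ covers~$U$, because each $v_i$-walk's edge $\{w_{j(i)},t\}$ is \shared with the corresponding $w_{j(i)}$-chain walk. The forward direction is the routine converse: given a cover of size~$\ell$, route the $m$ walks along the $w_j$-chains and the $n$ element walks through the chosen sets, all of length exactly $\ell+3$ and \sharing precisely the $\le\ell$ edges $\{w_j,t\}$ of the chosen sets.

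I expect the main obstacle to be the backward analysis: verifying that the tight length budget truly forbids all detours and partial chain-\sharing, i.e.\ confirming that no shortcut lowers $\dist_{G'}(s,t)$ below $\ell+3$ and that every walk is pinned to a unique chain and then deterministically routed to~$t$. Once this rigidity is in place, the correspondence to set covers is immediate, and since both $k=\ell$ and $\alpha=\ell+3$ are bounded by functions of~$\ell$, the reduction is a valid parameterized reduction yielding \W{2}-hardness with respect to $k+\alpha$.
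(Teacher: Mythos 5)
Your proposal is correct and takes essentially the same route as the paper: the paper also reduces from \textsc{Set Cover} by taking the undirected version of the graph from \cref{constr:dagshard} with $p=n+m$, $k=\ell$, and $\alpha=\ell+3$, noting that since every shortest $s$-$t$ path has length exactly $\alpha$, every length-admissible walk behaves as in the DAG case, whence correctness follows as in \cref{lemma:dagshard}. Your detailed backward analysis (each chain pinned to a unique walk because sharing a chain forces $\ell+1>k$ \shared edges) is precisely the rigidity the paper invokes implicitly via its appeal to \cref{lemma:dagshard}.
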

\ifshort{}
The proof is similar to the proof of~\cref{thm!dagshard} and hence deferred (\cref{proof:thm:wsmtseundirhard}).

\fi{}
\appendixproof{thm:wsmtseundirhard}
{
  Given a directed graph~$G$, we call an undirected graph~$H$ the \emph{undirected version} of~$G$ if~$H$ is obtained from~$G$ by replacing each arc with an undirected edge.

  \begin{proof}
  We give a (parameterized) many-one reduction from~\textsc{Set Cover}.
  Let $(U,\calF,\ell)$ be an instance of~\textsc{Set Cover}.
  Let $G'$ the graph obtained from applying~\cref{constr:dagshard} given~$(U,\calF,\ell)$.
  Moreover, let~$G$ be the undirected version of~$G'$.
  Let $(G,s,t,p,k,\alpha)$ be the instance of~\wsmtseAcr{}, where $p=n+m$, $k=\ell$, and~$\alpha=\ell+3$.
  Note that any shortest $s$-$t$~path in~$G$ is of length~$\alpha$, and hence every $s$-$t$~walk of length at most~$\alpha$ behaves as in the directed acyclic case.
  That is, each walk contains a vertex of $V_\calF$ at time~step~$k+2$.
  The correctness follows then analogously as in the proof of~\cref{lemma:dagshard}.
  \ifshort{}\qed\fi{}
  \end{proof}
}%
It remains open whether \wsmtseAcr{} is NP-complete when $k$ is constant.
\subsection{On Directed Graphs}
\label{ssec:wmtse:dir}

Due to~\cref{thm!dagshard,thm!dagsptime}, we know that \wmtseAcr{} is \NP-complete on directed graphs and is solvable in polynomial time on directed acyclic graphs when~$k=0$, respectively.
In this section, we prove that if~$k\geq 0$ is constant, then~\wmtseAcr{} remains tractable on directed graphs (this also holds true for \wsmtseAcr{}).
Note that for \pmtseAcr{} and \tmtseAcr{} the situation is different, as both become \NP-complete on directed graphs, even if~$k\geq 0$ is constant. 

\begin{theorem}
 \label{thm!wmtsedirxp}
 \wmtseAcr{} and \wsmtseAcr{} on directed $n$-vertex $m$-arc graphs is solvable in~$\Oh(m^{k+1}\cdot n\cdot (p\cdot n)^2)$~time and $\Oh(m^{k+1}\cdot n\cdot \alpha^2)$~time, respectively.
\end{theorem}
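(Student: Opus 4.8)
The plan is to follow the same two-step strategy as on DAGs (cf.\ \cref{thm!dagsxp}): first reduce to the case $k=0$ by guessing the shared arcs, and then decide each $k=0$ instance by a flow computation in a time-expanded graph. Concretely, by \cref{lem:reductiontozero} the instance $(G,s,t,p,k)$ of \wmtseAcr{} (resp.\ \wsmtseAcr{}) is a yes-instance if and only if there is a set $K\subseteq A$ with $|K|\le k$ such that $(G(K,p),s,t,p,0)$ is a yes-instance. Since $k$ is constant, I would enumerate all $\Oh(m^{k})$ candidate sets $K$, construct $G(K,p)$ (which still has only $n$ vertices, as duplicating arcs adds none), decide each resulting $k=0$ instance separately, and return \yes{} iff at least one of them is a yes-instance.

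For \wsmtseAcr{} the second step is immediate. Every walk in a solution has length at most $\alpha$, so I would build the $\alpha$-time-expanded graph of $G(K,p)$ with $p$ additional waiting arcs at $t$ and test for an $s^0$-$t^\alpha$ flow of value at least $p$; correctness is exactly \cref{lem:flows} applied with $\tau=\alpha$ (as already used in \cref{cor:smtsedags}). Each flow computation costs $\Oh(\alpha^2\cdot n\cdot m)$ time on the time-expanded graph, and multiplying by the $\Oh(m^k)$ choices of $K$ gives the claimed $\Oh(m^{k+1}\cdot n\cdot \alpha^2)$ bound.

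For \wmtseAcr{} the walks are unrestricted in length, and in a general directed graph an $s$-$t$ walk can be arbitrarily long (unlike in a DAG, where \cref{thm!dagsptime} uses that every walk has length below $n$). Here the decisive ingredient---and what I expect to be the main obstacle---is a \emph{length bound}: if an $n$-vertex directed graph admits $p$ $s$-$t$ walks that share no arc, then it admits $p$ such walks each of length at most $pn$. Granting this, I would build the $(pn)$-time-expanded graph of $G(K,p)$ with $p$ waiting arcs at $t$ and again invoke \cref{lem:flows}, now with $\tau=pn$: the flow has value at least $p$ iff there are $p$ non-\sharing walks of length at most $pn$, which by the length bound is equivalent to the $k=0$ instance being a yes-instance. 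The time-expanded graph now has $\Oh(pn\cdot n)$ vertices and $\Oh(pn\cdot m)$ arcs, so one maximum-flow computation costs $\Oh((pn)^2\cdot n\cdot m)$ time, and summing over the $\Oh(m^k)$ sets $K$ yields $\Oh(m^{k+1}\cdot n\cdot (pn)^2)$.

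It remains to establish the length bound, which is the crux. The bound is tight in spirit: if the walks can only be desynchronized at a single bottleneck arc by routing them repeatedly around a long cycle, then the last walk must traverse on the order of $p$ cycles, each of length up to $n$. For the upper bound I would take a solution $W_1,\dots,W_p$ minimizing the total length $\sum_j |W_j|$, truncating each walk at its first arrival at $t$ (after which it may rest at $t$ for free, since the $p$ parallel waiting arcs never cause a collision). Suppose some $W_j$ has length more than $pn$. Projecting $W_j$ to $G(K,p)$, the pigeonhole principle yields a vertex $u$ occurring more than $p$ times on $W_j$, hence a closed sub-walk (a \emph{delay loop}) at $u$. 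Deleting such a loop shortens $W_j$ but shifts the timing of its suffix earlier, and the difficulty is precisely that this shift may collide with the other walks. I would resolve this by an exchange argument: among the many delay loops of the over-long walk, the $p-1$ other walks are too few to block all of the induced shifts simultaneously, so some shift---or a combination of shifts obtained by following the resulting chain of collisions as in a flow augmentation---produces a strictly shorter collection of $p$ non-\sharing walks, contradicting minimality. Phrasing this cleanly as a flow-over-time / min-cut statement in the time-expanded graph is the part that needs the most care; the remaining running-time bookkeeping and the reductions above are routine given \cref{lem:reductiontozero,lem:flows}.
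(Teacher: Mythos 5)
Your reduction framework is exactly the paper's: enumerate the $\Oh(m^{k})$ candidate sets $K$, invoke \cref{lem:reductiontozero}, and decide each $k=0$ instance by testing for a flow of value $p$ in a time-expanded graph via \cref{lem:flows}, with $\tau=\alpha$ for \wsmtseAcr{} and a polynomial horizon for \wmtseAcr{}; you also correctly isolate the one non-routine ingredient, namely a length bound of order $p\cdot n$ on the walks in some optimal solution (the paper's \cref{lem:walkmtseShortSol}, which gives $p\cdot d_t$ with $d_t\leq n$). But precisely at that crux your argument has a genuine gap. Your plan is to find, in a solution of minimum total length, a vertex $u$ visited more than $p$ times on the over-long walk, delete a delay loop at $u$, and argue that ``the $p-1$ other walks are too few to block all of the induced shifts simultaneously.'' That counting does not go through: blocking is not a one-slot event. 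Deleting a loop shifts the \emph{entire} suffix of the walk, and a single other walk can collide with the shifted suffix for many different shift amounts, since it may traverse a given arc at many time steps and each shift exposes the whole suffix to all other walks at all times. So there is no injection from blocked shifts to other walks, the pigeonhole you appeal to is unsound, and the fallback ``chain of collisions / flow augmentation'' is left entirely unspecified---you concede yourself that this is the part needing the most care, which is exactly where the proof is missing.

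The paper closes this gap with a different and cleaner exchange, which you could substitute. Take a solution $\calP$ minimizing the \emph{sum} of walk lengths and apply the pigeonhole to walk lengths rather than to vertex occurrences: if the longest walk $P^*$ has length exceeding $p\cdot d_t$, then some interval $((i-1)d_t, i\cdot d_t]$ contains the length of no walk in~$\calP$. Truncate $P^*$ at position $(i-1)d_t+1$ and append a shortest path $S$ to $t$, obtaining a walk $P'$ whose length lies in that interval. Any newly shared arcs lie on $S$; choosing the shared arc $(x,y)$ on $S$ with $\dist_S(y,t)$ minimum and a walk $P\in\calP$ colliding there, a single suffix swap---replace $P$ by $P[1,j]\circ P'[j+1,|P'|]$, where $P'[j]=y$---yields a strictly shorter solution, because beyond $y$ the tail of $S$ is shared with nobody. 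No chain of augmentations is needed; one swap already contradicts minimality. With your delay-loop counting replaced by this truncate-and-swap argument (or any other proof of \cref{lem:walkmtseShortSol}), the rest of your proposal, including the running-time bookkeeping and the treatment of \wsmtseAcr{} via the $\alpha$-expanded graph, is correct and matches the paper.
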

Our proof of~\cref{thm!wmtsedirxp} follows the same strategy as our proof of~\cref{thm!dagsxp}. That is, we try to guess the \shared arcs, make them infinite capacity in some way, and then solve the problem with zero \shared arcs via a network flow formulation in the time-expanded graph. 
The crucial difference is that here we do not have at first an upper bound on the length of the walks in the solution.

\begin{theorem}%
  \label{thm!wmtsedirk0ptime}
 If $k=0$, then \wmtseAcr{} on directed $n$-vertex $m$-arc graphs is solvable in~$\Oh(n\cdot m\cdot (p\cdot n)^2)$ time.
\end{theorem}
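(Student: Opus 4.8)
The plan is to mirror the strategy behind \cref{thm!dagsptime}: encode collision-free routing as an ordinary flow in a time-expanded graph and solve the instance by a single maximum-flow computation. The one ingredient that \cref{thm!dagsptime} could take for granted---that every relevant $s$-$t$~\route has length at most $|V|$---fails here, since an $s$-$t$~walk may revisit vertices and thus be arbitrarily long. So the heart of the proof is to establish an a~priori bound on the length of the walks in some solution, after which the time-expanded graph becomes finite and \cref{lem:flows} applies.

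Concretely, I would first observe that \cref{lem:flows} holds for an arbitrary time horizon $\tau$ (its proof never uses $\tau=|V|$): for every $\tau\in\N$, the graph $G$ admits $p$ $s$-$t$~walks of length at most $\tau$ with no \shared arc if and only if the $\tau$-time-expanded graph $H$, augmented with the $p$ extra waiting arcs $(t^{i-1},t^i)$ at the copies of~$t$, admits an $s^0$-$t^\tau$~flow of value at least~$p$. Thus it suffices to prove the following length-bounding lemma: \emph{if $(G,s,t,p,0)$ is a yes-instance of \wmtseAcr{}, then it admits a solution in which every walk has length at most $p\cdot|V|$.} Given this, I construct the $(p\cdot n)$-time-expanded graph $H$ with the extra waiting arcs at~$t$, compute a maximum $s^0$-$t^{\tau}$~flow, and output \yes if and only if its value is at least~$p$.

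The main obstacle is the length-bounding lemma. My approach is to take a solution minimizing the makespan $\max_i\ell_i$ (the length of the longest walk) and to exploit that the $p$ objects are interchangeable, so that a solution is determined by its sequence of \emph{configurations} $d_0,d_1,\dots,d_M$, where $d_j$ records, for each vertex, how many objects sit there at time step~$j$ (objects that reached~$t$ being absorbed). If two configurations coincide, say $d_j=d_{j'}$ with $j<j'$, then the object movements performed after time~$j'$ can be replayed already from time~$j$; this splices out the interval $(j,j']$ and yields a strictly shorter solution, contradicting minimality, so in a makespan-minimal solution no configuration repeats. The key step is to upgrade this into the bound $M\le p\cdot n$: a walk of length exceeding $p\cdot n$ must, by the pigeonhole principle, return to some vertex more than $p$ times and hence contains $p$ successive closed sub-walks (loops) around that vertex; removing one loop shortens the walk by shifting its suffix earlier in time, and the difficulty is to show that---because there are only $p-1$ other walks to collide with, but $p$ candidate loops of pairwise distinct shift-lengths---at least one loop can be removed without creating a \shared arc. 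I expect this collision-avoidance argument to be the crux; an alternative route I would pursue in parallel is to phrase the problem as a flow over time with unit transit times and unit capacities and invoke the existence of earliest-arrival (temporally repeated) flows to obtain the horizon bound directly.

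Finally I would assemble the pieces and bound the running time. The $(p\cdot n)$-time-expanded graph $H$ has $\Oh(p\cdot n\cdot n)=\Oh(pn^2)$ vertices and $\Oh(p\cdot n\cdot m)$ arcs and can be built in time linear in its size. Computing a maximum $s^0$-$t^\tau$~flow with Orlin's algorithm takes $\Oh(|V(H)|\cdot|A(H)|)=\Oh(pn^2\cdot pnm)=\Oh(n\cdot m\cdot(p\cdot n)^2)$ time, matching the claimed bound; comparing the flow value against~$p$ decides the instance by the generalized \cref{lem:flows} together with the length-bounding lemma.
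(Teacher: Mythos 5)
Your scaffolding coincides with the paper's proof: \cref{lem:flows} indeed holds for every horizon~$\tau$ (the paper itself applies it with $\tau=\alpha$ and $\tau=p\cdot d_t$), and the algorithm---build the $\tau$-time-expanded graph with the $p$ waiting arcs at~$t$, run one maximum-flow computation, compare the value with~$p$---is exactly what the paper does, with the same running-time accounting (the paper takes $\tau=p\cdot d_t$, where $d_t=\max_{v\colon \dist_G(v,t)<\infty}\dist_G(v,t)\leq n$). The genuine gap is precisely the step you yourself flag as the crux: the length-bounding lemma is the entire content of this theorem beyond the machinery of \cref{thm!dagsptime}, and neither of your sketches proves it. The configuration-splicing argument only shows that a makespan-minimal solution never repeats a configuration, which bounds the horizon by the number of configurations, roughly $(p+1)^n$, not $p\cdot n$. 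The loop-removal count is unsound: successive loops at a vertex need not have pairwise distinct lengths (cutting from the first visit to the $j$th visit would repair distinctness via cumulative shifts), but, more fundamentally, the tally ``$p$ candidate shifts versus $p-1$ other walks'' proves nothing, because collisions are counted per arc--time incidence rather than per walk: a single other walk may traverse the same arc at many different time steps and thereby block many distinct shift amounts, so there is no injection from blocked shifts into other walks. The earliest-arrival detour is not off-the-shelf either: in \wmtseAcr{} all $p$ objects leave~$s$ simultaneously at time~$0$ and may never wait at intermediate vertices, and even granting the existence of an earliest-arrival flow in this nonstandard model you would still owe an argument that its value reaches~$p$ within horizon $p\cdot n$.

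The paper closes exactly this gap with \cref{lem:walkmtseShortSol}, by an exchange argument with a different minimality measure, which is worth internalizing: pick a solution~$\calP$ minimizing the \emph{sum} of the walk lengths (not the makespan). If the longest walk~$P^*$ has length exceeding $p\cdot d_t$, then by pigeonhole over the $p$ walks one of the $p$ length intervals $((i-1)\cdot d_t, i\cdot d_t]$ contains the length of no walk of~$\calP$. Truncate~$P^*$ after $(i-1)\cdot d_t$ arcs, at a vertex~$v$, and append a shortest $v$-$t$ path~$S$; the resulting walk~$P'$ has length inside the empty interval. If swapping $P^*$ for $P'$ yields a solution, the total length strictly drops. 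Otherwise all new collisions lie on~$S$; choosing the \shared arc~$(x,y)$ with $\dist_S(y,t)$ minimum and splicing the colliding walk~$P$ onto $P'$'s tail at~$y$ gives~$P''$ with $|P''|=|P'|$ (both $P$ and $P'$ reach~$y$ at the same time step), and the minimality of $\dist_S(y,t)$ keeps this tail collision-free. Since $P$ still needs at least $\dist_G(y,t)=\dist_S(y,t)$ further arcs, $|P|\geq|P''|$; and since $|P|$ avoids the empty interval while $|P''|$ lies in it, in fact $|P|>|P''|$, so again the total length strictly drops---a contradiction either way. Note how the empty interval does double duty: it locates the truncation point \emph{and} supplies the strict decrease that your shift-counting could not. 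With that lemma in hand, the rest of your proposal (generalized \cref{lem:flows}, horizon $p\cdot n$, one max-flow computation) is correct and yields the stated $\Oh(n\cdot m\cdot(p\cdot n)^2)$ bound.
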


\begin{lemma}
  \label{lem:walkmtseShortSol}
  Every yes-instance $(G,s,t,p,k)$ of \wmtseAcr{} on directed graphs admits a solution in which the longest walk is of length at most~$p\cdot d_t$, where~$d_t=\max_{v\in V\colon\dist_G(v,t)<\infty}\dist_G(v,t)$.
\end{lemma}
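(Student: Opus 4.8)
The plan is to prove the bound first for $k=0$ and then lift it to arbitrary~$k$. For $k=0$ it suffices, by~\cref{lem:flows}, to control the time horizon: I want to show that whenever $p$ pairwise non-\sharing $s$-$t$~walks exist at all, there already exist $p$ such walks each reaching~$t$ within $p\cdot d_t$ steps. Once this $k=0$ makespan bound is in hand, the general case follows from~\cref{lem:reductiontozero}: a yes-instance of~\wmtseAcr{} corresponds to a non-\sharing solution on $G(K,p)$ for some $K\subseteq A$ with $|K|\le k$, the parallel copies introduced in $G(K,p)$ do not change any distance (so $d_t(G(K,p))=d_t$), and the correspondence of~\cref{lem:reductiontozero} is length-preserving. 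Thus a solution on $G(K,p)$ with all walks of length at most $p\cdot d_t$ maps back to a solution on~$G$ with the same length bound, and the lemma reduces entirely to the $k=0$ statement.

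For the $k=0$ statement I would argue by induction on~$p$, relying on two observations. First, every vertex occupied by a walk in a solution lies on an $s$-$t$~walk and hence has distance at most~$d_t$ to~$t$. Second, in the time-expanded graph of~\cref{lem:flows} a walk that has reached~$t$ may rest there using the $p$ parallel $t$-loops, occupying no ordinary arc; so a walk that arrives early never blocks the ordinary network afterwards. The base case $p=1$ is a shortest $s$-$t$ path, of length $\dist(s,t)\le d_t$. For the inductive step I take a solution $W_1,\dots,W_p$, single out a latest-arriving walk~$W_p$, reshape the remaining $p-1$ walks by the inductive hypothesis so that all of them arrive and settle at~$t$ by time $(p-1)\cdot d_t$, and then note that after time $(p-1)\cdot d_t$ the entire ordinary arc set is free. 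Since $W_p$ sits at distance at most~$d_t$ from~$t$ at every moment, it only has to be led to~$t$ along a shortest path inside the now-empty window $((p-1)d_t,\,p\cdot d_t]$, giving arrival time at most $p\cdot d_t$ and, because all other walks rest at~$t$ throughout that window, introducing no \shared arc.

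The step I expect to be the main obstacle is the re-insertion of~$W_p$ after the other walks have been reshaped. In a directed graph a walk cannot wait "for free", so $W_p$ cannot simply idle near~$s$ until time $(p-1)d_t$ and then dash; any idling consumes ordinary arcs precisely during $[0,(p-1)d_t]$, which is exactly the interval now occupied by the reshaped $W_1,\dots,W_{p-1}$, and the original prefix of~$W_p$ was conflict-free only with the \emph{original} (not the reshaped) walks. I would resolve this with a residual/augmentation argument in the time-expanded graph: reshaping $W_1,\dots,W_{p-1}$ into $p-1$ arc-disjoint short routes does not lower the maximum $s^0$-to-(rested-)$t$ flow value, which remains at least~$p$, so the residual network still admits one further arc-disjoint route for~$W_p$. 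The point requiring genuine care is that this augmenting route can be chosen to reach~$t$ by time $p\cdot d_t$; this should follow because all congestion produced by $W_1,\dots,W_{p-1}$ vanishes after time $(p-1)d_t$, after which any configuration is within distance~$d_t$ of~$t$, so the augmenting route needs at most $d_t$ additional steps in the freed window. Making this makespan bound on the augmenting route fully rigorous -- rather than merely asserting the existence of one more route -- is the crux of the proof, and is the only place where the "no free waiting" phenomenon of directed graphs must be confronted head-on.
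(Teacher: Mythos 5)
Your reduction to the case $k=0$ via \cref{lem:reductiontozero} is sound: that correspondence is indeed length-preserving and the parallel copies in $G(K,p)$ do not change $d_t$. The genuine gap is exactly at the point you flag as the crux, and it is worse than a missing technical detail -- the argument is circular there. Your claim that the maximum $s^0$-$t^\tau$ flow in the $\tau$-time-expanded graph with $\tau=p\cdot d_t$ ``remains at least~$p$'' is, by \cref{lem:flows}, \emph{equivalent} to the $k=0$ statement you are trying to prove. The original yes-instance only certifies a flow of value~$p$ in the expansion whose horizon equals the length of the longest original walk, which may be far larger than $p\cdot d_t$; truncating the horizon can a priori decrease the maximum flow, and the assertion that it does not drop below~$p$ at horizon $p\cdot d_t$ \emph{is} the lemma. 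Reshaping $W_1,\dots,W_{p-1}$ by induction gives you a flow of value $p-1$ in the truncated graph, but nothing you have established forces an augmenting path to exist there at all, let alone one arriving by time $p\cdot d_t$. The ``freed window'' intuition does not repair this: the $p$th route must still leave $s^0$ at time zero and traverse ordinary arcs throughout the congested interval $[0,(p-1)d_t]$, precisely where the reshaped walks live, and its original conflict-freeness was only with respect to the original, not the reshaped, walks.

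The paper closes this with a direct exchange argument that avoids both the induction and the flow machinery (and handles arbitrary $k$ without the $G(K,p)$ detour). Take a solution minimizing the \emph{total} length of the walks, and suppose the longest walk $P^*$ exceeds $p\cdot d_t$. By pigeonhole some window $((i-1)d_t,\, i\cdot d_t]$ contains no walk length. Truncate $P^*$ after $(i-1)\cdot d_t$ steps and append a shortest path $S$ to $t$, giving a strictly shorter walk $P'$ whose length lands in the empty window. If $P'$ causes no new sharing, total length drops, a contradiction. Otherwise all new shared arcs lie on $S$; taking the shared arc $(x,y)$ on $S$ closest to $t$ and a walk $P$ sharing it, the hybrid $P[1,j]\circ P'[j+1,|P'|]$ is conflict-free on its suffix, agrees with $P$ on its prefix, and has length in the empty window, hence is strictly shorter than $P$ (whose length, being $>(i-1)d_t$ but outside the window, exceeds $i\cdot d_t$) -- again contradicting minimality. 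This suffix swap is exactly the device you were missing: it resolves the congested-prefix problem by reusing an \emph{existing} walk's prefix, whose conflicts are already accounted for, rather than trying to thread a fresh route through the occupied window.
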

\looseness=-1 Observe that~$d_t$ is well-defined on every yes-instance of \wmtseAcr{}.
Moreover, \iflong{}it holds that~\fi$d_t\leq |V(G)|$.
\iflong{}In the subsequent proof, we use the following notation:\else{}Below we use the following notation.\fi{}
For two walks~$P_1=(v_1,\ldots,v_\ell)$ and $P_2=(w_1,\ldots,w_{\ell'})$ with $v_\ell=w_1$, denote by $P_1\circ P_2$ the walk $(v_1,\ldots,v_\ell,w_2,\ldots,w_{\ell'})$ obtained by the concatenation of the two walks.

\begin{proof}[\iflong{}Proof of~\cref{lem:walkmtseShortSol}\else{}\cref{lem:walkmtseShortSol}\fi{}]
 Let $\calP$ be a solution to~$(G,s,t,p,k)$ with $|\calP|=p$ where the sum of the lengths of the walks in~$\calP$ is minimum among all solutions to~$(G,s,t,p,k)$.
 Suppose towards a contradiction that the longest walk~$P^*\in\calP$ is of length $|P^*|>p\cdot d_t$.
 Then, by the pigeon hole principle, there is an~$i\in[p]$ such that there is no walk in~$\calP$ of length~$\ell$ with~$(i-1)\cdot d_t<\ell\leq i\cdot d_t$.%

\looseness=-1 Let $v=P^*[(i-1)\cdot d_t+1]$, that is, $v$ is the $((i - 1) \cdot d_t+1)$th vertex on~$P^*$, and let $S$ be a shortest $v$-$t$~path.
 Observe that the length of $S$ is at most~$d_t$.
 Consider the walk~$P':=P^*[1,(i-1)\cdot d_t+1]\circ S$, that is, we concatenate the length-$((i - 1) \cdot d_t)$ initial subpath of~$P^*$ with $S$ to obtain~$P'$.
 Observe that~$(i-1)\cdot d_t<|P'|\leq i\cdot d_t$.
 If $\calP\setminus P^*\cup P'$ forms a solution to~$(G,s,t,p,k)$, then, since $|P'|<|P^*|$, $\calP\setminus P^*\cup P'$ is a solution of smaller sum of the lengths of the walks, contradicting the choice of~$\calP$.
 Otherwise, $P'$ introduce additional shared arcs and let $A'\subseteq A(G)$ denote the corresponding set.
 Observe that $A'$ is a subset of the arcs of~$S$.
 Let $a=(x,y)\in A'$ be the \shared arc such that $\dist_S(y,t)$ is minimum among all shared arcs in~$A'$, and let $P'[j]=y$.
 Let $P\in\calP$ be a walk \sharing the arc with $P'$.
 Then $P'':=P[1,j]\circ P'[j+1,|P'|]$ is a walk of shorter length than~$P$.%
 Moreover, $\calP\setminus P\cup P''$ is a solution to~$(G,s,t,p,k)$.
 As $|P''|<|P|$, $\calP\setminus P\cup P''$ is a solution of smaller sum of the lengths of the walks, contradicting the choice of~$\calP$.
 As either case yields a contradiction, it follows that $|P^*|\leq p\cdot d_t$.
 \ifshort{}\qed\fi{}
\end{proof}
\iflong{}
The subsequent proof of \cref{thm!wmtsedirk0ptime} relies on time-expanded graphs.
Due to \cref{lem:walkmtseShortSol}, we know that the time-horizon is bounded polynomially in the input size.
\else{}
\looseness=-1 The proof of \cref{thm!wmtsedirk0ptime} relies on time-expanded graphs.
Due to \cref{lem:walkmtseShortSol}, the time-horizon is bounded polynomially in the input size.
We defer the proof to \cref{proof:thm!wmtsedirk0ptime}.%
\fi{}%
\appendixproof{thm!wmtsedirk0ptime}
{
  \begin{proof}[\iflong{}Proof of~\cref{thm!wmtsedirk0ptime}\else{}\cref{thm!wmtsedirk0ptime}\fi{}]
  Let $(G,s,t,p,0)$ be an instance of~\wmtseAcr{} where~$G=(V,A)$ is an directed graph.
  We first compute $d_t$ in linear time.
  Let $\tau:=p\cdot d_t$.
  Next, we compute the $\tau$-time-expanded (directed) graph~$H=(V',A')$ of~$G$ with $p$ additional arcs $(t^{i-1},t^i)$ for each $i\in [\tau]$. 
  We compute in $\Oh(\tau^2\cdot (|V|\cdot|A|)\subseteq \Oh(p^2\cdot (|V|^3\cdot|A|)$~time the value of a maximum $s^0$-$t^\tau$~flow in $H$.
  Due to~\cref{lem:walkmtseShortSol} together with~\cref{lem:flows}, the theorem follows.
  \ifshort{}\qed\fi{}
  \end{proof}
}%
Restricting to $\alpha$-time-expanded graphs yields the following.

\begin{corollary}
 \label{cor:wsmtsedirpoly}
 If $k=0$, \wsmtseAcr{} on directed $n$-vertex $m$-arc graphs is solvable in $\Oh(n\cdot m\cdot \alpha^2)$ time.
\end{corollary}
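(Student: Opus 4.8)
The plan is to mirror the proofs of~\cref{thm!dagsptime} and~\cref{thm!wmtsedirk0ptime}, reducing the problem to a single maximum-flow computation in a time-expanded graph, but using a time horizon of exactly~$\alpha$ rather than~$|V|$ or~$p\cdot d_t$. The crucial observation is that for \wsmtseAcr{} the length restriction already caps the relevant time horizon, so that---unlike in the unrestricted walk case---no separate argument such as~\cref{lem:walkmtseShortSol} is needed to bound the number of time steps; the bound is simply~$\alpha$ by definition.

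Concretely, given an instance $(G=(V,A),s,t,p,\alpha)$ of \wsmtseAcr{} with $k=0$, I would construct the $\alpha$-time-expanded graph~$H$ of~$G$ together with the additional arcs $(t^{i-1},t^i)$ of capacity~$p$ for each $i\in[\alpha]$, exactly as in~\cref{lem:flows} but with $\tau:=\alpha$. A direct inspection of the proof of~\cref{lem:flows} confirms that it nowhere relies on the specific value $\tau=|V|$: in the forward direction each walk of length at most~$\tau$ is extended to an $s^0$-$t^\tau$~path by routing along the $t$-to-$t$ arcs, and in the backward direction an $s$-$t$~walk is read off from each arc-disjoint $s^0$-$t^\tau$~path; both arguments go through verbatim for any~$\tau$ (this horizon-agnostic reading is already what underlies~\cref{cor:smtsedags}). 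Applying~\cref{lem:flows} with $\tau=\alpha$ therefore yields that $G$ admits $p$ $s$-$t$~walks each of length at most~$\alpha$ and sharing no arc if and only if~$H$ admits an $s^0$-$t^\alpha$~flow of value at least~$p$.

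It then remains only to account for the running time. The graph~$H$ has $(\alpha+1)\cdot n=\Oh(\alpha\cdot n)$ vertices and $\alpha\cdot m$ expansion arcs plus~$\alpha$ further capacity-$p$ arcs between the copies of~$t$, hence $\Oh(\alpha\cdot m)$ arcs in total, and it can be built in $\Oh(\alpha\cdot(n+m))$~time. Computing the value of a maximum $s^0$-$t^\alpha$~flow in an $N$-vertex $M$-arc graph takes $\Oh(N\cdot M)$~time~\cite{Orlin13}, which here is $\Oh(\alpha n\cdot\alpha m)=\Oh(n\cdot m\cdot\alpha^2)$; comparing the resulting flow value against~$p$ decides the instance and gives the claimed bound. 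I do not anticipate a genuine obstacle: the single point requiring care is to verify that substituting $\tau=\alpha$ into~\cref{lem:flows} is legitimate, i.e.\ that an $s^0$-$t^\alpha$~flow of value~$p$ corresponds precisely to $p$ arc-disjoint (in~$H$) walks of length \emph{at most}~$\alpha$ in~$G$, where the $t$-to-$t$ arcs let walks that reach~$t$ early "wait" until time step~$\alpha$.
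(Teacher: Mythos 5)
Your proposal is correct and matches the paper's (implicit) argument exactly: the paper derives this corollary by applying \cref{lem:flows} with time horizon $\tau=\alpha$ instead of $\tau=|V|$ (just as in \cref{cor:smtsedags}), noting that the length bound makes \cref{lem:walkmtseShortSol} unnecessary, and your running-time accounting via Orlin's $\Oh(N\cdot M)$ max-flow algorithm on the $\Oh(\alpha n)$-vertex, $\Oh(\alpha m)$-arc expanded graph is the intended one. Your explicit check that the proof of \cref{lem:flows} is horizon-agnostic, and your use of a single capacity-$p$ arc $(t^{i-1},t^i)$ in place of the paper's $p$ parallel arcs, are harmless (indeed slightly cleaner) variations.
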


\begin{proof}[\iflong{}Proof of~\cref{thm!wmtsedirxp}\else{}\cref{thm!wmtsedirxp}\fi{}]
  Let $(G=(V,E),s,t,p,k)$ be an instance of~\wmtseAcr{} with~$G$ being a directed graph.
  For each $k$-sized subset~$K\subseteq A$ of arcs in~$G$, we decide the instance~$(G(K,p),s,t,p,0)$.
  The statement for~\wmtseAcr{} then follows from~\cref{lem:reductiontozero} and \cref{thm!wmtsedirk0ptime}.
  The running time of the algorithm is in~$\Oh(|A|^k\cdot p^2\cdot (|V|^3\cdot|A|))$.
  The statement for~\wsmtseAcr{} then follows from~\cref{lem:reductiontozero} and \cref{cor:wsmtsedirpoly}.
  \ifshort{}\qed\fi{}
\end{proof}
\section{Conclusion and Outlook}
\label{sec:concl}

Some of our results can be seen as a parameterized complexity study of \mtseAcr focusing on the number~$k$ of \shared edges. It is interesting to study the problem with respect to other parameters.
Herein, the first natural parameterization is the number of routes.
Recall that the \textsc{Minimum Shared Edges} problem is fixed-parameter tractable with respect to the number of path~\cite{FluschnikKNS15}.
A second parameterization we consider as interesting is the combined parameter maximum degree plus~$k$.
In our \NP-completeness results for \pmtseAcr and \tmtseAcr it seemed difficult to achieve constant $k$ and maximum degree at the same time.

Another research direction is to further investigate on which graph classes \pmtseAcr and \tmtseAcr become tractable.
We proved that that both problems remain \NP-complete even on planar graphs.
Do both \pmtseAcr and \tmtseAcr remain \NP-complete on graphs of bounded treewidth?
Recall that the Minimum Shared Edges problem is tractable on this graph class~\cite{YeLLZ13,AokiHHIKZ14}.

\looseness=-1 Finally, we proved that on undirected graphs, \wmtseAcr is solvable in polynomial time while \wsmtseAcr is \NP-complete. 
However, we left open whether \wsmtseAcr on undirected graphs is \NP-complete or polynomial-time solvable when~$k$ is constant.

\bibliographystyle{plain}
\bibliography{mtse-arxiv}

\newcommand{\noopsort}[1]{}
\begin{thebibliography}{10}

\bibitem{AokiHHIKZ14}
Yusuke Aoki, Bjarni~V. Halld{\'{o}}rsson, Magn{\'{u}}s~M. Halld{\'{o}}rsson,
  Takehiro Ito, Christian Konrad, and Xiao Zhou.
\newblock The minimum vulnerability problem on graphs.
\newblock In {\em Proc.~8th International Conference on Combinatorial
  Optimization and Applications ({COCOA} '14)}, volume 8881 of {\em LNCS},
  pages 299--313. Springer, 2014.

\bibitem{AssadiENYZ14}
Sepehr Assadi, Ehsan Emamjomeh{-}Zadeh, Ashkan Norouzi{-}Fard, Sadra Yazdanbod,
  and Hamid Zarrabi{-}Zadeh.
\newblock The minimum vulnerability problem.
\newblock {\em Algorithmica}, 70(4):718--731, 2014.

\bibitem{Berman96}
Kenneth~A. Berman.
\newblock Vulnerability of scheduled networks and a generalization of
  {M}enger's theorem.
\newblock {\em Networks}, 28(3):125--134, 1996.

\bibitem{DowneyF13}
Rodney~G. Downey and Michael~R. Fellows.
\newblock {\em Fundamentals of Parameterized Complexity}.
\newblock Texts in Computer Science. Springer, 2013.

\bibitem{Flu15}
Till Fluschnik.
\newblock The parameterized complexity of finding paths with shared edges.
\newblock Master thesis, In\-sti\-tut f\"ur Softwaretechnik und Theoretische
  Informatik, TU~Berlin, 2015.

\bibitem{FluschnikKNS15}
Till Fluschnik, Stefan Kratsch, Rolf Niedermeier, and Manuel Sorge.
\newblock The parameterized complexity of the minimum shared edges problem.
\newblock In {\em Proceedings of the 35th {IARCS} Annual Conference on
  Foundation of Software Technology and Theoretical Computer Science
  ({FSTTCS}'15)}, volume~45 of {\em LIPIcs}, pages 448--462. Schloss Dagstuhl -
  Leibniz-Zentrum fuer Informatik, 2015.

\bibitem{FF62}
Lester~Randolph Ford and D.~R. Fulkerson.
\newblock {\em Flows in {{Networks}}}.
\newblock {Princeton University Press}, 1962.

\bibitem{GareyJT76}
M.~R. Garey, David~S. Johnson, and Robert~Endre Tarjan.
\newblock The planar {H}amiltonian circuit problem is {NP}-complete.
\newblock {\em {SIAM} Journal on Computing}, 5(4):704--714, 1976.

\bibitem{Karp72}
Richard~M. Karp.
\newblock Reducibility among combinatorial problems.
\newblock In {\em Proceedings of a symposium on the Complexity of Computer
  Computations}, The {IBM} Research Symposia Series, pages 85--103. Plenum
  Press, New York, 1972.

\bibitem{KempeKK02}
David Kempe, Jon~M. Kleinberg, and Amit Kumar.
\newblock Connectivity and inference problems for temporal networks.
\newblock {\em Journal of Computer and System Sciences}, 64(4):820--842, 2002.

\bibitem{KT06}
Jon~M. Kleinberg and {\'{E}}va Tardos.
\newblock {\em Algorithm Design}.
\newblock Addison-Wesley, 2006.

\bibitem{KohlerMS09}
Ekkehard K{\"{o}}hler, Rolf~H. M{\"{o}}hring, and Martin Skutella.
\newblock Traffic networks and flows over time.
\newblock In {\em Algorithmics of Large and Complex Networks - Design,
  Analysis, and Simulation}, volume 5515 of {\em LNCS}, pages 166--196.
  Springer, 2009.

\bibitem{MertziosMCS13}
George~B. Mertzios, Othon Michail, Ioannis Chatzigiannakis, and Paul~G.
  Spirakis.
\newblock Temporal {{Network Optimization Subject}} to {{Connectivity
  Constraints}}.
\newblock In {\em Automata, {{Languages}}, and {{Programming}}}, pages
  657--668. {Springer}, 2013.

\bibitem{Michail16}
Othon Michail.
\newblock An {{Introduction}} to {{Temporal Graphs}}: {{An Algorithmic
  Perspective}}.
\newblock {\em Internet Mathematics}, 12(4):239--280, 2016.

\bibitem{OmranSZ13}
Masoud~T. Omran, J{\"{o}}rg{-}R{\"{u}}diger Sack, and Hamid Zarrabi{-}Zadeh.
\newblock Finding paths with minimum shared edges.
\newblock {\em Journal of Combinatorial Optimization}, 26(4):709--722, 2013.

\bibitem{Orlin13}
James~B. Orlin.
\newblock Max flows in {$O(nm)$} time, or better.
\newblock In {\em Proceedings of the 45th ACM Symposium on Theory of Computing
  ({STOC}'13)}, pages 765--774. {ACM}, 2013.

\bibitem{Plesnik79}
J{\'{a}}n Plesn{\'{\i}}k.
\newblock The {NP}-completeness of the {H}amiltonian cycle problem in planar
  digraphs with degree bound two.
\newblock {\em Information Processing Letters}, 8(4):199--201, 1979.

\bibitem{RT15}
Hassan Rashidi and Edward Tsang.
\newblock {\em Vehicle {{Scheduling}} in {{Port Automation}}: {{Advanced
  Algorithms}} for {{Minimum Cost Flow Problems}}, {{Second Edition}}}.
\newblock {CRC Press}, 2015.

\bibitem{Skutella2009}
Martin Skutella.
\newblock An introduction to network flows over time.
\newblock In {\em Research Trends in Combinatorial Optimization: Bonn 2008},
  pages 451--482. Springer, 2009.

\bibitem{YeLLZ13}
Zhi-Qian Ye, Yi-Ming Li, Hui-Qiang Lu, and Xiao Zhou.
\newblock Finding paths with minimum shared edges in graphs with bounded
  treewidths.
\newblock In {\em Proc. Frontiers of Computer Science (FCS '13)}, pages 40--46,
  2013.

\end{thebibliography}

\end{document}

